\newcommand{\pagenumbaa}{1}
\theoremstyle{plain}
\newtheorem{mythm}{Theorem}[section]
\newtheorem{myprop}[mythm]{Proposition}
\newtheorem{mycor}[mythm]{Corollary}
\newtheorem{mylem}[mythm]{Lemma}
\newtheorem{myclaim}[mythm]{Claim}
\theoremstyle{definition}
\newtheorem{mydef}[mythm]{Definition}
\newtheorem{myex}[mythm]{Example}
\newtheorem{myremark}[mythm]{Remark}
\newtheorem{myass}[mythm]{Assumption}
\newcommand{\ket}[1]{\left| #1 \right \rangle}
\newcommand{\bra}[1]{\left \langle #1 \right|}
\newcommand{\braket}[2]{\left \langle  \left. #1 \right| #2 \right\rangle}
\newcommand{\norm}[1]{\left\lVert#1\right\rVert}
\newcommand{\bracket}[1]{ \left \lbrace \text{#1}\right \rbrace }
\newcommand{\dens}{\mathfrak{D}(R)}
\def\Hb{\ensuremath{H_{\rm b}}}
\newcommand{\W}{\mathsf{W}} %for channel law
\newcommand{\Eu}{\mathsf{E}} %for universal encoder
\def\trnorm{\mathrm{tr}}
\def\opnorm{\mathrm{op}}
\def\poly{\mathrm{poly}}
\def\subexp{\mathrm{subexp}}
\newcommand{\Hp}{\mathrm{H}_{+}^{M}} 
\newcommand{\Hop}{\mathrm{H}^{M}} 
\newcommand{\introsection}[1]{\emph{#1}.---}
\newcommand{\Hh}[1]{H\!\left({#1}\right)} %Entropy
\newcommand{\I}[2]{I\!\left({#1},{#2}\right)} %mutual Information
\newcommand{\Prob}[1]{\,{\mathds P} \!\left[#1\right]} %
\newcommand{\E}[1]{\,{\mathds E}\!\left[#1\right]} %Expectation
\newcommand{\EQ}[1]{\,{\mathds E}^{Q}\!\left[#1\right]} %Expectation
\newcommand{\herm}{\ensuremath{^{\dagger}}}
\newcommand{\Tr}[1]{{\rm tr}\left[{#1}\right]} %Trace
\newcommand{\Trp}[2]{{\rm tr}_{{#1}}\!\left({#2}\right)} %fidelity
\DeclareMathOperator{\spec}{spec}
\newcommand{\st}{\textnormal{s.t.}}
\DeclareMathOperator{\id}{id}
\newcommand{\drv}{\ensuremath{\,\mathrm{d}}}
\newcommand{\inprod}[2]{\ensuremath{\left\langle{#1}\vphantom{\big|},\vphantom{\big|}{#2}\right\rangle}}
\newcommand{\transp}{\ensuremath{^{\scriptscriptstyle{\top}}}}
\newcommand{\WW}{\mathcal{W}}
\newcommand{\ii}{\mathrm{i}} %complex i
\newcommand{\e}{\mathrm{e}} % Euler number
\newcommand{\p}{\mathrm{p}} % polynomial function
\newcommand{\R}{\ensuremath{\mathbb{R}}}
\newcommand{\Rp}{\ensuremath{\R_{\geq 0}}}
\newcommand{\Rps}{\ensuremath{\R_{> 0}}}
\newcommand{\Lp}[1]{\mathrm{L}^{#1}}
\newcommand{\lp}[1]{\ell_{#1}}
\begin{document}

%%%%%%%%%%%%%%%%%%%%%%%%%%%%%%%%%%%%%%%%%%%%%%%%%%%%%%%%%%%%%%%%%%%%
% Place your title here
%%%%%%%%%%%%%%%%%%%%%%%%%%%%%%%%%%%%%%%%%%%%%%%%%%%%%%%%%%%%%%%%%%%%
\title{Efficient Approximation of Quantum Channel Capacities}

%%%%%%%%%%%%%%%%%%%%%%%%%%%%%%%%%%%%%%%%%%%%%%%%%%%%%%%%%%%%%%%%%%%%
% If using RevTex
%%%%%%%%%%%%%%%%%%%%%%%%%%%%%%%%%%%%%%%%%%%%%%%%%%%%%%%%%%%%%%%%%%%%
 
  \author{David Sutter}
 \email[]{$\bracket{suttedav,\,renner}$@phys.ethz.ch}
 \affiliation{Institute for Theoretical Physics, ETH Zurich, Switzerland}
 
  \author{Tobias Sutter}
 \email[]{$\bracket{sutter,\,mohajerin}$@control.ee.ethz.ch}
 \affiliation{Automatic Control Laboratory, ETH Zurich, Switzerland}
  
 \author{Peyman Mohajerin Esfahani}
 \email[]{$\bracket{sutter,\,mohajerin}$@control.ee.ethz.ch}
 \affiliation{Automatic Control Laboratory, ETH Zurich, Switzerland}

% \author{John Lygeros}
% \email[]{$\bracket{sutter,\,mohajerin,\,lygeros}$@control.ee.ethz.ch}
% \affiliation{Automatic Control Laboratory, ETH Zurich, Switzerland}

   \author{Renato Renner}
 \email[]{$\bracket{suttedav,\,renner}$@phys.ethz.ch}
 \affiliation{Institute for Theoretical Physics, ETH Zurich, Switzerland}
 %%%%%%%%%%%%%%%%%%%%%%%%%%%%%%%%%%%%%%

% \author{a}
% \email[]{a@.ethz.ch}
% \affiliation{Institute for ..., ETH Zurich, Switzerland}
% 
% \author{b}
% \email[]{b@.ethz.ch}
% \affiliation{Institute for ..., ETH Zurich, Switzerland}
% 
% 
% \author{c}
% \email[]{c@.ethz.ch}
% \affiliation{Institute for ..., ETH Zurich, Switzerland}

%%%%%%%%%%%%%%%%%%%%%%%%%%%%%%%%%%%%%%%%%%%%%%%%%%%%%%%%%%%%%%%%%
% Abstract
%%%%%%%%%%%%%%%%%%%%%%%%%%%%%%%%%%%%%%%%%%%%%%%%%%%%%%%%%%%%%%%%%

\begin{abstract}
We propose an iterative method for approximating the capacity of classical-quantum channels with a discrete input alphabet and a finite dimensional output, possibly under additional constraints on the input distribution. Based on duality of convex programming, we derive explicit upper and lower bounds for the capacity. To provide an $\varepsilon$-close estimate to the capacity, the presented algorithm requires $O\big(\tfrac{(N \vee M) M^3 \log(N)^{1/2}}{\varepsilon}\big)$, where $N$ denotes the input alphabet size and $M$ the output dimension. We then generalize the method for the task of approximating the capacity of classical-quantum channels with a bounded continuous input alphabet and a finite dimensional output. For channels with a finite dimensional quantum mechanical input and output, the idea of a universal encoder allows us to approximate the Holevo capacity using the same method. In particular, we show that the problem of approximating the Holevo capacity can be reduced to a multidimensional integration problem.
For families of quantum channels fulfilling a certain assumption we show that the complexity to derive an $\varepsilon$-close solution to the Holevo capacity is subexponential or even polynomial in the problem size.
We provide several examples to illustrate the performance of the approximation scheme in practice.
\end{abstract}

 \maketitle

%Only for RevTEx
 \setcounter{page}{\pagenumbaa}  
 \thispagestyle{plain}

\section{Introduction}
Consider a scenario where a sender wants to transmit information over a noisy channel to a receiver. Information theory says that there exists fundamental quantities called \emph{channel capacities} characterizing the maximal amount of information that can be transmitted on average, asymptotically reliably per channel use \cite{shannon48}. Depending on the channel and allowed auxiliary resources, there exists a variety of different capacities for different communication tasks. An excellent overview can be found in \cite{wilde_book,holevo_book}. For a lot of these tasks, their corresponding capacity can be recast as an optimization problem. Some of them seem to be intrinsically more difficult than others, however in general none of them is straightforward to compute efficiently.

In this article, we focus on two scenarios. First, we consider the task of sending information over a classical-quantum (cq) channel which maps each element of an input alphabet to a finite dimensional quantum state. We do not allow any additional resources such as entanglement shared between the sender and receiver nor feedback. The capacity for this task has been shown in \cite{holevo98,holevo_book,schumacher97} to be the maximization of a quantity called the \emph{Holevo information} over all possible input distributions. For the case of a finite input alphabet this problem is a finite dimensional convex optimization problem. 
 Based on duality of convex programming and smoothing techniques \cite{nesterov05}, we propose a method to efficiently compute tight upper and lower bounds for the capacity of a finite dimensional cq channel. More precisely, the proposed method has an overall computational complexity of finding an $\varepsilon$-solution given by $O(\tfrac{(N \vee M) M^3 \log(N)^{1/2}}{\varepsilon})$, where $N$ denotes the input alphabet size and $M$ is the output dimension. Our method can treat scenarios where there is an additional constraint on the input distribution of the channel.
As our approach is based on the dual problem, it is possible to extend it to cq channels with a continuous bounded input alphabet and a finite dimensional output.

The second scenario we consider in this article is to send classical information over a quantum channel having a finite dimensional input and output. Again we do not allow additional resources such as entanglement shared between the sender and receiver nor feedback. Compared to the setup of a cq channel, this task is much more delicate as one could make use of entangled input states at the encoding. Indeed it has been shown that the classical capacity of a quantum channel is still poorly understood \cite{hastings09} as only a \emph{regularized} expression is known that describes it \cite{holevo98,holevo_book,schumacher97}, which in general is computationally intractable. The best known generic lower bound for the classical capacity of a quantum channel that has a single letter expression is the \emph{Holevo capacity} which is given by a finite dimensional non-convex optimization problem that has been shown to be $\mathsf{NP}$-complete \cite{shor08}. Using the idea of a universal encoder, we show that this problem is equivalent to the capacity of a cq channel with a continuous bounded input alphabet. Thus, we can apply techniques derived for cq channels to compute close upper and lower bounds for the Holevo capacity that coincide when performing an infinite number of iterations. In each iteration step one has to approximate a multidimensional integral. We derive classes of channels for which the Holevo capacity can be approximated up to an arbitrary precision in subexponential or even polynomial time.

Unlike for classical channels where there exists a specific efficient method---the \emph{Blahut-Arimoto algorithm} \cite{blahut72,arimoto72}---to numerically compute the capacity with a known rate of convergence, something similar for cq channels does not exist up to date. In \cite{shor03}, Shor discusses a combinatorial approach to approximate the Holevo capacity, but he does not prove the convergence of his method. There are numerous different ad hoc approaches to efficiently approach the Holevo capacity, where however no convergence guarantees are given \cite{nagaoka98,nagaoka99,osawa01,hayashi05}.

\vspace{3mm}
\introsection{Notation}
The logarithm with basis 2 is denoted by $\log(\cdot)$ and the natural logarithm by $\ln(\cdot)$.
The space of all Hermitian operators in a finite dimensional Hilbert space $\mathcal{H}$ is denoted by $\Hop$, where $M$ is the dimension of $\mathcal{H}$.
The cone of positive semidefinite Hermitian operators is $\Hp$. For $\sigma \in \Hop$ we denote its set of eigenvalues by $\spec(\sigma)=\{\lambda_1(\sigma),\ldots,\lambda_M(\sigma) \}$.
 %On $\Hop$ we consider two norms, the trace norm $\norm{\cdot}_{1}$ and the operator norm $\norm{\cdot}_{\text{op}}$ (see \cite[p.~6]{holevo_book} for further details).
%The space of all Hermitian operators in a finite dimensional Hilbert space $\mathcal{H}$, equipped with the trace norm, becomes the complex Banach space $\mathfrak{J}(\mathcal{H})$. The same space equipped with the operator norm is the Banach space $\mathfrak{B}(\mathcal{H})$ {\cts mathfrak durch mathcal ersetzen}. Note that these two Banach spaces are in mutual duality \cite[p.~7]{holevo_book}. Let $\mathfrak{J}_{+}(\mathcal{H})$ and $\mathfrak{B}_{+}(\mathcal{H})$ denote its positive cones (which are as well in mutual duality).
We denote the set of density operators on a Hilbert space $\mathcal{H}$ by $\mathcal{D}(\mathcal{H}):= \{ \rho\in \Hp \ : \  \Tr{\rho}=1 \}$. We consider cq channels $\W: \mathcal{X} \to \mathcal{D}(\mathcal{H})$, $x \mapsto \rho_x$ having a finite input alphabet $\mathcal{X}=\{ 1,2,\hdots,N \}$ and a finite output dimension $\dim \mathcal{H}=M$. Each symbol $x\in \mathcal{X}$ at the input is mapped to a density operator $\rho_x$ at the output and therefore the channel can be represented by a set of density operators $\{\rho_x\}_{x \in \mathcal{X}}$. The input probability mass function is denoted by the vector $p\in \R^{N}$ where $p_i = \Prob{X=i}$. A possible input cost constraint can be written as $\E{s(X)} = p\transp s\leq S$, where $s\in \R^{N}$ denotes the cost vector and $S\in \Rp$ is the given total cost. We define the standard $n-$simplex as $\Delta_{n}:=\left\{  x\in\R^{n} : x\geq 0, \sum_{i=1}^{n} x_{i}=1\right\}$. For a probability mass function $p \in \Delta_{N}$ we denote the entropy by $H(p):=-\sum_{i=1}^N p_i \log p_i$. The binary entropy function is defined as $\Hb(x):=-x\log(x) - (1-x)\log(1-x)$ with $x \in [0,1]$. For a probability density $p$ supported at a measurable set $B\subset \R$ we denote the differential entropy by $h(p):=-\int_{B} p(x) \log p(x) \drv x$. The von Neumann entropy is defined by $H(\rho_x):=-\Tr{\rho_x \log \rho_x}$ where $\rho_{x} \in \mathcal{D}(\mathcal{H})$ is a density operator. Let $\Phi:\mathcal{B}(\mathcal{H}_A)\to \mathcal{B}(\mathcal{H}_B)$, where $\mathcal{B}(\mathcal{H})$ denotes the space of bounded linear operators in some Hilbert space $\mathcal{H}$ that are equipped with the trace norm, be a quantum channel that is described by a complete positive trace preserving (cptp) map.
We denote the canonical inner product by $\left \langle x,y \right \rangle := x \transp y$ where $x,y \in \R^n$. For two matrices $A,B \in \mathbb{C}^{m\times n}$, we denote the Frobenius inner product by $\left \langle A,B \right \rangle_F := \Tr{A \herm B}$ and the induced Frobenius norm by $\norm{A}_F:=\sqrt{\left \langle A,A \right \rangle_F}$. The trace norm is defined as $\norm{A}_{\trnorm} :=\mathrm{tr}[\sqrt{A^{\dagger}A}]$. The operator norm is denoted by $\norm{A}_{\opnorm}:=\{\sup_X \norm{AX}_F: \, \norm{X}_F = 1 \}$. For a cptp map $\Phi:\mathcal{B}(\mathcal{H}_A)\to \mathcal{B}(\mathcal{H}_B)$ its diamond norm is defined by $\norm{\Phi}_{\diamond}:=\norm{\Phi \otimes \id_{\mathcal{H}_A}}_{\trnorm}$, where $\norm{\cdot}_{\trnorm}$ denotes the trace norm for resources which is defined as $\norm{\Theta}_{\trnorm}:=\max_{\rho \in \mathcal{D}(\mathcal{H}_A)}  \norm{\Theta(\rho)}_{\trnorm}$.
We denote the maximum and minimum between $a$ and $b$ by $a \vee b$ respectively $a \wedge b$.
The symbol $\preccurlyeq$ denotes the semidefinite order on self-adjoint matrices. The identity matrix of appropriate dimension is denoted by $\mathbf{1}$.
An optimization problem $\min_{x\in S \subset\R^{n}}\{ f_{0}(x) \ : \ f_{j}(x)\leq 0, \  j=1,\hdots,m\}$ is called smooth if all $f_{k}(x)$ for $k=0,\hdots, m$ are differentiable. If there is a non-differentiable component $f_{k}(x)$, it is called non-smooth.

 \vspace{3mm}
\introsection{Structure} The remainder of this article is structured as follows. Section~\ref{sec:cq:channel} shows how to efficiently compute tight upper and lower bounds for the capacity of cq channels having a discrete input alphabet. In Section~\ref{sec:cq:contInput} we then show how to extend the methods introduced in Section~\ref{sec:cq:channel} to approximate the capacity of cq channels with a continuous input alphabet. Using the concept of a universal encoder, this allows us to approximate the Holevo capacity of finite dimensional quantum channels as shown in Section~\ref{sec:Holevo}. We conclude in Section~\ref{sec:conclusion} with a summary and possible subjects of further research. In the interest of readability, some of the technical proofs and details are given in the appendices.

%%%%%%%%%%%%%%%%%%%%%%%%%%%%%%%%%%%%%%%%%%%%%%%%%%%%%%%%%%%%%%%%%%
%%%%%%%%%%%%%%%%%%%%%%%%%%%%%%%%%%%%%%%%%%%%%%%%%%%%%%%%%%%%%%%%%%
\section{Capacity of a Discrete-Input Classical-Quantum Channel} \label{sec:cq:channel}
%%%%%%%%%%%%%%%%%%%%%%%%%%%%%%%%%%%%%%%%%%%%%%%%%%%%%%%%%%%%%%%%%%
In this section we show that concepts introduced in \cite{TobiasSutter14} for a purely classical setup can be generalized to compute the capacity of cq channels with a discrete input alphabet and a bounded output. We consider a discrete input alphabet $\mathcal{X}=\left \lbrace 1,\ldots,N \right \rbrace$ and a finite dimensional Hilbert space $\mathcal{H}$ with $\dim \mathcal{H}=:M$. 
The map $\W: \mathcal{X} \to \mathcal{D}(\mathcal{H})$, $x \mapsto \rho_x$, represents a cq channel. Let $s:\mathcal{X}\to \R_+ $ be some function, $p \in \Delta_N$  and consider the input constraint
\begin{equation}
\inprod{p}{s} \leq S, \label{eq:cqConstr}
\end{equation}
where $S$ is some non-negative constant. 
As shown by Holevo, Schumacher and Westmoreland \cite{holevo98,holevo_book,schumacher97}, the capacity of a cq channel $\W$ satisfying the input constraint \eqref{eq:cqConstr} is given by
\begin{align}\label{eq:cqCapacity}
 C_{\mathsf{cq},S}(\W)=\left\{ \begin{array}{ll}
	\underset{p}{\max} 		&\I{p}{\rho}:= \Hh{\sum_{i =1}^N p_i \rho_i} - \sum_{i=1}^N p_i \Hh{\rho_i} \\
			\st 					& \inprod{p}{s} \leq S \\
			                                & p \in\Delta_N.
	\end{array}\right.
\end{align}
To keep the notation simple we consider a single input constraint as the extension to multiple input constraints is straightforward. 

In the following, we reformulate \eqref{eq:cqCapacity} such that it exhibits a well structured dual formulation and show that strong duality holds. We then show how to smooth the objective function of the dual problem such that it can be solved efficiently using a fast gradient method. Doing so leads to an algorithm that iteratively computes lower and upper bounds to the capacity which converge with a given rate. A key concept in our analysis is that the following problem --- called \emph{entropy maximization} --- with $\lambda\in \Hop$ features an analytical solution
\begin{equation} \label{opt:jaynes}
 	\left\{ \begin{array}{lll}
			&\underset{\rho}{\max} 		&\Hh{\rho} + \Tr{\rho \lambda} \\
			&\st 			&  \rho \in \mathcal{D}(\mathcal{H}).
	\end{array} \right.
\end{equation}
\begin{mylem}[Entropy maximization \cite{jaynes57_2}] \label{lem:jaynes}
Let $\rho^{\star}=2^{-\mu \mathbf{1} +\lambda}$, where $\mu$ is chosen such that $\rho^{\star} \in \mathcal{D}(\mathcal{H})$. Then $\rho^{\star}$ uniquely solves \eqref{opt:jaynes}.
\end{mylem}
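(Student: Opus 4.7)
The plan is to follow the classical Gibbs/Jaynes strategy, using the non-negativity of quantum relative entropy (Klein's inequality) as the core tool, which avoids having to differentiate the matrix function $\rho \mapsto -\operatorname{tr}[\rho \log \rho]$ directly.

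First I would observe that the candidate $\rho^\star = 2^{-\mu \mathbf{1} + \lambda}$ is well-defined: the map $\mu \mapsto \operatorname{tr}[2^{-\mu \mathbf{1} + \lambda}] = 2^{-\mu}\operatorname{tr}[2^{\lambda}]$ is continuous and strictly decreasing from $+\infty$ to $0$, so there is a unique $\mu \in \R$ with $\operatorname{tr}[\rho^\star]=1$. Since $\lambda$ is Hermitian, $\rho^\star$ is Hermitian and strictly positive definite, so $\rho^\star \in \mathcal{D}(\mathcal{H})$. Note that by construction $\log \rho^\star = -\mu \mathbf{1} + \lambda$.

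Next, for an arbitrary $\rho \in \mathcal{D}(\mathcal{H})$ I would apply Klein's inequality,
\begin{equation*}
 0 \leq \D{\rho}{\rho^\star} = \operatorname{tr}[\rho \log \rho] - \operatorname{tr}[\rho \log \rho^\star] = -\Hh{\rho} + \mu - \operatorname{tr}[\rho \lambda],
\end{equation*}
which rearranges to $\Hh{\rho} + \operatorname{tr}[\rho \lambda] \leq \mu$. Substituting $\rho = \rho^\star$ and using $\log \rho^\star = -\mu \mathbf{1} + \lambda$ gives $\Hh{\rho^\star} + \operatorname{tr}[\rho^\star \lambda] = \mu$, so $\rho^\star$ attains the upper bound and is therefore a maximizer of \eqref{opt:jaynes}.

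For uniqueness, I would invoke the equality condition in Klein's inequality: $\D{\rho}{\rho^\star}=0$ if and only if $\rho = \rho^\star$ (which uses that $\rho^\star$ has full support). Equivalently, the objective in \eqref{opt:jaynes} is the sum of the strictly concave function $\Hh{\cdot}$ and a linear term, so it is strictly concave on the convex domain $\mathcal{D}(\mathcal{H})$, forcing any maximizer to be unique. The only mildly delicate point is the normalization step, i.e.\ showing existence and uniqueness of the multiplier $\mu$; this reduces to the monotonicity of $\mu \mapsto 2^{-\mu}\operatorname{tr}[2^\lambda]$ noted above, so no real obstacle arises.
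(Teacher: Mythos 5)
The paper does not give its own proof of Lemma~\ref{lem:jaynes}; it simply cites Jaynes and uses the result as a black box. Your argument is correct and is the standard one: well-definedness of the normalization constant $\mu$ via the monotonicity of $\mu \mapsto 2^{-\mu}\Tr{2^{\lambda}}$, optimality via Klein's inequality $\D{\rho}{\rho^{\star}} \geq 0$ applied to the full-rank Gibbs state $\rho^{\star}$, and uniqueness from the strict-equality condition of Klein's inequality (equivalently, strict concavity of $\Hh{\cdot}$ on $\mathcal{D}(\mathcal{H})$). There is no gap.
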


We next derive the dual problem of \eqref{eq:cqCapacity} and show how to solve it efficiently. We therefore reformulate \eqref{eq:cqCapacity} by introducing an additional decision variable $\sigma:=\sum_{i=1}^N p_i \rho_i$.
\begin{mylem} \label{lem:inputConstraint}
Let $\mathcal{F}:= \arg\max\limits_{p\in\Delta_{N}} \I{p}{\rho}$ and $S_{\max}:=\min \limits_{p \in \mathcal{F}} \inprod{p}{s}$. If $S \geq S_{\max}$, the optimization problem \eqref{eq:cqCapacity} has the same optimal value as
\begin{align}\label{eq:cqCapacityPrimalwithoutConst}
\mathsf{P}:\left\{ \begin{array}{ll}
	\underset{p,\sigma}{\max} 		& \Hh{\sigma} - \sum_{i=1}^N p_i \Hh{\rho_i} \\
			\st 					& \sigma=\sum_{i=1}^N p_i \rho_i\\
			                                & p \in\Delta_N, \, \sigma \in \mathcal{D}(\mathcal{H}).
	\end{array}\right.
\end{align}
If $S<S_{\max}$, the optimization problem \eqref{eq:cqCapacity} has the same optimal value as
 \begin{align}\label{eq:cqCapacityPrimal}
\mathsf{P}:\left\{ \begin{array}{ll}
	\underset{p,\sigma}{\max} 		& \Hh{\sigma} - \sum_{i=1}^N p_i \Hh{\rho_i} \\
			\st 					& \sigma=\sum_{i=1}^N p_i \rho_i\\
			                                & \inprod{p}{s} = S\\
			                                & p \in\Delta_N,  \sigma \in \mathcal{D}(\mathcal{H}).
	\end{array}\right.
\end{align}
\end{mylem}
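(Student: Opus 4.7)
The plan is to treat the two cases separately, with the introduction of the slack variable $\sigma$ being a trivial substitution that preserves the problem's value.

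The variable $\sigma := \sum_{i=1}^{N} p_i \rho_i$ is a convex combination of density operators, hence automatically in $\mathcal{D}(\mathcal{H})$; introducing it together with the equality constraint $\sigma = \sum_i p_i \rho_i$ does not change the feasible set or the objective of \eqref{eq:cqCapacity}, since one can substitute $\sigma$ out to recover the original Holevo information $I(p,\rho)$. So in both cases all that needs to be justified is (i) the dropping of the input-cost constraint (case $S \geq S_{\max}$), or (ii) the replacement of $\inprod{p}{s} \leq S$ by $\inprod{p}{s} = S$ (case $S < S_{\max}$).

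For the first case, one uses the definition of $S_{\max}$ directly. Since $\mathcal{F}$ is non-empty (the feasible set $\Delta_N$ is compact and $p \mapsto I(p,\rho)$ is continuous and concave, so it attains its maximum), there exists $p^{\star} \in \mathcal{F}$ with $\inprod{p^{\star}}{s} = S_{\max} \leq S$. Thus $p^{\star}$ is feasible for \eqref{eq:cqCapacity} and achieves the \emph{unconstrained} maximum, which is an upper bound on the constrained problem; hence the constraint $\inprod{p}{s}\leq S$ is inactive and may be dropped, giving \eqref{eq:cqCapacityPrimalwithoutConst}.

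For the second case, I would argue by contradiction using the concavity of $p \mapsto I(p,\rho)$. Suppose some optimal $p^{\star}$ of the inequality-constrained version satisfies $\inprod{p^{\star}}{s} < S$. Because $S < S_{\max} = \min_{p \in \mathcal{F}} \inprod{p}{s}$, $p^{\star}$ cannot lie in $\mathcal{F}$, so there exists $\tilde{p} \in \Delta_N$ with $I(\tilde{p},\rho) > I(p^{\star},\rho)$. Consider the convex combination $p_{\alpha} := (1-\alpha)p^{\star} + \alpha \tilde{p}$ for $\alpha \in (0,1)$. Then $\inprod{p_{\alpha}}{s}$ depends continuously on $\alpha$ and equals $\inprod{p^{\star}}{s} < S$ at $\alpha = 0$; so for sufficiently small $\alpha > 0$ we still have $\inprod{p_{\alpha}}{s} \leq S$ and $p_{\alpha} \in \Delta_N$, i.e.\ $p_{\alpha}$ is feasible. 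By concavity, $I(p_{\alpha},\rho) \geq (1-\alpha) I(p^{\star},\rho) + \alpha I(\tilde{p},\rho) > I(p^{\star},\rho)$, contradicting the optimality of $p^{\star}$. Hence every optimal $p^{\star}$ must satisfy $\inprod{p^{\star}}{s} = S$, so the inequality constraint can be replaced by equality without changing the optimal value, yielding \eqref{eq:cqCapacityPrimal}.

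The only subtle step is the second case; there one needs both the concavity of the Holevo information in $p$ (standard and provable from the joint concavity of the von Neumann entropy) and the continuity argument that lets us perturb towards a better $\tilde{p}$ while remaining feasible. The first case and the introduction of $\sigma$ are essentially definitional.
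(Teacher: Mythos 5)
Your proof is correct, but it takes a genuinely more direct route than the paper. The paper's proof first establishes two structural properties of the perturbation function $S \mapsto C_{\mathsf{cq}}(S)$ on $[0,S_{\max}]$: concavity (by mixing optimizers at two budgets $S^{(1)},S^{(2)}$ and using concavity of $p\mapsto\I{p}{\rho}$), and then strict monotonicity (first showing $C_{\mathsf{cq}}(S_{\max}-\varepsilon) < C_{\mathsf{cq}}(S_{\max})$ by contradiction with the definition of $S_{\max}$, then combining with concavity and non-decrease). It then concludes: if an optimizer had $\inprod{p^{\star}}{s}=\bar S < S$, one would have $C_{\mathsf{cq}}(\bar S)=C_{\mathsf{cq}}(S)$, contradicting strict monotonicity. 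You instead argue locally on the optimizer itself: if $p^{\star}$ were optimal with slack, it is not a global maximizer (since $S < S_{\max}$), so mixing it slightly toward any global maximizer $\tilde p$ stays feasible by continuity of $\inprod{\cdot}{s}$ and strictly improves the objective by concavity, a contradiction. Your approach avoids introducing the value function and proving its global regularity, at the cost of being a pointwise argument; the paper's route is heavier but yields the concavity and strict monotonicity of $C_{\mathsf{cq}}(\cdot)$ as reusable facts. Both hinge on the same ingredients (concavity of $\I{\cdot}{\rho}$ and the definition of $S_{\max}$), and both correctly note that the introduction of $\sigma$ and the disposal of the constraint for $S\geq S_{\max}$ are essentially definitional.
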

\begin{proof}
See Appendix~\ref{app:inputCost}.
\end{proof}
Note that the constraint $\sigma \in \mathcal{D}(\mathcal{H})$ in \eqref{eq:cqCapacityPrimalwithoutConst} and \eqref{eq:cqCapacityPrimal} is redundant since $\rho_i \in \mathcal{D}(\mathcal{H})$ and $p \in \Delta_N$ imply that $\sigma \in \mathcal{D}(\mathcal{H})$.
The Lagrange dual program to \eqref{eq:cqCapacityPrimal} is given by
\begin{align}\label{eq:cqCapacityDual}
 \mathsf{D}: \left\{ \begin{array}{ll}
	\underset{\lambda}{\min} 		&G(\lambda) + F(\lambda) \\
			\st 					& \lambda\in \Hop,
	\end{array}\right.
\end{align}
with $F, G: \Hop\to\R$ of the form
\begin{align} \label{eq:GandF}
	G(\lambda)= \left\{ \begin{array}{ll}
			\underset{p}{\max} 		&\sum_{i =1}^N p_i \left(-\Hh{\rho_i}+\Tr{\rho_i \lambda  } \right) \\
			\st					&\inprod{p}{s} = S \\
							& p\in\Delta_{N}
	\end{array} \right.  \textnormal{and}
	\quad 
	F(\lambda)= \left\{ \begin{array}{ll}
			\underset{\sigma}{\max} 		&H(\sigma)-\Tr{\sigma \lambda  } \\
			\st 				& \sigma \in \mathcal{D}(\mathcal{H})
	\end{array}\right. .
\end{align}
Note that since the coupling constraint $ \sigma=\sum_{i =1}^N p_i \rho_i$ in the primal program \eqref{eq:cqCapacityPrimal} is affine, the set of optimal solutions to the dual program \eqref{eq:cqCapacityDual} is nonempty \cite[Prop.~5.3.1]{ref:Bertsekas-09} and as such the optimum is attained. 
The function $G(\lambda)$ is a (parametric) linear program and $F(\lambda)$ is of the form given in Lemma~\ref{lem:jaynes}, i.e., $F(\lambda)$ has a unique optimizer $\sigma^{\star}=2^{-\mu \mathbf{1} -\lambda  }$, where $\mu$ is chosen such that $\sigma^{\star}\in \mathcal{D}(\mathcal{H})$, which gives 
\begin{equation}
\mu = \log \left( \Tr{2^{-\lambda  }} \right). \label{eq:muCQ}
\end{equation}
We thus obtain
\begin{align}
F(\lambda) &= \Hh{\sigma^{\star}}-\Tr{\sigma^{\star} \lambda  } \nonumber \\
 &= - \Tr{2^{-\mu \mathbf{1} - \lambda  } \log \left( 2^{-\mu \mathbf{1} - \lambda  } \right)} - \Tr{2^{-\mu \mathbf{1} - \lambda  } \lambda  }\nonumber  \\
 &=2^{-\mu} \mu \, \Tr{2^{-\lambda  }} \nonumber \\
 &= \log \left( \Tr{ 2^{- \lambda  }} \right), \label{eq:F}
\end{align}
where the last step uses \eqref{eq:muCQ}. The gradient of $F(\lambda)$ is given by \cite[p.~639 ff.]{ref:Ber-09}
\begin{equation}
\nabla F(\lambda) = - \frac{2^{-\lambda }}{\Tr{2^{-\lambda  }}}. \label{eq:dF}
\end{equation}
The following proposition shows that the gradient \eqref{eq:dF} is Lipschitz continuous, which is essential for the optimization algorithm that we will use to solve \eqref{eq:cqCapacityDual}. 
\begin{myprop}[Lipschitz constant of $\nabla F$] \label{prop:lipschitz:constant:nablaF}
The gradient $\nabla F(\lambda)$ as given in \eqref{eq:dF} is Lipschitz continuous with respect to the Frobenius norm with Lipschitz constant $2$.
\end{myprop}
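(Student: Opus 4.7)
Plan: The strategy is to derive an explicit formula for the Hessian $\nabla^2 F(\lambda)$, viewed as a linear self-map on $(\Hop,\norm{\cdot}_F)$, bound its operator norm uniformly in $\lambda$, and then conclude via the fundamental theorem of calculus along the straight segment joining $\lambda$ and $\lambda'$.

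First, I would compute the Hessian. Starting from $\nabla F(\lambda)=-\rho(\lambda)$ with $\rho(\lambda):=2^{-\lambda}/\Tr{2^{-\lambda}}$, the Duhamel formula
\begin{equation*}
\tfrac{d}{dt}\Big|_{t=0} 2^{-(\lambda+tY)} \;=\; -\ln 2\int_0^1 2^{-s\lambda}\, Y\, 2^{-(1-s)\lambda}\,\drv s
\end{equation*}
together with the quotient rule and the identity $2^{-s\lambda}=\rho(\lambda)^{s}(\Tr{2^{-\lambda}})^{s}$ (which absorbs the normalising constants) produces, for every Hermitian direction $Y\in\Hop$,
\begin{equation*}
\nabla^2 F(\lambda)\cdot Y \;=\; \ln 2 \left(\int_0^1 \rho(\lambda)^{s}\, Y\, \rho(\lambda)^{1-s} \,\drv s \;-\; \rho(\lambda)\cdot\Tr{\rho(\lambda)\, Y}\right).
\end{equation*}

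Next, I would bound both summands in Frobenius norm. Submultiplicativity $\norm{AXB}_F\leq \norm{A}_{\opnorm}\norm{X}_F\norm{B}_{\opnorm}$, combined with the fact that all eigenvalues of the density operator $\rho(\lambda)$ lie in $[0,1]$ (so $\norm{\rho(\lambda)^{s}}_{\opnorm}\leq 1$ for $s\in[0,1]$), controls the integral term by $\norm{Y}_F$. For the rank-one correction, Cauchy--Schwarz in the trace inner product gives $|\Tr{\rho Y}|\leq \norm{\rho}_F\norm{Y}_F$, so
\begin{equation*}
\norm{\rho\cdot\Tr{\rho Y}}_F \;=\; |\Tr{\rho Y}|\cdot \norm{\rho}_F \;\leq\; \norm{\rho}_F^{2}\,\norm{Y}_F \;\leq\; \norm{Y}_F,
\end{equation*}
where the last inequality uses $\norm{\rho}_F^{2}=\Tr{\rho^2}\leq\Tr{\rho}=1$. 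Hence $\norm{\nabla^2 F(\lambda)\cdot Y}_F \leq 2\ln 2\,\norm{Y}_F \leq 2\,\norm{Y}_F$, which is a uniform-in-$\lambda$ bound.

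Finally, applying the fundamental theorem of calculus along $\lambda_t:=(1-t)\lambda'+t\lambda$ gives $\nabla F(\lambda)-\nabla F(\lambda')=\int_0^1 \nabla^2 F(\lambda_t)\cdot(\lambda-\lambda')\,\drv t$, and taking Frobenius norms with the above bound yields the claim. The main obstacle is the first step: propagating the Duhamel identity through the nonlinear dependence on $\lambda$ in both the numerator $2^{-\lambda}$ and the denominator $\Tr{2^{-\lambda}}$, while carefully tracking the $\ln 2$ factors from the base-$2$ exponential. Once the Hessian formula is in place, the norm estimate is routine.
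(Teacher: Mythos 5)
Your proof is correct and takes a genuinely different route from the paper's. The paper exploits the variational representation $F(\lambda)=\max_{\sigma\in\mathcal{D}(\mathcal{H})}\{H(\sigma)-\Tr{\sigma\lambda}\}$ and invokes the conjugate-duality principle from \cite{nesterov05}: if the inner objective is $\kappa$-strongly concave, the resulting max-function has a $\tfrac{1}{\kappa}$-Lipschitz gradient, and the modulus $\kappa=\tfrac{1}{2}$ for the von Neumann entropy with respect to the Frobenius norm is imported from \cite[Thm.~16]{ref:Kakade-09}. You instead bypass both external results by differentiating $\nabla F(\lambda)=-2^{-\lambda}/\Tr{2^{-\lambda}}$ explicitly via the Duhamel integral and bounding the Hessian directly, splitting it into an integral contribution controlled by $\norm{\rho(\lambda)^{s}}_{\opnorm}\leq 1$ and a rank-one correction controlled by Cauchy--Schwarz together with $\Tr{\rho(\lambda)^{2}}\leq 1$. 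The paper's route is shorter and conceptually uniform with the smoothing framework used throughout, whereas yours is self-contained and even yields the slightly sharper constant $2\ln 2\approx 1.39$ before you relax it to $2$ to match the statement. One step worth making explicit if you flesh out the computation: the scalar piece $\Tr{\dot{A}}/\Tr{A}$, with $A=2^{-\lambda}$, collapses to $-\ln 2\,\Tr{\rho(\lambda)\,Y}$ only after moving the trace inside the Duhamel integral and using cyclicity; this is what makes the rank-one correction assume the clean form $\rho(\lambda)\,\Tr{\rho(\lambda)\,Y}$ that you use in the estimate.
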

\begin{proof}
To prove the Lipschitz continuity of $\nabla F(\lambda)$, we focus on the representation of $F(\lambda)$ as an optimization problem, given in \eqref{eq:GandF}. According to \cite[Thm.~1]{nesterov05}, the function $\nabla F(\lambda)$ is Lipschitz continuous with Lipschitz constant $L=\tfrac{1}{\kappa}$, where $\kappa$ is the strong convexity parameter of the convex function $\mathcal{D}(\mathcal{H}) \ni\sigma\mapsto -H(\sigma)\in \R$, where according to \cite[Thm.~16]{ref:Kakade-09} $\kappa = \tfrac{1}{2}$.
\end{proof}
Another requirement to solve \eqref{eq:cqCapacityDual} with a specific rate of convergence using a fast gradient method is that the set of feasible optimizers is compact. In order to assure that and to precisely characterize the size of the set of all feasible optimizers (with respect to the Frobenius norm), we need to impose the following assumption on the cq channel $\W$, that we will maintain for the remainder of this article.
\begin{myass}[Regularity] \label{ass:channel:CQ}
$\gamma:=\min\limits_{x\in \mathcal{X}} \min \spec \left( \rho_{x} \right)>0$
\end{myass}
Even though Assumption~\ref{ass:channel:CQ} may seem restrictive at first glance, it holds for a large class of cq channels. Moreover, according to the Fannes-Audenaert inequality \cite{fannes73,audenaert07} the von Neumann entropy is continuous in its argument. Therefore, cq channels having density operators $\rho_x$ that violate Assumption~\ref{ass:channel:CQ} can be avoided by slight perturbations of these density operators.\footnote{See Example~\ref{ex:two} for a numerical illustration.} Furthermore, it can be seen that the mutual information is strictly concave as a function of the input distribution, for a fixed channel under Assumption~\ref{ass:channel:CQ}. This implies uniqueness of the optimal input distribution.
\begin{mylem} \label{lem:compact:set:CQ}
Under Assumption~\ref{ass:channel:CQ}, the dual program \eqref{eq:cqCapacityDual} is equivalent to 
\begin{align*}
\min\limits_{\lambda} \left\{ G(\lambda) + F(\lambda) \ : \  \lambda\in \Lambda \right\},
\end{align*}
where $\Lambda:= \left\{ \lambda\in \Hop \ : \ \norm{\lambda}_{F}\leq M \log\left(\gamma^{-1} \vee \e \right) \right\}$.
\end{mylem}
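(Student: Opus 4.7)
The plan is to exhibit an explicit dual optimizer that already lies in $\Lambda$; since $\Lambda \subset \Hop$, this immediately forces $\min_{\Lambda}(G+F)=\min_{\Hop}(G+F)$.

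First I would record a \emph{gauge symmetry} of the dual objective. From the closed form \eqref{eq:F} one has $F(\lambda+c\mathbf{1})=F(\lambda)-c$, while using $\Tr{\rho_i}=1$ and $\sum_i p_i=1$ inside \eqref{eq:GandF} yields $G(\lambda+c\mathbf{1})=G(\lambda)+c$. Hence $G+F$ is invariant along the one-parameter family $\lambda\mapsto\lambda+c\mathbf{1}$, and I am free to pick a convenient representative of any equivalence class of dual optima.

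Second, because the coupling constraint in $\mathsf{P}$ is affine, strong duality holds and $\mathsf{D}$ attains its minimum at some $\lambda^\star$. By complementary slackness, the unique maximizer $\sigma^\star$ of $F(\lambda^\star)$ supplied by Lemma~\ref{lem:jaynes} must coincide with the primal optimizer $\bar\sigma := \sum_i p_i^\star \rho_i$. Exploiting the gauge freedom from the first step, I would rescale this representative so that the normalization constant $\mu$ in Lemma~\ref{lem:jaynes} equals zero, producing the explicit form $\lambda^\star = -\log\bar\sigma$.

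Third, I would control the spectrum of $\lambda^\star$. Assumption~\ref{ass:channel:CQ} gives $\rho_i \succeq \gamma\mathbf{1}$ for every $i$, whence $\bar\sigma \succeq \gamma\mathbf{1}$; together with $\bar\sigma \in \mathcal{D}(\mathcal{H})$ this forces $\spec(\bar\sigma) \subset [\gamma,1]$, and therefore $\spec(\lambda^\star) \subset [0,\log\gamma^{-1}]$. The Frobenius-norm bound
\[
\|\lambda^\star\|_F \;\leq\; \sqrt{M}\,\log\gamma^{-1} \;\leq\; M \log\bigl(\gamma^{-1}\vee \e\bigr)
\]
then places $\lambda^\star \in \Lambda$, where the $\vee\,\e$ merely absorbs the low-dimensional edge case in which $\gamma^{-1}$ is close to $1$ (recall $\gamma\leq 1/M$, so this matters only for $M\in\{1,2\}$).

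The main obstacle is the second step: since $G+F$ is convex but not strictly convex in $\lambda$, the set of dual optimizers is an entire gauge orbit rather than a single point, and one has to argue carefully via primal-dual optimality that \emph{some} representative takes the concrete form $-\log\bar\sigma$, rather than merely invoking existence of a dual optimum. Once this representative is identified, the remaining spectral and norm estimates are immediate.
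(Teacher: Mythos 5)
Your proposal is correct, but it follows a genuinely different route from the paper. The paper's proof (Appendix~\ref{app:compact}) penalizes the coupling constraint by a slack $\varepsilon$ and a parameter $\beta$, shows that the resulting dual carries the extra constraint $\|\lambda\|_{\trnorm}\leq\beta$, and then uses concavity of the perturbation function $J(\varepsilon)$ together with a Taylor expansion of $\Hh{\sum_i p_i\rho_i+\varepsilon v}$ to prove that for $\beta>M\log(\gamma^{-1}\vee\e)$ the optimal slack is zero, so that constraint is inactive; the Frobenius bound follows from $\|\lambda\|_F\leq\|\lambda\|_{\trnorm}$. You instead pin down an explicit dual optimizer: zero duality gap plus attainment of both optima (both stated in the text preceding the lemma) force, via the unique inner maximizer in Lemma~\ref{lem:jaynes}, that $\bar\sigma=\sum_i p_i^\star\rho_i = 2^{-\mu\mathbf{1}-\lambda^\star}$, and the gauge invariance $F(\lambda+c\mathbf{1})+G(\lambda+c\mathbf{1})=F(\lambda)+G(\lambda)$ lets you take the representative $\lambda^\star=-\log\bar\sigma$. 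Since $\rho_i\succeq\gamma\mathbf{1}$ and $\Tr{\bar\sigma}=1$ give $\spec(\bar\sigma)\subset[\gamma,1]$, you get $\|\lambda^\star\|_F\leq\sqrt{M}\log\gamma^{-1}$, a factor $\sqrt{M}$ tighter in radius than the paper's bound (which would correspondingly shrink $D_1$ in Theorem~\ref{thm:errorCQdiscrete} by a factor of $M$). The trade-off is that your argument needs primal attainment; this is free here since $\Delta_N\times\mathcal{D}(\mathcal{H})$ is compact and the objective is continuous under Assumption~\ref{ass:channel:CQ}, but the paper's perturbation argument avoids it, which is why it transfers verbatim to the continuous-input Lemma~\ref{lem:compact:set:CQ:cts}, where the primal is only a supremum and your direct construction of $-\log\bar\sigma$ from a primal optimizer would require additional justification.
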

\begin{proof}
See Appendix~\ref{app:compact}. 
\end{proof}

\begin{mylem}
Strong duality holds between \eqref{eq:cqCapacityPrimal} and \eqref{eq:cqCapacityDual}.
\end{mylem}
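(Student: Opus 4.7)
The plan is to recognize \eqref{eq:cqCapacityPrimal} as a convex program whose only coupling constraint (the affine equality $\sigma=\sum_i p_i\rho_i$, which is the one dualized by $\lambda$) satisfies a refined Slater condition, and then invoke a standard convex duality theorem. The natural candidate is \cite[Prop.~5.3.1]{ref:Bertsekas-09}, already used elsewhere in the paper for dual attainment.

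First, I would check convexity of the primal. The objective $(p,\sigma)\mapsto \Hh{\sigma}-\sum_i p_i\Hh{\rho_i}$ is concave, since the von Neumann entropy $H$ is concave in $\sigma$ and the second term is linear in $p$. The feasible set $\{(p,\sigma) : p\in\Delta_N,\ \sigma\in\mathcal{D}(\mathcal{H}),\ \sigma=\sum_i p_i\rho_i,\ \inprod{p}{s}=S\}$ is the intersection of the compact convex set $\Delta_N\times\mathcal{D}(\mathcal{H})$ with two affine subspaces, hence compact and convex. In particular the primal value is finite and attained.

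Second, I would exhibit a Slater point. Because $\lambda$ multiplies only the affine equality $\sigma=\sum_i p_i\rho_i$, it is enough to produce a feasible pair $(p,\sigma)$ lying in the relative interior of $\Delta_N\times\mathcal{D}(\mathcal{H})$. Choose any $p$ in the open simplex (all $p_i>0$) with $\inprod{p}{s}=S$; such a $p$ exists whenever $S$ is in the relative interior of the image of $\Delta_N$ under $p\mapsto\inprod{p}{s}$, which is the case of interest (extremal values of $S$ either reduce to the unconstrained formulation \eqref{eq:cqCapacityPrimalwithoutConst} by Lemma~\ref{lem:inputConstraint}, or collapse the feasible set onto a face of $\Delta_N$ on which the same argument is repeated). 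Setting $\sigma=\sum_i p_i\rho_i$ and invoking Assumption~\ref{ass:channel:CQ} gives $\sigma\succeq\gamma\,\mathbf{1}\succ 0$, so $\sigma$ lies in the interior of $\mathcal{D}(\mathcal{H})$, and $(p,\sigma)$ is the required strictly feasible point.

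Third, I would apply \cite[Prop.~5.3.1]{ref:Bertsekas-09} (or, equivalently, any convex duality theorem for the minimization of a convex function over a convex set subject to affine equality constraints under Slater) to conclude zero duality gap and dual attainment. The main obstacle is not analytical depth but careful bookkeeping around corner cases in $S$: one has to verify that boundary values of $S$ either bring us into the unconstrained case \eqref{eq:cqCapacityPrimalwithoutConst}, or restrict the problem to a sub-simplex obtained by fixing some $p_i=0$, on which the Slater construction above still goes through after relabeling.
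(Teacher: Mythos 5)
Your proof is correct and follows essentially the same route as the paper, which simply cites \cite[Prop.~5.3.1]{ref:Bertsekas-09} for strong duality with an affine coupling constraint; you additionally spell out the verification of its hypotheses (concavity of the objective, finiteness and attainment of the primal value, and a relative-interior Slater point built from a strictly positive $p$ together with Assumption~\ref{ass:channel:CQ}), all of which the paper leaves implicit.
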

\begin{proof}
The assertion follows by a standard strong duality result of convex optimization, see \citep[Proposition~5.3.1, p.~169]{ref:Bertsekas-09}.
\end{proof}

The goal is to efficiently solve \eqref{eq:cqCapacityDual}, which is not straightforward since $G(\cdot)$ is non-smooth and as therefore in general the subgradient method is optimal to solve such problems \cite{ref:nesterov-book-04}. 
The idea is to use the particular structure of \eqref{eq:cqCapacityDual} that allows us to invoke Nesterov's smoothing technique \cite{nesterov05}. Therefore, we consider
\begin{equation} \label{eq:CQ:G_nu}
G_{\nu} (\lambda) :=  \left \lbrace \begin{array}{ll}
\max \limits_{p} & \inprod{p}{b(\lambda)} - \inprod{p}{a} + \nu \Hh{p} - \nu \log N \\
\st & \inprod{p}{s} =S\\
& p \in \Delta_N,
\end{array} \right.
\end{equation}
with smoothing parameter $\nu \in \R_{>0}$ and $a,b(\lambda) \in \R^N$ defined as $a_i := \Hh{\rho_i}$ and $b_i(\lambda) := \Tr{\rho_i \lambda  }$. 
We denote by $p_{\nu}(\lambda)$ the optimal solution that is unique since the objective function is strictly concave. Clearly for any $p\in \Delta_{N}$, $G_{\nu}(\lambda)\leq G(\lambda)\leq G_{\nu}(\lambda) + \nu D_{2}$ for $D_{2}:=\log(N)$, i.e., $G_{\nu}(\lambda)$ is a uniform approximation of the non-smooth function $G(\lambda)$.
According to Lemma~2.2 in \cite{TobiasSutter14} an analytical optimizer $p_{\nu}(\lambda)$ is given by
\begin{align}
p_{\nu}(\lambda)_{i} = 2^{\mu_{1} + \frac{1}{\nu}(b_{i}(\lambda) - a_{i}) + \mu_{2}s_{i}}, \quad 1\leq i \leq N, \label{eq:optimizerP}
\end{align}
where $\mu_{1},\mu_{2}\in\R$ have to be chosen such that $\inprod{p_{\nu}(\lambda)}{s}=S$ and $p_{\nu}(\lambda)\in\Delta_{N}$. 
\begin{myremark}\label{rmk:stabilization:optimizer:CQ}
In case of no input constraints, the unique optimizer to \eqref{eq:CQ:G_nu} is given by
\begin{equation*}
p_{\nu}(\lambda)_{i} = \frac{2^{ \tfrac{1}{\nu} (b_{i}(\lambda)-a_{i})} }{\sum_{j=1}^{N}2^{ \tfrac{1}{\nu}(b_{j}(\lambda)-a_{j})}},\quad 1\leq i \leq N,
\end{equation*}
whose straightforward evaluation is numerically difficult for small $\nu$. A numerically stable method for this computation is presented in \cite[Rmk.~2.6]{TobiasSutter14}.
\end{myremark}
\begin{myremark}[\cite{TobiasSutter14}]\label{rmk:finite:constraint:optimizer:CQ}
In case of an additional input constraint, we need an efficient method to find the coefficients $\mu_{1}$ and $\mu_{2}$ in \eqref{eq:optimizerP}. In particular if there are multiple input constraints (which will lead to multiple $\mu_{i}$) the efficiency of the method computing them becomes important. Instead of solving a system of non-linear equations, it turns out that the $\mu_{i}$ can be found by solving the following convex optimization problem \cite[p.~257 ff.]{ref:Lasserre-11}
\begin{equation} \label{eq:opt:problem:find:mu:finite}
\sup\limits_{\mu\in\R^{2}}\left\{ \inprod{y}{\mu} - \sum_{i=1}^{N}p_{\nu}(\lambda,\mu) \right\},
\end{equation}
where $y:=(1,S)$. Note that $\eqref{eq:opt:problem:find:mu:finite}$ is an unconstrained maximization of a concave function, whose gradient and Hessian can be easily computed, which would allow us to use second-order methods. 
\end{myremark}

Finally, we can show that the uniform approximation $G_{\nu}(\lambda)$ is smooth and has a Lipschitz continuous gradient with known Lipschitz constant. 
\begin{myprop}[Lipschitz constant of $\nabla G_{\nu}$] \label{prop:CQ:lipschitz}
$G_{\nu}(\lambda)$ is well defined and continuously differentiable at any $\lambda\in \Lambda$. Moreover, it is convex and its gradient $\nabla G_{\nu}(\lambda)=\sum_{i=1}^{N} \rho_{i}\herm p_{\nu}(\lambda)_{i}$ is Lipschitz continuous with respect to the Frobenius norm with constant $\tfrac{1}{\nu}$.
\end{myprop}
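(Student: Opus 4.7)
The plan is to proceed in the same spirit as Proposition~\ref{prop:lipschitz:constant:nablaF} and apply Nesterov's smoothing theorem \cite[Thm.~1]{nesterov05}, which simultaneously delivers well-definedness, differentiability, the gradient formula, and an explicit Lipschitz constant for the class of problems \eqref{eq:CQ:G_nu}.

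First I would cast $G_{\nu}$ into Nesterov's canonical form
\[
G_{\nu}(\lambda) = \max_{p\in Q}\bigl\{\inprod{A\lambda}{p} - \inprod{a}{p} - \nu d(p)\bigr\},
\]
with compact (and, by assumption, non-empty) feasibility set $Q := \{p \in \Delta_N : \inprod{p}{s} = S\}$, linear map $A : \Hop \to \R^N$ defined by $(A\lambda)_i := \Tr{\rho_i \lambda}$, and prox-function $d(p) := -\Hh{p} + \log N \geq 0$. By Pinsker's inequality, equivalently \cite[Thm.~16]{ref:Kakade-09}, the prox-function $d$ is $1$-strongly convex on $\Delta_N$ with respect to the $\ell_1$-norm. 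Strict concavity of the inner objective ensures uniqueness of $p_\nu(\lambda)$, which together with Danskin's theorem yields that $G_\nu$ is well defined and continuously differentiable with gradient $\nabla G_\nu(\lambda) = A^{\ast} p_\nu(\lambda) = \sum_{i=1}^N \rho_i\herm p_\nu(\lambda)_i$. Convexity of $G_\nu$ then follows as a pointwise maximum of affine functions of $\lambda$.

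Second, I would invoke Nesterov's Lipschitz estimate $L_\nu = \|A\|^2/(\nu\sigma)$, where $\sigma = 1$ is the strong convexity parameter just identified and $\|A\|$ denotes the operator norm from $(\Hop, \|\cdot\|_F)$ to $(\R^N, \|\cdot\|_\infty)$, i.e., into the dual of the $\ell_1$-norm used for the prox-function. For any $\lambda \in \Hop$ we have
\[
\|A\lambda\|_\infty = \max_i |\Tr{\rho_i \lambda}| \leq \max_i \|\rho_i\|_F \cdot \|\lambda\|_F \leq \|\lambda\|_F,
\]
where the first inequality is Cauchy--Schwarz in the Frobenius inner product and the last uses that $\|\rho_i\|_F^2 = \sum_j \lambda_j(\rho_i)^2 \leq (\sum_j \lambda_j(\rho_i))^2 = 1$ since $\rho_i$ is a density operator. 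Hence $\|A\| \leq 1$ and $L_\nu \leq 1/\nu$.

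The main subtlety, and the step to verify most carefully, is the consistent pairing of norms between the two sides of Nesterov's formula: one must match the Frobenius norm on $\Hop$ with the norm on $\R^N$ chosen so that the prox-function is strongly convex, namely $\ell_1$; the dual norm $\ell_\infty$ then enters through the operator norm of $A$. Once this bookkeeping is resolved the Lipschitz bound is immediate, and the remaining structural claims (well-definedness, convexity, and continuous differentiability) are free consequences of the smoothing framework.
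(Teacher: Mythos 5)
Your proof is correct and follows essentially the same path as the paper's: both cast $G_{\nu}$ into Nesterov's smoothing framework (with the strong convexity of the entropy prox-function providing the $1/\nu$ factor) and then bound the relevant operator norm by $1$ via Cauchy--Schwarz together with $\|\rho_i\|_F \leq \Tr{\rho_i} = 1$ for density operators. The paper abbreviates the Nesterov/Danskin structural part to a citation of \cite[Thm.~1, Lem.~3]{nesterov05} and spells out only the $\|\mathcal{W}\|_{\opnorm}\leq 1$ estimate, which you reproduce; your extra care about the $\ell_1$/$\ell_\infty$ norm pairing is sound and consistent with the paper's definition of the operator norm.
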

\begin{proof}
See Appendix~\ref{app:propLip}.
\end{proof}

We consider the smooth, convex optimization problem
\begin{align}\label{Lagrange:Dual:Program:smooth}
 \mathsf{D}_{\nu}:\left\{ \begin{array}{ll}
	\min\limits_{\lambda} 		& F(\lambda) + G_{\nu}(\lambda) \\
			\st					& \lambda\in \Lambda,
	\end{array}\right.
\end{align}
whose objective function has a Lipschitz continuous gradient with respect to the Frobenius norm with Lipschitz constant $L_{\nu}:=2+\tfrac{1}{\nu}$. According to \cite[Thm.~16]{ref:Kakade-09} the function $\Hop\ni A\mapsto d(A):=\tfrac{1}{2}\norm{A}_{F}^{2}\in \Rp$ is $\tfrac{1}{2}$-strongly convex with respect to the Frobenius norm. 
As such $\mathsf{D}_{\nu}$ can be  be approximated with Nesterov's optimal scheme for smooth optimization \cite{nesterov05}, which is summarized in Algorithm~\hyperlink{algo:1}{1}, where $\pi_{\Lambda}$ denotes the projection operator onto the set $\Lambda$, defined in Lemma~\ref{lem:compact:set:CQ}, that is the Frobenius norm ball with radius $r:=M\log\left(\gamma^{-1} \vee \e \right)$.

\begin{myprop}[Projection on Frobenius norm ball]  \label{prop:proj}
Consider the Frobenius norm ball $\Lambda:=\{ A \in \Hop : \norm{\varsigma(A)}_{2}\leq r\}$ of radius $r\geq 0$, where $\varsigma(A)\in\R^{M}$ denotes the vector of singular values of A. The unique projection of a matrix $B\in\Hop$ onto $\Lambda$ in the Frobenius norm is given by
\begin{equation*}
\pi_{\Lambda}(B) = U \mathsf{diag}\left( \pi_{\Lambda}(\varsigma(B)) \right) V\transp, 
\end{equation*}
where $B=U \mathsf{diag}\left( \varsigma(B) \right) V\transp$ is the singular value decomposition of $B$ and $\pi_{\Lambda}$ is the projection operator of the $\ell_{2}$-norm ball of radius $r$, i.e.,
\begin{equation*} 
		\pi_{\Lambda}(x):=\left\{ \begin{array}{ll} r \tfrac{x}{\norm{x}_{2}}, & \norm{x}_{2}>r \\ x, & \text{otherwise.} \end{array} \right. 
\end{equation*}
\end{myprop}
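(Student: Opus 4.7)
The plan is to recast the formula as a uniform rescaling of $B$ and then verify that this rescaling minimises $\|B - X\|_F$ over $X \in \Lambda$.

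First, I would use the identity $\|A\|_F = \|\varsigma(A)\|_2$, so that $\|B\|_F \leq r$ is equivalent to $\varsigma(B)$ lying in the $\ell_2$-ball. If $\|B\|_F \leq r$, then $\pi_\Lambda(\varsigma(B)) = \varsigma(B)$ and reassembling via the SVD recovers $B$, which is trivially its own projection. If $\|B\|_F > r$, then $\pi_\Lambda(\varsigma(B)) = (r/\|B\|_F)\,\varsigma(B)$, and linearity of the SVD reassembly gives
\begin{equation*}
X^\star \;:=\; U\,\mathsf{diag}\!\left(\pi_\Lambda(\varsigma(B))\right) V\transp \;=\; \tfrac{r}{\|B\|_F}\, U\,\mathsf{diag}(\varsigma(B))\, V\transp \;=\; \tfrac{r}{\|B\|_F}\, B,
\end{equation*}
which lies in $\Hop$ because a real scalar multiple of a Hermitian matrix is Hermitian.

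Second, I would establish optimality. Existence and uniqueness of the projection onto the closed convex set $\Lambda$ are immediate from the Hilbert projection theorem applied to $(\Hop,\langle\cdot,\cdot\rangle_F)$. For the non-trivial case $\|B\|_F > r$ and any feasible $X$, Cauchy--Schwarz together with the constraint $\|X\|_F \leq r$ yields
\begin{equation*}
\|B - X\|_F^2 \;=\; \|B\|_F^2 - 2\langle B, X\rangle_F + \|X\|_F^2 \;\geq\; (\|B\|_F - \|X\|_F)^2 \;\geq\; (\|B\|_F - r)^2.
\end{equation*}
The candidate $X^\star$ saturates Cauchy--Schwarz (being a positive scalar multiple of $B$, so that $\langle B, X^\star\rangle_F = \|B\|_F \|X^\star\|_F$) and also satisfies $\|X^\star\|_F = r$, so both inequalities become equalities. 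Hence $X^\star$ attains the lower bound and, by uniqueness of the projection, is the required minimiser.

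The only mildly delicate point, and the one place where one must be careful, is SVD bookkeeping in the Hermitian setting: when $B$ has negative eigenvalues, the SVD is not the spectral decomposition and the orthogonal factors $U$ and $V$ differ by a diagonal sign matrix encoding the signs of the nonzero eigenvalues. This does not affect the linearity argument above. One could alternatively bypass the SVD entirely by invoking the closed-form projection $\pi_\Lambda(B)=B$ for $\|B\|_F \leq r$ and $\pi_\Lambda(B)= rB/\|B\|_F$ otherwise, which is standard for a Hilbert-space norm ball; the SVD formulation in the proposition is simply the natural one for later generalisation to unitarily invariant norms.
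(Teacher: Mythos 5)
Your proof is correct. The paper does not supply its own argument --- it merely cites \cite[Prop.~5.3]{richter_phd} --- so there is no in-paper proof to compare against; but your argument is a clean, self-contained verification. The key simplification you exploit, and rightly emphasise, is that for the Frobenius norm the SVD formula collapses to the uniform rescaling $\pi_\Lambda(B)=B$ if $\|B\|_F\leq r$ and $\pi_\Lambda(B)=rB/\|B\|_F$ otherwise, after which optimality follows from a two-line Cauchy--Schwarz computation in the real Hilbert space $(\Hop,\langle\cdot,\cdot\rangle_F)$. This is more elementary than the route one would take for the general result Richter states: for an arbitrary unitarily invariant norm, the projection does \emph{not} reduce to a scalar multiple of $B$, and one instead needs the von~Neumann trace inequality $\langle A,C\rangle_F\leq \langle\varsigma(A),\varsigma(C)\rangle$ to reduce the matrix problem to the vector one on singular values. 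Your proof buys simplicity at the cost of that generality, which is a fine trade-off given that the proposition is stated only for the Frobenius ball. Your remark about the Hermitian-versus-SVD bookkeeping ($V=U\,\mathrm{sgn}$ on the nonzero eigenvalues) is well taken and correctly identified as harmless here.
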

\begin{proof}
The proof follows the lines in \cite[Prop.~5.3]{richter_phd}.
\end{proof}

 \begin{table}[!htb]
\centering 
\begin{tabular}{c}
  \Xhline{3\arrayrulewidth}  \hspace{1mm} \vspace{-3mm}\\ 
\hspace{34.2mm}{\bf{\hypertarget{algo:1}{Algorithm 1: } }} Optimal scheme for smooth optimization for cq channels \hspace{12.5mm} \\ \vspace{-3mm} \\ \hline \vspace{-0.5mm}
\end{tabular} \\
\vspace{-5mm}

 \begin{flushleft}
  {\hspace{1.7mm}Choose some $\lambda_0 \in \Hop$}
 \end{flushleft}
 \vspace{-8mm}

 \begin{flushleft}
  {\hspace{1.7mm}\bf{For $m\geq 0$ do$^{*}$}}
 \end{flushleft}
 \vspace{-8mm}
 
  \begin{tabular}{l l}
{\bf Step 1: } & Compute $\nabla F(\lambda_{m})+\nabla G_{\nu}(\lambda_{m})$ \\
{\bf Step 2: } & $y_m = \pi_{\Lambda}\left(-\frac{1}{L_{\nu}}\left( \nabla F(\lambda_m)+\nabla G_{\nu}(\lambda_m) \right) + \lambda_k\right)$\\
{\bf Step 3: } &   $z_m=\pi_{ \Lambda}\left(-\frac{1}{2 L_{\nu}} \sum_{i=0}^{m} \frac{i+1}{2} \left(  \nabla F(\lambda_i)+\nabla G_{\nu}(\lambda_i) \right)\right)$\\
{\bf Step 4: } & $\lambda_{m+1}=\frac{2}{m+3}z_{m} + \frac{m+1}{m+3}y_{m}$\\ 
\vspace{-5mm}
  \end{tabular}
 \begin{flushleft}
  {\hspace{1.7mm}[*The stopping criterion is explained in Remark~\ref{remark:stopping}]}
  \vspace{-10mm}
 \end{flushleft}  
\begin{tabular}{c}
\hspace{38mm} \phantom{ {\bf{Algorithm:}} Optimal Scheme for Smooth Optimization}\hspace{36.5mm} \\ \vspace{-1.0mm} \\\Xhline{3\arrayrulewidth}
\end{tabular}
\end{table}

The following theorem provides explicit error bounds for the solution of Algorithm~\hyperlink{algo:1}{1} after $k$ iterations.
Note that $D_{1} = \tfrac{1}{2} \left( M \log\left(\gamma^{-1} \vee \e \right)\right)^{2}$ and $D_{2}=\log(N)$.
\begin{mythm}[\cite{nesterov05}] \label{thm:errorCQdiscrete}
Consider a smoothing parameter
\begin{align*}
\nu = \nu(k) = \frac{2}{k+1}\sqrt{\frac{2 D_{1}}{D_{2}}}.
\end{align*}
Then after $k$ iterations we can generate the approximate solutions to the problems \eqref{eq:cqCapacityDual} and \eqref{eq:cqCapacity}, namely,
\begin{align} \label{eq:optInPut}
\hat{\lambda} = y_{k} \in \Lambda \qquad \text{and}\qquad \hat{p}=\sum_{i=0}^{k}\frac{2(i+1)}{(k+1)(k+2)} p_{\nu}(\lambda_{i})\in \Delta_{N},
\end{align}
which satisfy
\begin{align} \label{eq:EBB}
0\leq F(\hat{\lambda}) + G(\hat{\lambda}) - \I{\hat{p}}{\rho} \leq \frac{4}{k+1} \sqrt{2 D_{1} D_{2}} + \frac{16 D_{1}}{(k+1)^{2}}.
\end{align}
Thus, the complexity of finding an $\varepsilon$-solution to the problems \eqref{eq:cqCapacityDual} and \eqref{eq:cqCapacity} does not exceed
\begin{align*}
4  \sqrt{2 D_{1} D_{2}} \ \frac{1}{\varepsilon} + 4 \sqrt{\frac{ D_{1}}{\varepsilon}}
\end{align*}
iterations.
\end{mythm}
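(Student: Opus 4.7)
The plan is to invoke Nesterov's smoothing framework \cite{nesterov05} in a black-box manner, with our task being to verify that the setup satisfies all of its hypotheses and then substitute the problem-specific constants. The smoothed dual $\mathsf{D}_\nu$ is an unconstrained-over-$\Lambda$ minimization of the convex function $F + G_\nu$, whose gradient is Lipschitz with constant $L_\nu = 2 + 1/\nu$ by combining Propositions on the Lipschitz constants of $\nabla F$ and $\nabla G_\nu$. The prox-function $d(\lambda) = \tfrac{1}{2}\|\lambda\|_F^2$ is $\tfrac{1}{2}$-strongly convex with respect to the Frobenius norm, and over $\Lambda$ (the Frobenius ball of radius $M \log(\gamma^{-1} \vee \e)$ from Lemma on compactness) it satisfies $\max_{\lambda \in \Lambda} d(\lambda) = D_1 = \tfrac{1}{2}(M\log(\gamma^{-1}\vee\e))^2$. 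The projection step in Algorithm~\hyperlink{algo:1}{1} is well-defined and explicit by the Proposition on projection onto the Frobenius norm ball.

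With these pieces in place, I would invoke the complexity estimate for Nesterov's optimal first-order scheme, which gives, for the iterate $\hat\lambda = y_k$,
\begin{align*}
F(\hat\lambda) + G_\nu(\hat\lambda) - \min_{\lambda \in \Lambda}\{F(\lambda) + G_\nu(\lambda)\} \leq \frac{8\, L_\nu D_1}{(k+1)^2},
\end{align*}
the factor $8$ arising from $4/\sigma$ with $\sigma = 1/2$. The uniform smoothing estimate $G_\nu(\lambda) \leq G(\lambda) \leq G_\nu(\lambda) + \nu D_2$ with $D_2 = \log N$ converts this into a bound on the original dual objective:
\begin{align*}
F(\hat\lambda) + G(\hat\lambda) - \min_{\lambda \in \Lambda}\{F(\lambda) + G(\lambda)\} \leq \frac{8\, L_\nu D_1}{(k+1)^2} + \nu D_2.
\end{align*}

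For the primal side, I would use Nesterov's primal-dual construction: because $G_\nu(\lambda)$ arises as a maximum over the simplex with a strongly concave regularizer, each iterate has a unique inner maximizer $p_\nu(\lambda_i)$, and the averaging weights $\tfrac{2(i+1)}{(k+1)(k+2)}$ coincide with the weights used in the scheme's certificate. A convexity/weak-duality argument then shows that $\hat p \in \Delta_N$ is primal feasible and that $\I{\hat p}{\rho}$ lower-bounds $F(\hat\lambda) + G(\hat\lambda)$, while the gap between them is controlled by the same expression above. Combining these inequalities with strong duality (established in the preceding Lemma) yields
\begin{align*}
0 \leq F(\hat\lambda) + G(\hat\lambda) - \I{\hat p}{\rho} \leq \frac{8\, L_\nu D_1}{(k+1)^2} + \nu D_2.
\end{align*}

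The final step is to substitute $L_\nu = 2 + 1/\nu$ and the prescribed $\nu = \tfrac{2}{k+1}\sqrt{2 D_1 / D_2}$; a short calculation shows that the two $1/(k+1)$-terms coming from $\nu D_2$ and from $8 D_1/(\nu(k+1)^2)$ combine to $\tfrac{4}{k+1}\sqrt{2 D_1 D_2}$, while the remaining piece gives the $\tfrac{16 D_1}{(k+1)^2}$ term in \eqref{eq:EBB}. The iteration complexity bound follows by solving $\tfrac{4}{k+1}\sqrt{2 D_1 D_2} + \tfrac{16 D_1}{(k+1)^2} \leq \varepsilon$ term by term. The main obstacle is the primal-dual recovery: verifying that the specific weighted average $\hat p$ arising from Nesterov's scheme is indeed a near-optimal primal point (rather than merely a feasible one), for which I would rely on the excessive gap technique and the fact that the entropic regularizer makes the inner maximizer $p_\nu(\lambda)$ a smooth function of $\lambda$.
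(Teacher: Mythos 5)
Your proposal is correct and follows the same route as the paper, which presents this theorem as a direct application of Nesterov's Theorem~3 in \cite{nesterov05}; your constant-tracking ($L_{\nu}=2+1/\nu$, $\sigma_{1}=\tfrac{1}{2}$, $D_{1}=\tfrac{1}{2}(M\log(\gamma^{-1}\vee\e))^{2}$, $D_{2}=\log N$, $\|\mathcal{W}\|\leq 1$) and the final substitution both reproduce the bound \eqref{eq:EBB} exactly, and your two-step decomposition (Theorem~2 of \cite{nesterov05} plus the uniform smoothing gap $\nu D_{2}$) is precisely how Theorem~3 of that reference is derived. One small imprecision: the primal recovery via the weighted average $\hat{p}$ is already built into Theorem~3 of the cited smoothing paper and does not require the separate excessive-gap technique (a different Nesterov scheme); invoking it is not wrong, but it is not what the paper, via its citation, relies on.
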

Note that Theorem~\ref{thm:errorCQdiscrete} provides an explicit error bound given in \eqref{eq:EBB}, also called \emph{a priori error}. In addition this theorem predicts an approximation to the optimal input distribution \eqref{eq:optInPut}, i.e., the optimizer of the primal problem. Thus, by comparing the values of the primal and the dual optimization problem, one can also compute an \emph{a posteriori error} which is the difference of the dual and the primal problem, namely $F(\hat{\lambda}) + G(\hat{\lambda}) - \I{\hat{p}}{\rho}$ with $C_{\mathsf{cq},\textnormal{UB}}(\W):=F(\hat{\lambda}) + G(\hat{\lambda})$ and $C_{\mathsf{cq},\textnormal{LB}}(\W):= \I{\hat{p}}{\rho}$.
In practice the a posteriori error is often much smaller than the a priori error (see Section~\ref{sec:simulation:results}).

\begin{myremark}[Stopping criterion of Algorithm~\hyperlink{algo:1}{1}] \label{remark:stopping}
There are two immediate approaches to define a stopping criterion for Algorithm~\hyperlink{algo:1}{1}.
\begin{enumerate}[(i)]
\item \emph{A priori stopping criterion}: Choose an a priori error $\varepsilon>0$. Setting the right hand side of \eqref{eq:EBB} equal to $\varepsilon$ defines a number of iterations $k_{\varepsilon}$ that has to be run in order to ensure an $\varepsilon$-close solution.
\item \emph{A posteriori stopping criterion}: Choose an a posteriori error $\varepsilon>0$. Choose the smoothing parameter $\nu(k_{\varepsilon})$ for $k_{\varepsilon}$ as defined above in the a priori stopping criterion. Fix a (small) number of iterations $\ell$ that are run using Algorithm~\hyperlink{algo:1}{1}. Compute the a posteriori error $\mathrm{e}_{\ell}:= F(\hat{\lambda}) + G(\hat{\lambda}) - \I{\hat{p}}{\rho}$ as given by Theorem~\ref{thm:errorCQdiscrete}. If $\mathrm{e}_{\ell}\leq \varepsilon$ terminate the algorithm otherwise continue with another $\ell$ iterations. Continue until the a posteriori error is below $\varepsilon$.
\end{enumerate}
\end{myremark}

\begin{myremark}[No input cost constraint \& numerical stability] \label{remark:comp:stablility}
In the absence of an input cost constraint (i.e., $s(\cdot)=0$), we can derive a closed form expression for $G_{\nu}(\lambda)$ and its gradient. 
Using \eqref{eq:optimizerP} we obtain
\begin{align}
G_{\nu}(\lambda) &= \nu \log \left( \sum_{i=1}^N 2^{\tfrac{1}{\nu}\left( b(\lambda) - a \right)_i} \right) - \nu \log N \nonumber \\
\frac{\partial G_{\nu}(\lambda)}{\partial \lambda_{m,\ell}} &= \left(\nabla G_{\nu}(\lambda) \right)_{m,\ell}= \frac{1}{S(\lambda)} \sum_{i=1}^N 2^{\tfrac{1}{\nu} \left(b(\lambda) - a  \right)_i} (\rho_i)_{\ell,m}, \label{eq:CQ:grad:G}
\end{align}
where $S(\lambda)= \sum_{i=1}^N 2^{\tfrac{1}{\nu} \left(b(\lambda)  - a  \right)_i} $ and we have used $\tfrac{\partial \Tr{\rho \lambda }}{\partial \lambda_{m,\ell}}=\rho_{\ell,m}$ \cite[Prop.~10.7.2]{ref:Ber-09}. Recall that as introduced above we consider $a,b(\lambda) \in \R^N$, such that $a_i = \Hh{\rho_i}$ and $b_i(\lambda) = \Tr{\rho_i \lambda }$.  In order to achieve an $\varepsilon$-precise solution the smoothing factor $\nu$ has to be chosen in the order of $\varepsilon$, according to Theorem~\ref{thm:errorCQdiscrete}. A straightforward computation of $\nabla G_{\nu}(\lambda)$ via \eqref{eq:CQ:grad:G} for a small enough $\nu$ is numerically difficult. In the light of  \cite[p.~148]{nesterov05}, we present a numerically stable technique for computing $\nabla G_{\nu}(\lambda)$. By considering the functions $\R^{M}\ni \lambda \mapsto f(\lambda)=b(\lambda) - a$ and $\R^{N}\ni x \mapsto R_{\nu}(x)=\nu \log \left( \sum_{i=1}^{N}2^{\tfrac{x_{i}}{\nu}} \right)\in\R$ it is clear that $\nabla_{\lambda} R_{\nu}(f(\lambda))=\nabla G_{\nu}(\lambda)$. The basic idea is to define $\tilde{f}(\lambda):=\max_{1\leq i \leq N} f_{i}(\lambda)$ and then consider a function $g:\R^{M}\to \R^{N}$ given by $g_{i}(\lambda)=f_{i}(\lambda)-\tilde{f}(\lambda)$, such that all components of $g(\lambda)$ are non-positive. One can show that
\begin{equation*} 
\nabla_{\lambda} R_{\nu}(f(\lambda))=\nabla_{\lambda} R_{\nu}(g(\lambda))+ \nabla \tilde{f}(\lambda),
\end{equation*}
where the term on the right-hand side can be computed with a small numerical error. 
\end{myremark}

\begin{myremark}[Complexity]
Recall that a singular value decomposition of a matrix $A \in \mathbb{C}^{M \times M}$ can be done with complexity $O(M^3)$ \cite[Lect.~31]{trefethen97}.
A closer look at Algorithm~\hyperlink{algo:1}{1} reveals that the complexity of a single iteration is $O(M^2 (N\vee M))$. Thus by Theorem~\ref{thm:errorCQdiscrete}, the complexity to compute an $\varepsilon$-close solution using Algorithm~\hyperlink{algo:1}{1} is $O\big(\tfrac{(N \vee M) M^3 \log(N)^{1/2}}{\varepsilon}\big)$.
\end{myremark}

%%%%%%%%%%%%%%%%%%%%%%%%%%%%%%%%%
\subsection{Simulation results} \label{sec:simulation:results}
This section presents two examples to illustrate the performance of the approximation method introduced above. We consider two channels which both exhibit an analytical closed form solution for the capacity. The first example is a channel that satisfies Assumption~\ref{ass:channel:CQ}, whereas the second one does not.
To save computation time we have chosen two channels with a binary input alphabet.
All the simulations in this section are performed on a 2.3 GHz Intel Core i7 processor with 8 GB RAM with Matlab.
\begin{myex} \label{ex:one}
Consider a cq channel $\W$ with a binary input alphabet, i.e., $\mathcal{X}=\{0,1\}$, such that $0\mapsto \rho_0=\tfrac{1}{2}\left(\begin{smallmatrix}
1&0\\0&1
\end{smallmatrix}\right)$ and $1\mapsto \rho_1=\tfrac{1}{4} \left( \begin{smallmatrix}
2&1\\1&2
\end{smallmatrix}\right)$.
A simple calculation leads to an analytical expression of the capacity $C_{\mathsf{cq}}(\W)=\Hb\!\left(\tfrac{16}{43}\right)-\tfrac{21}{43}-\tfrac{22}{43}\,\Hb\!\left(\tfrac{1}{4} \right)\approx 0.048821003204$.
Note that $\spec(\rho_0)=\{\tfrac{1}{2},\tfrac{1}{2}\}$ and $\spec(\rho_1)=\{\tfrac{1}{4},\tfrac{3}{4}\}$, which gives $\gamma:=\min_{x\in \mathcal{X}} \min \spec(\rho_x)=\tfrac{1}{4}$. As predicted by Theorem~\ref{thm:errorCQdiscrete}, Algorithm~\hyperlink{algo:1}{1} has the following a priori error bound
\begin{equation*}
0\leq C_{\mathsf{cq},\textnormal{UB}}(\W)-C_{\mathsf{cq},\textnormal{LB}}(\W) \leq \frac{4 \sqrt{2 D_1 D_2}}{k+1} + \frac{16D_1}{(k+1)^2},
\end{equation*}
where $k$ denotes the number of iterations and $D_1=\tfrac{1}{2}(M \log\left(\gamma^{-1} \vee \e \right))^2 = 8$ and $D_2=\log N = 1$. Table~\ref{tab:ex1} shows the performance of Algorithm~\hyperlink{algo:1}{1} for this example.

 \begin{table}[!htb]
\centering 
\caption{Example~\ref{ex:one} with  $D_1=8$ and $D_2=1$ }
\label{tab:ex1}

  \begin{tabular}{c @{\hskip 3mm} | @{\hskip 3mm} c @{\hskip 2mm} c @{\hskip 2mm} c @{\hskip 2mm} c}
 A priori error & $10^{-1}$ & $10^{-2}$ & $10^{-3}$ & $10^{-4}$ \\
 $C_{\mathsf{cq},\textnormal{UB}}(\W)$ & 0.049\,841\,307\,3& 0.048\,972\,899\,3 & 0.048\,837\,263\,6 & 0.048\,822\,641\,1 \\
 $C_{\mathsf{cq},\textnormal{LB}}(\W)$ & 0.048\,820\,977\,3 & 0.048\,820\,982\,7 & 0.048\,821\,003\,3 & 0.048\,821\,003\,6 \\
 A posteriori error & 1.00$\cdot 10^{-3}$ & 1.52$\cdot 10^{-4}$ & 1.63$\cdot 10^{-5}$& 1.64$\cdot 10^{-6}$  \\
 Time [s] & 0.05 & 0.8 & 4.6 & 47  \\
 Iterations & 167 & 1607 & 16\,007 & 160\,007 
  \end{tabular}
\end{table}

\end{myex}

\begin{myex} \label{ex:two}
Consider a cq channel $\W$ with a binary input alphabet, i.e., $\mathcal{X}=\{0,1\}$, such that $0 \mapsto \rho_0=\ket{0}\!\!\bra{0} =\left(\begin{smallmatrix}
1&0\\0&0
\end{smallmatrix}\right)$ and $1 \mapsto \rho_1 = \ket{+}\!\!\bra{+}=\tfrac{1}{2}\left( \begin{smallmatrix}
1 & 1\\1 & 1
\end{smallmatrix} \right)$. The capacity of this channel can be computed to be $C_{\mathsf{cq}}(\W)=\Hb\left( \tfrac{1}{2}(1+\tfrac{1}{\sqrt{2}})\right) \approx 0.600876$. Note that $\spec(\rho_0)=\spec(\rho_1)=\{0,1\}$ which violates Assumption~\ref{ass:channel:CQ}. As mentioned above a possible solution is to perturb the cq channel by some small parameter $\varepsilon \in(0,\tfrac{1}{2})$ such that Assumption~\ref{ass:channel:CQ} is valid. We consider the perturbed cq channel $\tilde{\W}$ that maps $0\mapsto \tilde \rho_0 = \left( \begin{smallmatrix}
1-\varepsilon & 0 \\ 0 & \varepsilon
\end{smallmatrix} \right)$ and $1\mapsto \tilde \rho_1 = \left( \begin{smallmatrix}
\tfrac{1}{2}+\varepsilon & \tfrac{1}{2}-\varepsilon \\ \tfrac{1}{2}-\varepsilon & \tfrac{1}{2}-\varepsilon
\end{smallmatrix} \right)$.
By continuity of the von Neumann entropy \cite{fannes73,audenaert07}, when choosing $\varepsilon$ being small we only change the value of the capacity by a small amount. More precisely, let us consider $\varepsilon = 10^{-10}$. A simple calculation gives 
\begin{equation*}
|C_{\mathsf{cq}}(\W)-C_{\mathsf{cq}}(\tilde \W)| \leq 2.53474\cdot10^{-9}.
\end{equation*}
Using the triangle inequality and Theorem~\ref{thm:errorCQdiscrete}, we can bound the a priori error of Algorithm~\hyperlink{algo:1}{1} as
\begin{align*}
|C_{\mathsf{cq},\textnormal{UB}}(\tilde \W)-C_{\mathsf{cq}}(\W)| &\leq |C_{\mathsf{cq},\textnormal{UB}}(\W)-C_{\mathsf{cq}}(\tilde \W)| + |C_{\mathsf{cq}}(\tilde \W)-C_{\mathsf{cq}}(\W)|\\
&\leq \frac{4 \sqrt{2 D_1 D_2}}{k+1} + \frac{16D_1}{(k+1)^2} +  2.53474\cdot10^{-9}, 
\end{align*}
where $k$ denotes the number of iterations and $D_1=\tfrac{1}{2}(M \log\left(\gamma^{-1} \vee \e \right))^2 \approx 2207.04$ and $D_2=\log N =1$. The a posteriori error is given by $C_{\mathsf{cq},\textnormal{UB}}(\tilde \W)-C_{\mathsf{cq},\textnormal{LB}}(\tilde \W) + 2.53474\cdot10^{-9}$.
 \begin{table}[!htb]
\centering 
\caption{Example~\ref{ex:two} with  $D_1\approx 2207.04$ and $D_2=1$ using a perturbation parameter $\varepsilon=10^{-10}$. }
\label{tab:ex2}

  \begin{tabular}{c @{\hskip 3mm} | @{\hskip 3mm} c @{\hskip 2mm} c @{\hskip 2mm} c @{\hskip 2mm} c }
 A priori error  &   1 & $10^{-1}$ & $10^{-2}$  \\
  $C_{\mathsf{cq},\textnormal{UB}}(\tilde \W)$ & 0.600\,876\,033\,385\,197& 0.600\,876\,033\,316\,571 & 0.600\,876\,033\,316\,571  \\
 $C_{\mathsf{cq},\textnormal{LB}}(\tilde \W)$ & 0.600\,876\,033\,160\,937 & 0.600\,876\,033\,315\,310 & 0.600\,876\,033\,316\,571 \\
  A posteriori error  & 2.54$\cdot10^{-9}$ & 2.53$\cdot10^{-9}$ & 2.53$\cdot10^{-9}$  \\
 Time [s] & 0.1 & 0.8 & 7.9  \\
 Iterations  &  181 &1392 &13\,353 
  \end{tabular}
\end{table}
\end{myex}

%%%%%%%%%%%%%%%%%%%%%%%%%%%%%%%%%%
%%%%%%%%%%%%%%%%%%%%%%%%%%%%%%%%%%
\section{Capacity of a Continuous-Input Classical-Quantum Channel} \label{sec:cq:contInput}
In this section we generalize the approach introduced in Section~\ref{sec:cq:channel} to cq channels having a continuous bounded input alphabet and a finite dimensional output. There are two major challenges compared to the discrete input alphabet setup treated in Section~\ref{sec:cq:channel}. The first difficulty is that the differential entropy is in general not bounded. This makes the smoothing step more difficult and in particular complicates the task of proving an a priori error bound. A second difficulty in the continuous input alphabet setting is the evaluation of the gradient of the Lagrange dual function which involves an integration that can only be computed approximately. Thus the robustness of the iterative protocol needs to be analyzed.\footnote{This point will become especially important in Section~\ref{sec:Holevo}.}

Within this section, we consider cq channels of the form $\W:\mathcal{P}(R )\to \mathcal{D}(\mathcal{H})$, $x\mapsto \rho_x$, where $R$ is a compact subset of the non-negative real line, $\mathcal{P}(R)$ denotes the space of all probability distributions on $R$ and $M:=\dim \mathcal{H} < \infty$. In addition we consider an input constraint of the form\footnote{The extension to multiple average input cost constraints is straightforward.}
\begin{equation}
\inprod{p}{s} = \int_{R} s(x) \,p(\!\drv x) \leq S, \label{eq:inputConstra}
\end{equation}
for $s\in \Lp{\infty}(R)$ and $p\in \mathcal{P}(R)$.
To properly state a formula describing the capacity of the channel $\W$ with an input constraint \eqref{eq:inputConstra}, we need to assume certain regularity conditions on the function $s$. Let $\{\ket{e_i} \}$ be an orthonormal basis in the Hilbert space $\mathcal{H}$ and $\{f_i \}$ a sequence of real numbers bounded from below. The expression
\begin{equation}
K \ket{\psi} = \sum_i f_i  \ket{e_i} \braket{e_i}{\psi}, \label{eq:defK}
\end{equation}
defines a self adjoint operator $K$ on the dense domain 
\begin{equation}
\mathsf{D}(K)=\left \lbrace \psi \in \mathcal{H} : \, \sum_i \left| f_i \right|^2  \left| \braket{e_i}{\psi} \right|^2 < \infty \right \rbrace, \label{eq:denseDomain}
\end{equation}
where $f_i$ are the eigenvalues and $\ket{e_i}$ the corresponding eigenvectors.

\begin{mydef}[{\cite[Def.~11.3]{holevo_book}}]
An operator defined on the domain \eqref{eq:denseDomain} by the formula \eqref{eq:defK} is called \emph{an operator of type $\mathcal{K}$}.
\end{mydef}

\begin{myass}[Assumptions on the input constraint function] \label{ass:continuous}
In the reminder of this section we impose the following assumption on the input constraint function $s:R \to \R$.
\begin{enumerate}[(i)]
\item There exists a self-adjoint operator $K$ of type $\mathcal{K}$ satisfying $\Tr{\exp(-\theta K)}<\infty$ for all $\theta >0$ such that $s(x) \geq \Tr{\rho_x K}$, $x \in R$. \label{item:i}
\item $s$ is lower semicontinuous and for all $k\in \Rp$ the set $\{x: s(x) \leq k \}\subset R$ is compact. \label{item:ii}
\end{enumerate}
\end{myass}
Assumption~\ref{ass:continuous}\eqref{item:i} implies that $\sup_{p \in \mathcal{P}(R)} \Hh{\int_R \rho_x \,p(\!\drv x)}<\infty$ and Assumption~\ref{ass:continuous}\eqref{item:ii} ensures that the set $\{p \in \mathcal{P}(R):\inprod{p}{s}\leq S \}$ is weakly compact \cite[Lem.~11.14]{holevo_book}.
Under Assumption~\ref{ass:continuous}, the capacity of channel $\W$ is given by \cite[Thm.~11.15]{holevo_book}
\begin{align} \label{eq:irrel}
 C_{\mathsf{cq},S}(\W)=\left\{ \begin{array}{ll}
	\underset{p}{\max} 		&\I{p}{\rho}:= \Hh{\int_{R}  \rho_x \,p(\!\drv x)} - \inprod{p}{\Hh{\rho}} \\
			\st 					&\inprod{p}{s} \leq S \\
			                                & p \in\mathcal{P}(R).
	\end{array}\right.
\end{align}
\begin{myprop} \label{lem:density:dense}
The optimization problem \eqref{eq:irrel} is equivalent to
\begin{equation}
C_{\mathsf{cq},S}(\W) = \sup\limits_{p\in\dens} \left\{ \I{p}{\rho} \ : \ \inprod{p}{s} \leq S \right\},\label{eq:cqContCapacity}
\end{equation}
where $\dens$ is the space of probability densities with support $R$, i.e., $\dens:=\{ f\in\Lp{1}(R) \, : \, f\geq 0, \, \int_{R}f(x)\drv x = 1 \}$.
\end{myprop}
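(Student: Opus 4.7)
The plan is to prove the two inequalities between $\sup_{p\in\mathfrak{D}(R)}\{I(p,\rho) : \inprod{p}{s}\leq S\}$ and $C_{\mathsf{cq},S}(\W)$ separately. The direction $\sup_{p\in\mathfrak{D}(R)}\{\cdots\}\leq C_{\mathsf{cq},S}(\W)$ is immediate, since every density $f\in\mathfrak{D}(R)$ induces a probability measure $f\,\drv x\in\mathcal{P}(R)$ with the same objective and constraint values. For the reverse direction, the goal is to show that any feasible $p^{\ast}\in\mathcal{P}(R)$ can be approximated by a sequence of feasible densities $\{p_n\}\subset\mathfrak{D}(R)$ with $I(p_n,\rho)\to I(p^{\ast},\rho)$, which forces the density supremum to be at least $I(p^{\ast},\rho)$, and upon taking the supremum over all feasible $p^{\ast}$, at least $C_{\mathsf{cq},S}(\W)$.

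For the construction I would convolve $p^{\ast}$ (extended to $\R$ by zero) with a smooth approximate identity $\phi_{1/n}$ and then apply a small linear retraction that returns the support back into the compact set $R$; the resulting measures $p_n\in\mathfrak{D}(R)$ are smooth densities converging weakly to $p^{\ast}$. Weak continuity of the objective $I(\cdot,\rho)$ is the easier ingredient: under continuity of the channel map $x\mapsto\rho_x$ (a natural cq-channel regularity condition implicit in $s(x)\geq\Tr{\rho_x K}$ from Assumption~\ref{ass:continuous}\eqref{item:i}), the finite-dimensional output yields norm convergence $\int_R\rho_x\,p_n(\drv x)\to\int_R\rho_x\,p^{\ast}(\drv x)$, and Fannes-Audenaert continuity of the von Neumann entropy on $\mathcal{D}(\mathcal{H})$ \cite{fannes73,audenaert07} transports this into convergence of the first term of $I$; the second term $\int_R H(\rho_x)\,p(\drv x)$ is linear in $p$ with bounded continuous integrand, hence weakly continuous as well.

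The main obstacle is preserving the input-cost constraint, because $s$ is only lower semicontinuous: weak convergence only yields $\liminf_n \inprod{p_n}{s}\geq \inprod{p^{\ast}}{s}$, which is the wrong direction for $\inprod{p_n}{s}\leq S$. To circumvent this I would build strict slack into the constraint by mixing: fix $\mu_0\in\mathfrak{D}(R)$ concentrated near a minimizer $x_{\min}$ of $s$ (existing by lower semicontinuity on the compact set $R$) and set $p^{\ast}_\delta:=(1-\delta)p^{\ast}+\delta\mu_0$, yielding $\inprod{p^{\ast}_\delta}{s}\leq(1-\delta)S+\delta s_{\min}<S$ in the non-degenerate case $s_{\min}<S$ (the degenerate case $s_{\min}=S$ forces feasible measures to concentrate at $x_{\min}$ and is trivial). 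Mollifying $p^{\ast}_\delta$ perturbs $\inprod{\cdot}{s}$ by a controllable amount, which one bounds by sandwiching $s$ between continuous lower approximations $s_m\uparrow s$ obtained from lower semicontinuity together with uniform boundedness of $s$; the slack $\delta(S-s_{\min})$ then absorbs the residual error and gives $\inprod{p_n^\delta}{s}\leq S$ for $n$ sufficiently large. A standard diagonal argument with $\delta(n)\to 0$ then produces the desired feasible density sequence $\{p_n^{\delta(n)}\}$, completing the proof.
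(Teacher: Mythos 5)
Your overall strategy is the natural one and, as far as one can tell from the paper's one-line proof (which defers to \cite[Prop.~3.4]{TobiasSutter14} and the lower semicontinuity of the von Neumann entropy), it is essentially the same route the authors have in mind: approximate a feasible $p^*\in\mathcal{P}(R)$ weakly by feasible densities and invoke (semi)continuity of the Holevo information, with the constraint-feasibility of the approximants handled by a small-slack perturbation. Your explicit slack construction (mixing toward an approximate minimizer of $s$ and absorbing the mollification error) is a careful rendering of what \cite{TobiasSutter14} buries in the cited proposition, and your use of Fannes--Audenaert continuity in place of the paper's lower semicontinuity citation is equivalent in the finite-dimensional setting $\dim\mathcal{H}=M<\infty$.

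There is, however, one genuine gap: the assertion that continuity of $x\mapsto\rho_x$ is ``implicit'' in Assumption~\ref{ass:continuous}\eqref{item:i} is false. That assumption is a moment-type domination condition, $s(x)\geq\Tr{\rho_x K}$, which constrains the range of the family $\{\rho_x\}$ but imposes no regularity on the map $x\mapsto\rho_x$ whatsoever (take $s\equiv\mathrm{const}$ large enough to see that any measurable, arbitrarily discontinuous family satisfies it). Some continuity of $x\mapsto\rho_x$ is in fact indispensable for the proposition: if $\rho_x$ equals a fixed state $\sigma_2$ for a.e.\ $x$ but takes a different value $\sigma_1$ at a single point $x_0$, the Dirac mass at $x_0$ can strictly beat every density, so $\sup_{\mathcal{P}(R)}>\sup_{\dens}$. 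The paper leaves this regularity tacit in Section~\ref{sec:cq:contInput} (it only becomes explicit later, as Assumption~\ref{a:constraint_fct:s}\eqref{a:constraint_fct:s:ii}); in the Holevo-capacity application of Section~\ref{sec:Holevo} it holds automatically because $\rho_x=\Phi(\Eu(x))$ with a smooth universal encoder $\Eu$. You should state the continuity of $x\mapsto\rho_x$ as an explicit hypothesis rather than try to extract it from Assumption~\ref{ass:continuous}\eqref{item:i}; once that is done, the mollification, the Fannes--Audenaert step, and the slack argument go through as you describe.
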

\begin{proof} 
The proof follows by the proof of \cite[Prop.~3.4]{TobiasSutter14} and the lower semicontinuity of the von Neumann entropy \cite[Thm.~11.6]{holevo_book}.
\end{proof}

We consider the pair of vector spaces
$(\Lp{1}(R),\Lp{\infty}(R))$ along with the bilinear form
\begin{align*}
\inprod{f}{g}:=\int_{R}f(x)g(x)\drv x.
\end{align*}
In the light of \cite[Thm.~243G]{ref:fremlin-03} this is a dual pair of vector spaces; we refer to \cite[Sec.~3]{anderson87} for the details of the definition of dual pairs of vector spaces. 
Considering the Frobenius inner product as a bilinear form on the dual pair $(\Hop,\Hop)$, we define the linear operator $\WW: \Hop\to \Lp{\infty}(R)$ and its adjoint operator $\WW^{\star}:\Lp{1}(R)\to\Hop$ by
%
%Let $\MM(R)$ be the space of finite signed measures on the Borel $\sigma$-algebra $\Borelsigalg{R }$ and let $\mathcal{C}(R)$ denote the space of continuous functions on $R$. We define the dual pair of vector spaces
%$(\MM(R),\mathcal{C}(R))$ along with the bilinear form
%\begin{align*}
%\inprod{\mu}{u}:=\int_{R}u(x)\drv \mu(x).
%\end{align*}
%Consider the standard inner product as a bilinear form on the dual pair $(\Hop,\Hop)$, we define the linear operator $\WW: \Hop \to \mathcal{C}(R)$ and its adjoint operator $\WW^{\star}:\MM(R)\to\Hop$, given by
\begin{align*}
\WW \lambda (x) := \Tr{\rho_{x}\lambda}, \qquad \quad \WW^{\star}p := \int_{R}\rho_{x}\,p(\!\drv x).
\end{align*}
We next derive the dual problem of \eqref{eq:cqContCapacity} and show how to solve that efficiently. To this end, we introduce an additional decision variable $\sigma:=\WW^{\star}p$ and reformulate problem \eqref{eq:cqContCapacity}.
\begin{mylem}
Let $\mathcal{F}:= \arg \max \limits_{p\in\dens} \I{p}{\rho}$ and $S_{\max}:= \min \limits_{p\in\dens}\inprod{p}{s}$. If $S\geq S_{\max}$ the optimization problem \eqref{eq:cqContCapacity} has the same optimal value as 
\begin{align*}
\mathsf{P}:\left\{ \begin{array}{ll}
	\underset{p,\sigma}{\sup} 		& \Hh{\sigma} -  \inprod{p}{H(\rho)} \\
			\st					& \sigma=\WW^{\star}p\\
			                                & p \in\dens, \, \sigma \in \mathcal{D}(\mathcal{H}).
	\end{array}\right.
\end{align*}
If $S<S_{\max}$ the optimization problem \eqref{eq:cqContCapacity} has the same optimal value
\begin{align}\label{eq:cqContCapacityPrimal}
\mathsf{P}:\left\{ \begin{array}{ll}
	\underset{p,\sigma}{\sup} 		& \Hh{\sigma} -  \inprod{p}{H(\rho)} \\
			\st					& \sigma=\WW^{\star}p\\
			                                		& \inprod{p}{s} = S\\
			                                		& p \in\dens,  \sigma \in \mathcal{D}(\mathcal{H}).
	\end{array}\right.
\end{align}
\end{mylem}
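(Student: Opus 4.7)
The plan is to mirror the strategy of the proof of Lemma~\ref{lem:inputConstraint} (given in Appendix~\ref{app:inputCost}), adapted to the continuous-input infinite-dimensional setting. The argument decomposes into two independent reformulations: introducing the auxiliary variable $\sigma = \WW^{\star}p$, and handling the input-cost constraint.

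First, I would observe that adding $\sigma$ is cost-free. For any $p \in \dens$, the operator $\WW^{\star}p = \int_R \rho_x\, p(\drv x)$ is a (continuous) convex combination of density operators, hence itself in $\mathcal{D}(\mathcal{H})$. So for every feasible $p$ in \eqref{eq:cqContCapacity}, setting $\sigma := \WW^{\star}p$ produces a feasible $(p,\sigma)$ pair with the same objective value $\Hh{\sigma} - \inprod{p}{\Hh{\rho}} = \Hh{\WW^{\star}p} - \inprod{p}{\Hh{\rho}} = \I{p}{\rho}$; the converse (dropping $\sigma$) is immediate. Thus the constraint $\sigma \in \mathcal{D}(\mathcal{H})$ is redundant.

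Next I would treat the two cases for the input-cost constraint separately. The case $S \geq S_{\max}$ follows directly from the definition of $S_{\max}$: there exists $p^\star \in \mathcal{F}$ with $\inprod{p^\star}{s} \leq S_{\max} \leq S$, so $p^\star$ is feasible for the $S$-constrained problem and realizes the unconstrained supremum; consequently the inequality constraint $\inprod{p}{s} \leq S$ may be dropped without changing the optimal value. For the case $S < S_{\max}$, I would first note that $p \mapsto \I{p}{\rho} = \Hh{\WW^{\star}p} - \inprod{p}{\Hh{\rho}}$ is concave on $\dens$ (the first term is concave by concavity of the von Neumann entropy combined with linearity of $\WW^{\star}$; the second is affine in $p$). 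The claim is that the optimum must satisfy $\inprod{p}{s} = S$: suppose for contradiction that an optimizer $\hat{p}$ satisfies $\inprod{\hat{p}}{s} < S$. Then for any $p' \in \dens$, the convex combination $p_t := (1-t)\hat{p} + t p'$ satisfies $\inprod{p_t}{s} \leq S$ for all sufficiently small $t > 0$ (using $s \in \Lp{\infty}(R)$ to bound $\inprod{p'}{s}$), and by concavity
\begin{equation*}
\I{p_t}{\rho} \geq (1-t)\I{\hat{p}}{\rho} + t\, \I{p'}{\rho}.
\end{equation*}
Optimality of $\hat{p}$ gives $\I{p'}{\rho} \leq \I{\hat{p}}{\rho}$ for every $p' \in \dens$, whence $\hat{p} \in \mathcal{F}$ and therefore $\inprod{\hat{p}}{s} \geq S_{\max} > S$, a contradiction. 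Hence $\inprod{\hat{p}}{s} = S$ and the inequality can be replaced by equality.

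The main potential obstacle is the infinite-dimensional nature of $\dens \subset \Lp{1}(R)$, but the perturbation argument above uses only the convexity of $\dens$ and concavity of the objective, so no topological delicacies arise. Existence of an optimizer $\hat{p}$ for the $S$-constrained problem (needed to make the contradiction meaningful) is guaranteed by the weak compactness of $\{p \in \mathcal{P}(R): \inprod{p}{s} \leq S\}$ recorded after Assumption~\ref{ass:continuous} together with upper semicontinuity of $\I{\cdot}{\rho}$, so this is inherited from the underlying hypotheses rather than needing a separate argument.
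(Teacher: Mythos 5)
Your proof takes a genuinely different route from the paper for the case $S < S_{\max}$. The paper's one-line proof refers back to Appendix~\ref{app:inputCost}, where the strategy is to study the value function $S \mapsto C_{\mathsf{cq}}(S)$: prove it is concave, non-decreasing, and strictly increasing on $[0, S_{\max}]$, then observe that an optimizer with $\inprod{p^\star}{s} = \bar S < S$ would force $C_{\mathsf{cq}}(\bar S) = C_{\mathsf{cq}}(S)$, contradicting strict monotonicity. You instead perturb the optimizer $\hat p$ directly inside the decision space: concavity of $p \mapsto \I{p}{\rho}$ plus a small step toward an arbitrary $p'$ forces $\hat p$ to be an unconstrained maximizer, hence $\hat p \in \mathcal{F}$, hence $\inprod{\hat p}{s} \geq S_{\max} > S$, a contradiction. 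This is a valid alternative and arguably more direct, since it avoids establishing any monotonicity/concavity structure for the value function $S\mapsto C_{\mathsf{cq},S}(\W)$ as an intermediate step.

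One caveat is worth noting. You correctly flag that existence of an optimizer $\hat p$ is needed, but the resolution you invoke --- weak compactness of $\{p \in \mathcal{P}(R) : \inprod{p}{s} \leq S\}$ plus upper semicontinuity --- yields a maximizer in $\mathcal{P}(R)$, not necessarily in $\dens \subsetneq \mathcal{P}(R)$; this is precisely why \eqref{eq:cqContCapacity} is stated with $\sup$ rather than $\max$. To close this cleanly, one would either run the perturbation argument over $\mathcal{P}(R)$ using \eqref{eq:irrel} and transfer back via Proposition~\ref{lem:density:dense}, or replace $\hat p$ by a near-optimal sequence $p_n \in \dens$ and interpolate each $p_n$ toward some $p'\in\mathcal{F}$ to hit $\inprod{\cdot}{s}=S$ exactly, using concavity together with $\I{p'}{\rho}\geq\I{p_n}{\rho}$ to show the interpolants remain asymptotically optimal. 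That said, the paper's own deferral to the finite-alphabet argument (where existence is automatic) glosses over the identical subtlety, so this is a shared imprecision rather than a defect specific to your approach. You also tacitly read $S_{\max}$ as $\min_{p\in\mathcal{F}}\inprod{p}{s}$ rather than the $\min_{p\in\dens}$ printed in the statement; that is the sensible reading (the printed version would make $\{p : \inprod{p}{s}\leq S\}$ empty whenever $S<S_{\max}$) and matches the discrete Lemma~\ref{lem:inputConstraint}.
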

\begin{proof}
Follows by a similar argument as given in Appendix~\ref{app:inputCost} for the finite dimensional input setup.
\end{proof}

The Lagrange dual program to \eqref{eq:cqContCapacityPrimal} is given by
\begin{align}\label{eq:cqContCapacityDual}
 \mathsf{D}: \left\{ \begin{array}{ll}
	\underset{\lambda}{\inf} 		&G(\lambda) + F(\lambda) \\
			\st					& \lambda\in \Hop,
	\end{array}\right.
\end{align}
where $F, G:\Hop\to\R$ are given by
\begin{align*} 
	G(\lambda)= \left\{ \begin{array}{ll}
			\underset{p}{\sup} 		&\inprod{p}{\WW\lambda} - \inprod{p}{H(\rho)} \\
				\st				&\inprod{p}{s} = S \\
		   								& p\in \dens
	\end{array} \right.  \textnormal{and}
	\quad 
	F(\lambda)= \left\{ \begin{array}{ll}
			\underset{\sigma}{\max} 		&H(\sigma)-\Tr{\sigma \lambda  } \\
			\st 				& \sigma \in \mathcal{D}(\mathcal{H})
	\end{array}\right. .
\end{align*}
Note that $G(\lambda)$ is a (parametric) infinite dimensional linear program and $F(\lambda)$ is exactly of the same form as in Section~\ref{sec:cq:channel}. According to \eqref{eq:F} and \eqref{eq:dF} we thus have
\begin{equation} \label{eq:grad:F}
F(\lambda) = \log \left( \Tr{ 2^{- \lambda  }} \right) \quad \textnormal{and} \quad \nabla F(\lambda) = - \frac{2^{-\lambda }}{\Tr{2^{-\lambda  }}}.
\end{equation}
Note that by Proposition~\ref{prop:lipschitz:constant:nablaF}, $ \nabla F(\lambda)$ is Lipschitz continuous with respect to the Frobenius norm with Lipschitz constant $2$. 
\begin{mylem}
Strong duality holds between \eqref{eq:cqContCapacityPrimal} and \eqref{eq:cqContCapacityDual}.
\end{mylem}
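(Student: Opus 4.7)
The plan is to realize strong duality as a minimax interchange for the Lagrangian associated with the equality constraint $\sigma = \WW^{\star} p$. With multiplier $\lambda \in \Hop$, this Lagrangian reads
\[
L(p,\sigma,\lambda) = \Hh{\sigma} - \inprod{p}{H(\rho)} + \inprod{\lambda}{\WW^{\star} p - \sigma}_F.
\]
Restricting $(p,\sigma)$ to the convex feasible set $\mathsf{C} := \{(p,\sigma) \in \dens \times \mathcal{D}(\mathcal{H}) : \inprod{p}{s} = S\}$ and maximizing $L$ separately in $p$ and $\sigma$ (which decouples because the coupling term is bilinear while the rest is additive) yields exactly $F(\lambda) + G(\lambda)$. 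Thus $\mathsf{D}_\lambda = \sup_{(p,\sigma)\in\mathsf{C}} L(p,\sigma,\lambda)$, and weak duality $\mathsf{P} \leq \mathsf{D}$ is immediate.

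To close the gap, I would invoke Sion's minimax theorem (alternatively, an infinite-dimensional strong duality theorem on the dual pair $(\Lp{1}(R),\Lp{\infty}(R))$ set up in the preceding paragraphs, following \cite{anderson87}). The three hypotheses to verify are: (i) $L$ is affine in $\lambda$ and concave in $(p,\sigma)$ --- both immediate, the concavity in $\sigma$ coming from concavity of $H$; (ii) $\mathsf{C}$ is compact in the weak topology --- the $p$-marginal set $\{p\in\dens:\inprod{p}{s}=S\}$ is weakly compact in $\Lp{1}(R)$ by \cite[Lem.~11.14]{holevo_book} under Assumption~\ref{ass:continuous}\eqref{item:ii}, and $\mathcal{D}(\mathcal{H})$ is compact since $\dim\mathcal{H}<\infty$; (iii) for every fixed $\lambda$, $(p,\sigma)\mapsto L(p,\sigma,\lambda)$ is upper semicontinuous in the weak topology. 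The latter holds because $H(\sigma)$ is continuous on the finite-dimensional $\mathcal{D}(\mathcal{H})$, while the terms in $p$ pair the weak variable $p\in\Lp{1}(R)$ against the bounded $\Lp{\infty}(R)$ functions $H(\rho_\cdot)$ (bounded by $\log M$) and $\WW\lambda = \Tr{\rho_\cdot\lambda}$. Applying Sion's theorem gives
\[
\mathsf{P} = \sup_{(p,\sigma)\in\mathsf{C}} \inf_{\lambda\in\Hop} L(p,\sigma,\lambda) = \inf_{\lambda\in\Hop} \sup_{(p,\sigma)\in\mathsf{C}} L(p,\sigma,\lambda) = \mathsf{D},
\]
where the inner infimum equals $-\infty$ unless $\sigma = \WW^\star p$, recovering the primal feasibility constraint.

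The main obstacle is the weak compactness of $\mathsf{C}$ --- this is a genuinely infinite-dimensional issue with no analogue in Section~\ref{sec:cq:channel}, where the primal feasible set was automatically compact in $\R^N$ and strong duality followed from the standard finite-dimensional result \cite[Prop.~5.3.1]{ref:Bertsekas-09}. Here, compactness hinges on Assumption~\ref{ass:continuous}\eqref{item:ii}; note that the operator $\lambda$ is allowed to range over the unbounded set $\Hop$, which is harmless for Sion's theorem since compactness is required on only one side of the minimax.
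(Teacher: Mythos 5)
Your approach (Sion's minimax on the saddle function $L(p,\sigma,\lambda)$) is genuinely different from the paper's, which simply invokes an abstract Lagrangian strong-duality theorem for infinite-dimensional convex programs, namely \cite[Thm.~6]{mitter08}. That route sidesteps compactness entirely: Mitter's result rests on convexity and a regularity/constraint-qualification condition, not on weak compactness of the primal feasible set. Your route would be more self-contained if it worked, but it doesn't close as written, for the reason you yourself flag as the ``main obstacle''.

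The gap is in hypothesis (ii). You claim $\{p\in\dens:\inprod{p}{s}=S\}$ is weakly compact in $\Lp{1}(R)$, citing \cite[Lem.~11.14]{holevo_book}. That lemma concerns the set $\{p\in\mathcal{P}(R):\inprod{p}{s}\leq S\}$ of \emph{probability measures} under the topology of weak convergence of measures (the $\sigma(\mathcal{P}(R),\Cb(R))$ topology), which is an entirely different topology from the $\sigma(\Lp{1},\Lp{\infty})$ topology that you are implicitly committed to by pairing $p\in\Lp{1}(R)$ against $H(\rho_\cdot),\,\WW\lambda\in\Lp{\infty}(R)$. In the latter topology, $\dens$ is not weakly compact --- it is not even weakly closed. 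By the Dunford--Pettis criterion, relative weak compactness in $\Lp{1}$ is equivalent to uniform integrability, and the collection of all probability densities on a compact set $R$ is not uniformly integrable: a sequence concentrating mass on shrinking neighborhoods of a point has no subsequence converging in $\sigma(\Lp{1},\Lp{\infty})$ (its would-be limit is a Dirac mass, which lives in $\mathcal{P}(R)\setminus\dens$). Since $s\in\Lp{\infty}(R)$ is bounded (and, $R$ being compact, Assumption~\ref{ass:continuous}\eqref{item:ii} is essentially vacuous), the hyperplane constraint $\inprod{p}{s}=S$ does not rescue uniform integrability. Hence $\mathsf{C}$ is not weakly compact, and Sion's theorem --- which needs compactness on one side of the saddle, and certainly does not get it from the unbounded $\Hop$ on the other --- cannot be applied in the form you set up. A repair would require moving the minimax to $\mathcal{P}(R)$ with the weak topology of measures (where compactness is available via Prokhorov, and continuity of $p\mapsto\WW^\star p$ follows from Assumption~\ref{a:constraint_fct:s}\eqref{a:constraint_fct:s:ii}), and then descending to $\dens$ via Proposition~\ref{lem:density:dense}; but even then the equality constraint $\inprod{p}{s}=S$ is only lower-semicontinuously cut out under the stated assumption that $s$ is merely lsc, so the feasible set need not be weakly closed. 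This is precisely the kind of delicacy the paper avoids by deferring to \cite[Thm.~6]{mitter08}.
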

\begin{proof}
The lemma follows from the standard strong duality results of convex optimization, see \cite[Thm.~6]{mitter08}.
\end{proof}

In the remainder of this article we impose the following assumption on the cq channel.
\begin{myass}[Assumption on the cq channel] \label{ass:channel:CQcont}
$\gamma:=\min\limits_{x\in R} \min\spec \rho_{x} >0$
\end{myass}
\begin{mylem} \label{lem:compact:set:CQ:cts}
Under Assumption~\ref{ass:channel:CQcont}, the dual program \eqref{eq:cqContCapacityDual} is equivalent to 
\begin{align*}
\min\limits_{\lambda} \left\{ G(\lambda) + F(\lambda) \ : \  \lambda\in \Lambda \right\},
\end{align*}
where $\Lambda:= \left\{ \lambda\in \Hop \ : \ \norm{\lambda}_{F}\leq M \log\left(\gamma^{-1} \vee \e \right) \right\}$. 
\end{mylem}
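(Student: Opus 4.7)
The plan is to mirror the argument of Lemma~\ref{lem:compact:set:CQ} (whose discrete-case proof appears in Appendix~\ref{app:compact}), exploiting two structural features of the dual problem: a shift invariance of the objective, and an eigenvalue lower bound on the primal optimal state inherited from Assumption~\ref{ass:channel:CQcont}.

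First, I would verify that $\lambda\mapsto F(\lambda)+G(\lambda)$ is invariant under shifts of the form $\lambda\mapsto \lambda+cI$ with $c\in\R$. From \eqref{eq:grad:F} one reads off $F(\lambda+cI)=\log\Trs[2^{-\lambda-c}]=F(\lambda)-c$. For the $G$-term, note that $\WW(\lambda+cI)(x)=\Trs[\rho_x(\lambda+cI)]=\WW\lambda(x)+c$ since $\rho_x\in\mathcal{D}(\mathcal{H})$; using $\inprod{p}{1}=1$ for any $p\in\dens$ gives $G(\lambda+cI)=G(\lambda)+c$, and the two shifts cancel.

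Second, by the preceding strong-duality lemma together with attainment of the primal supremum (which follows from the weak compactness of the feasible set under Assumption~\ref{ass:continuous}), there exist a primal optimizer $p^{\star}$ and a dual optimizer $\lambda^{\star}$. The unique maximizer of $F(\lambda^{\star})$ in \eqref{eq:GandF} (cf.\ Lemma~\ref{lem:jaynes} and \eqref{eq:grad:F}) equals $-\nabla F(\lambda^\star)=2^{-\lambda^{\star}}/\Trs[2^{-\lambda^{\star}}]$, while the primal-dual coupling $\sigma=\WW^{\star}p$ forces this same $\sigma^{\star}$ to equal $\WW^{\star}p^{\star}=\int_{R}\rho_x\,p^{\star}(\!\drv x)$.

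Third, I would exploit Assumption~\ref{ass:channel:CQcont}: since $\rho_x\succcurlyeq\gamma\,\mathbf{1}$ pointwise in $x\in R$, any mixture satisfies $\sigma^{\star}=\int_R \rho_x\,p^{\star}(\!\drv x)\succcurlyeq\gamma\,\mathbf{1}$. Let $\mu_1,\dots,\mu_M$ be the eigenvalues of $\lambda^{\star}$ and set $Z:=\Trs[2^{-\lambda^{\star}}]=\sum_j 2^{-\mu_j}$. The bound $\sigma^{\star}\succcurlyeq\gamma\,\mathbf{1}$ then reads $2^{-\mu_i}/Z\geq\gamma$, i.e.
\begin{equation*}
0\;\leq\;\mu_i+\log Z\;\leq\;\log(\gamma^{-1}) \qquad\text{for all }i=1,\dots,M,
\end{equation*}
where the lower bound uses $2^{-\mu_i}/Z\leq 1$.

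Finally, I would exploit the shift invariance to replace $\lambda^{\star}$ by $\tilde\lambda:=\lambda^{\star}+(\log Z)\,\mathbf{1}$, which leaves $F+G$ unchanged and whose eigenvalues all lie in $[0,\log(\gamma^{-1})]$. Hence
\begin{equation*}
\norm{\tilde\lambda}_{F}\;=\;\Big(\sum_{i=1}^{M}(\mu_i+\log Z)^{2}\Big)^{1/2}\;\leq\;\sqrt{M}\,\log(\gamma^{-1})\;\leq\;M\log(\gamma^{-1}\vee\e),
\end{equation*}
so the restriction $\lambda\in\Lambda$ does not change the optimal value. I expect the main obstacle to lie in step two: guaranteeing attainment of the primal supremum and establishing the pointwise stationarity relation $\sigma^{\star}=2^{-\lambda^{\star}}/\Trs[2^{-\lambda^{\star}}]$ rigorously in the infinite-dimensional setting; this should follow from the Mitter strong-duality framework cited above, combined with the strict concavity of $\sigma\mapsto H(\sigma)$.
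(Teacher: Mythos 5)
Your proof is a genuinely different and more elementary argument than the one the paper gives. The paper's proof is a direct extension of the penalization argument in Appendix~\ref{app:compact} for the discrete case: one relaxes the coupling constraint $\sigma=\WW^{\star}p$ to $\norm{\WW^{\star}p-\sigma}_{\opnorm}\leq\varepsilon$ at a price $-\beta\varepsilon$, dualizes to a problem constrained to a trace-norm ball of radius $\beta$, and then uses a second-order Taylor expansion of $\Hh{\cdot}$ (where Assumption~\ref{ass:channel:CQcont} supplies the curvature bound) to show that for $\beta>M\bigl(\log(\gamma^{-1})\vee\tfrac{1}{\ln 2}\bigr)$ the optimal $\varepsilon^{\star}$ is zero, so the ball constraint is inactive, and then passes from $\norm{\cdot}_{\trnorm}$ to $\norm{\cdot}_{F}$. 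You instead exploit the invariance of $F+G$ under $\lambda\mapsto\lambda+c\mathbf{1}$ and an eigenvalue bound $\sigma^{\star}\succcurlyeq\gamma\mathbf{1}$ obtained from the primal-dual stationarity relation. Your shift-invariance and eigenvalue computations are correct, and if the argument goes through it actually yields the sharper radius $\sqrt{M}\,\log(\gamma^{-1})$ (one could even center the spectrum to shave off another factor of two).

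However, there is a genuine gap, and it is the one you flag. Your argument begins ``there exist a primal optimizer $p^{\star}$ and a dual optimizer $\lambda^{\star}$.'' Dual attainment over the unbounded set $\Hop$ is not available a priori in this infinite-dimensional setting: Mitter's strong-duality theorem (as cited) asserts zero duality gap, not attainment of the dual infimum, and the Bertsekas proposition used in the discrete case to get a nonempty dual-optimal set is a finite-dimensional result. In fact, restriction to the compact set $\Lambda$ is precisely the mechanism that would deliver a dual minimizer via Weierstrass, so invoking a free dual minimizer $\lambda^{\star}$ in order to justify the restriction is at risk of circularity. The paper's penalization argument avoids this entirely: it works only with the perturbation function of the primal, shows that $\min_{\norm{\lambda}_{\trnorm}\leq\beta}\bigl(F+G\bigr)$ (which exists by compactness) equals the primal value, and then concludes via weak duality that it also equals $\inf_{\Hop}(F+G)$ — without ever presupposing an unconstrained dual minimizer. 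To close your gap you would need either to prove dual attainment independently (for instance by verifying an infinite-dimensional constraint qualification for the affine coupling $\sigma=\WW^{\star}p$), or to adapt the shift-and-bound argument to $\varepsilon$-optimal $\lambda$'s by establishing an approximate version of the stationarity relation $\sigma^{\star}=2^{-\lambda^{\star}}/\Trs[2^{-\lambda^{\star}}]$ that is stable along a minimizing sequence.
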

\begin{proof}
The proof is a direct extension of the one for Lemma~\ref{lem:compact:set:CQ}.
\end{proof}
As a preliminary result, consider the following entropy maximization problem that exhibits an analytical solution
\begin{equation} \label{opt:cover}
 	\left\{ \begin{array}{lll}
			&\underset{p}{\max} 		&h(p) + \inprod{p}{c} \\
			&\st			&\inprod{p}{s} = S\\
			& 					& p\in \dens.
	\end{array} \right.
\end{equation}
\begin{mylem}[{Entropy maximization \cite[Thm.~12.1.1]{cover}}]  \label{lem:coverCont}
Let $p^{\star}(x) =2^{\mu_1 + c(x) + \mu_2 s(x)}$, $x\in R$ where $\mu_1$ and $\mu_2$ are chosen such that $p^{\star}$ satisfies the constraints in \eqref{opt:cover}. Then $p^{\star}$ uniquely solves \eqref{opt:cover}.
\end{mylem}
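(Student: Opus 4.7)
The plan is a standard Gibbs-type argument: rewrite the objective of \eqref{opt:cover} in terms of a relative entropy to $p^{\star}$ plus a constant, and then invoke non-negativity of the KL divergence. The candidate $p^{\star}$ is feasible by construction (the coefficients $\mu_1,\mu_2$ are chosen precisely so that $p^{\star}\in\dens$ and $\inprod{p^{\star}}{s}=S$), so what remains is to show it maximizes.

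The key identity is the following. For any feasible $p\in\dens$ with $\inprod{p}{s}=S$, substitute $\log p^{\star}(x) = \mu_1 + c(x) + \mu_2 s(x)$ (base-$2$) into $\inprod{p}{c}$:
\begin{equation*}
\inprod{p}{c} = \int_R p(x)\bigl(\log p^{\star}(x) - \mu_1 - \mu_2 s(x)\bigr)\,\drv x = \int_R p(x)\log p^{\star}(x)\,\drv x - \mu_1 - \mu_2 S,
\end{equation*}
where I used $\int_R p\,\drv x = 1$ and $\inprod{p}{s}=S$. Consequently,
\begin{equation*}
h(p) + \inprod{p}{c} = -\int_R p(x)\log\frac{p(x)}{p^{\star}(x)}\,\drv x - \mu_1 - \mu_2 S = -\KL{p}{p^{\star}} - \mu_1 - \mu_2 S.
\end{equation*}
Applying the same calculation to $p^{\star}$ itself (for which $\KL{p^{\star}}{p^{\star}}=0$) gives $h(p^{\star})+\inprod{p^{\star}}{c} = -\mu_1 - \mu_2 S$.

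Subtracting, $h(p^{\star})+\inprod{p^{\star}}{c} - \bigl(h(p)+\inprod{p}{c}\bigr) = \KL{p}{p^{\star}}\geq 0$ by Gibbs' inequality, with equality if and only if $p=p^{\star}$ almost everywhere on $R$. This simultaneously delivers optimality and uniqueness.

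I anticipate no real obstacle: the only subtle points are measurability/integrability (the integral $\int_R p\log p^{\star}\,\drv x$ is well defined because $\log p^{\star}$ is an affine combination of $c$ and $s$, and the moment constraints guarantee the relevant integrals converge) and the convention that $0\log 0 := 0$, which allows the KL identity to extend to densities $p$ that vanish on parts of $R$. Everything else is a direct substitution, so the argument is essentially the classical derivation of maximum-entropy densities under linear moment constraints.
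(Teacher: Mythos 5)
Your proof is correct and coincides in essence with the argument in the paper's cited source (Cover \& Thomas, Thm.~12.1.1), which the paper defers to without reproducing: rewriting the objective via $\log p^{\star}=\mu_1+c+\mu_2 s$, using feasibility to reduce it to $-\KL{p}{p^{\star}}-\mu_1-\mu_2 S$, and invoking non-negativity of the KL divergence (with the equality case giving uniqueness) is exactly the standard derivation, correctly adapted to the extra linear term $\inprod{p}{c}$ in the objective. No gap; the technical caveats you flag (integrability of $p\log p^{\star}$, $0\log 0:=0$, and strict positivity of $p^{\star}$ so that absolute continuity holds) are precisely the right ones.
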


The goal is to efficiently compute \eqref{eq:cqContCapacityDual} which is not straightforward since $G(\cdot)$ is non-smooth. Similar as in Section~\ref{sec:cq:channel} the idea is to use Nesterov's smoothing technique \cite{nesterov05}. Therefore we consider 
\begin{equation} \label{eq:cqContGnu}
	G_{\nu}(\lambda)= \left\{ \begin{array}{ll}
			\underset{p}{\max} 		&\inprod{p}{\WW\lambda-\Hh{\rho}} + \nu h(p) -\nu \log(\upsilon) \\
						\st			&\inprod{p}{s} = S \\
						& p\in \dens,
	\end{array} \right.
\end{equation}
where $\upsilon:=\int_{R}\drv x$.
 Problem~\eqref{eq:cqContGnu} is of the form given in Lemma~\ref{lem:coverCont} and therefore has a unique optimizer 
\begin{equation} \label{eq:optimizer:CQ:cts}
p_{\nu}^{\lambda}(x)=2^{\mu_1+\frac{1}{\nu}({\rm{tr}}[\rho_x \lambda ] - H(\rho_x)) + \mu_2 s(x)}, \, \, x \in R,
\end{equation}
where $\mu_1, \mu_2$ are chosen such that $p_{\nu}^{\lambda} \in \dens$ and $\inprod{p_{\nu}^{\lambda}}{s}=S$. Recall that $h( p)\leq  \log(\upsilon) $ for all $p\in\dens$ and that there exists a function $\iota:\R_{>0}\to\Rp$ such that
 \begin{equation} \label{eq:uniform:bound:cts}
 G_{\nu}(\lambda)\leq G(\lambda)\leq G_{\nu}(\lambda) + \iota(\nu) \quad \text{for all }\lambda \in \Lambda,
 \end{equation}
 i.e., $G_{\nu}(\lambda)$ is a uniform approximation of the non-smooth function $G(\lambda)$. In Lemma~\ref{lem:iota:new} an explicit expression for $\iota$ is given, which implies that $\iota(\nu)\to 0$ as $\nu\to 0$.
 \begin{myass}[Lipschitz continuity] \label{a:constraint_fct:s}	\
	\begin{enumerate}[(i)]
		\item The input constraint function $s(\cdot)$ is Lipschitz continuous with constant $L_{s}$. \label{a:constraint_fct:s:i}
		\item The function $R \ni x \mapsto \rho_x \in \mathcal{D}(\mathcal{H})$ is Lipschitz continuous with constant $L$ with respect to the trace norm.\label{a:constraint_fct:s:ii}
\end{enumerate}
\end{myass}

\begin{mylem}
Assumption~\ref{a:constraint_fct:s}\eqref{a:constraint_fct:s:ii} implies that the function $f_{\lambda}(x):=\WW\lambda(x) - \Hh{\rho_x}$ for $x\in R$ is Lipschitz continuous uniformly in $\lambda \in \Lambda$ with constant $L_f:=L(M \log(\gamma^{-1} \vee \e)+\sqrt{M}\log(\tfrac{1}{\gamma \e}\vee \e))$.   
\end{mylem}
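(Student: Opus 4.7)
The plan is to split $f_\lambda(x) - f_\lambda(y) = \Tr{(\rho_x - \rho_y)\lambda} - (\Hh{\rho_x} - \Hh{\rho_y})$ and estimate each piece separately. For the linear-in-$\lambda$ piece, Hölder's inequality for Schatten norms gives $|\Tr{(\rho_x - \rho_y)\lambda}| \leq \norm{\rho_x - \rho_y}_{\trnorm}\,\norm{\lambda}_{\opnorm}$. Assumption~\ref{a:constraint_fct:s}(ii) yields $\norm{\rho_x - \rho_y}_{\trnorm} \leq L|x-y|$, and Lemma~\ref{lem:compact:set:CQ:cts} together with $\norm{\cdot}_{\opnorm} \leq \norm{\cdot}_F$ gives $\norm{\lambda}_{\opnorm} \leq M \log(\gamma^{-1} \vee \e)$ for every $\lambda \in \Lambda$. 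This produces the summand $LM\log(\gamma^{-1} \vee \e)$ of $L_f$, uniformly in $\lambda$.

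For the entropy piece I would invoke the fundamental theorem of calculus along the affine path $\rho_t := (1-t)\rho_x + t\rho_y$, $t \in [0,1]$. A key observation is that the minimum-eigenvalue functional is concave on Hermitian matrices (via the min-max identity $\lambda_{\min}(A) = \min_{\norm{v}_2=1}\left\langle v, Av\right\rangle$), so $\lambda_{\min}(\rho_t) \geq (1-t)\lambda_{\min}(\rho_x) + t\lambda_{\min}(\rho_y) \geq \gamma$ by Assumption~\ref{ass:channel:CQcont}. Hence every $\rho_t$ is strictly positive and $\rho \mapsto \Hh{\rho}$ is smooth along the whole path, with Fréchet derivative $-\log \rho_t - (\log \e)\mathbf{1}$ with respect to the Frobenius inner product. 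Integrating,
\begin{equation*}
\Hh{\rho_y} - \Hh{\rho_x} = \int_0^1 \Tr{\bigl(-\log\rho_t - (\log \e)\mathbf{1}\bigr)(\rho_y - \rho_x)}\,\drv t.
\end{equation*}
Cauchy--Schwarz in the Frobenius inner product bounds the integrand by $\norm{-\log\rho_t - (\log \e)\mathbf{1}}_F\,\norm{\rho_y - \rho_x}_F$. Since $\spec(\rho_t) \subseteq [\gamma,1]$, each eigenvalue of $-\log\rho_t - (\log \e)\mathbf{1}$ has the form $-\log \mu - \log \e$ with $\mu \in [\gamma, 1]$, and a short one-variable argument shows $|-\log \mu - \log \e| \leq \log(\tfrac{1}{\gamma \e} \vee \e)$. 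Hence $\norm{-\log\rho_t - (\log \e)\mathbf{1}}_F \leq \sqrt{M}\log(\tfrac{1}{\gamma \e} \vee \e)$, while $\norm{\rho_y - \rho_x}_F \leq \norm{\rho_y - \rho_x}_{\trnorm} \leq L|x-y|$. This yields the summand $L\sqrt{M}\log(\tfrac{1}{\gamma \e} \vee \e)$ of $L_f$.

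Combining the two bounds by the triangle inequality delivers the claimed constant $L_f$. The only slightly delicate step is the eigenvalue lower bound along the interpolation path, which reduces to concavity of $\lambda_{\min}$; the remainder is routine norm-juggling and a one-variable estimate, so I anticipate no real obstacle.
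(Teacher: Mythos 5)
Your proposal is correct and arrives at exactly the stated constant, but it differs from the paper's proof in how the entropy term is handled. The paper also splits $|f_\lambda(x_1)-f_\lambda(x_2)|$ into an inner-product term and an entropy term; for the former it uses Cauchy--Schwarz in the Frobenius inner product followed by $\norm{\cdot}_F\leq\norm{\cdot}_{\trnorm}$ (your Hölder/operator-norm route is an equally valid variant), and for the latter it invokes a standalone claim whose proof works eigenvalue by eigenvalue: it bounds $|\lambda\log\lambda - \mu\log\mu|$ via the one-variable derivative of $-x\log x$, sums over the spectrum, and then controls $\sum_i|\lambda_1^{(i)}-\lambda_2^{(i)}|^2$ by $\norm{\rho_1-\rho_2}_F^2$ using the Hoffman--Wielandt inequality. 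You instead integrate the Fréchet derivative $-\log\rho_t - (\log\e)\mathbf{1}$ of the von Neumann entropy along the affine path $\rho_t$, using concavity of $\lambda_{\min}$ to keep $\spec(\rho_t)\subseteq[\gamma,1]$ and hence the derivative uniformly bounded, then apply Cauchy--Schwarz once. Both routes rely essentially on the same eigenvalue pinching ($\gamma\leq\mu\leq 1$) and the same Frobenius/trace-norm comparisons, so they yield the identical constant; the paper's argument is a little more elementary (no matrix calculus, only a classical perturbation inequality for eigenvalues), while yours is perhaps cleaner conceptually and sidesteps the eigenvalue-ordering care that Hoffman--Wielandt requires, at the price of needing the smoothness of $\rho\mapsto\Hh{\rho}$ and the validity of the chain rule $\tfrac{d}{dt}\Tr{g(\rho_t)}=\Tr{g'(\rho_t)\dot\rho_t}$ on the strictly positive cone. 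Either way the estimate is uniform in $\lambda\in\Lambda$, as required.
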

\begin{proof}
For $x_1,x_2 \in R$ using the triangle inequality we obtain
\begin{align}
\left|f_{\lambda}(x_1)-f_{\lambda}(x_2) \right| &= \left|\inprod{\rho_{x_1}}{\lambda}_F-\Hh{\rho_{x_1}}-\inprod{\rho_{x_2}}{\lambda}_F+\Hh{\rho_{x_2}} \right|\nonumber\\
&\leq \left| \inprod{\rho_{x_1}-\rho_{x_2}}{\lambda}_F \right| + \left| \Hh{\rho_{x_1}}-\Hh{\rho_{x_2}} \right|. \label{eq:nowBound}
\end{align}
We can bound the first term of \eqref{eq:nowBound} using the Cauchy-Schwarz inequality as
\begin{align}
\left| \inprod{\rho_{x_1}-\rho_{x_2}}{\lambda}_F \right| &\leq \norm{\rho_{x_1}-\rho_{x_2}}_F \norm{\lambda}_F \nonumber\\
&\leq \norm{\rho_{x_1}-\rho_{x_2}}_{\trnorm} \norm{\lambda}_F \nonumber\\
& \leq L  |x_1 - x_2| \norm{\lambda}_F, \label{eq:lasttt}
\end{align}
where \eqref{eq:lasttt} follows by Assumption~\ref{a:constraint_fct:s}\eqref{a:constraint_fct:s:ii} and by assumption $\norm{\lambda}_{F}\leq M \log\left(\gamma^{-1} \vee \e \right)$. Let $J_M:=\sqrt{M}\log(\tfrac{1}{\gamma \e}\vee \e)$, using Claim~\ref{claim:entropyLip} and Assumption~\ref{ass:channel:CQcont} the second term of \eqref{eq:nowBound} can be bounded as
\begin{align}
 \left| \Hh{\rho_{x_1}}-\Hh{\rho_{x_2}} \right| &\leq J_M \norm{\rho_{x_1}-\rho_{x_2}}_{\trnorm} \nonumber \\
  & \leq  J_M L |x_1 - x_2|, \label{eq:lasttttt}
\end{align}
where \eqref{eq:lasttttt} follows again by Assumption~\ref{a:constraint_fct:s}\eqref{a:constraint_fct:s:ii}.
\end{proof}

%Define $\underline{s}:=\min_{x\in\A}s(x)$, $\overline{s}:=\max_{x\in\A}s(x)$ and let $L_{s}$ denote the Lipschitz constant of $s(\cdot)$.
\begin{mylem}[\cite{TobiasSutter14}] \label{lem:iota:new}
Under Assumption~\ref{a:constraint_fct:s} a possible choice of the function $\iota$ in \eqref{eq:uniform:bound:cts} is given by
\begin{equation*}
\iota(\nu) = \left\{ 
  \begin{array}{l l}
    \nu \left( \log\left( \frac{T_{1}}{\nu}+T_{2} \right) +1 \right), & \quad \nu<\tfrac{T_{1}}{1-T_{2}} \text{ or } T_{2}>1\\
    \nu, & \quad \text{otherwise},
  \end{array} \right.
\end{equation*}
where $T_{1}:=L_{f}\upsilon + 2L_{f}L_{s}\upsilon^{2}\left( \tfrac{1}{-\underline{s}} \vee \tfrac{1}{\overline{s}} \right)$, $T_{2}:= L_{s}\upsilon (\underline{\mu} \, \vee \, \overline{\mu})$, $\underline{\mu}:=\tfrac{2}{-\underline{s}}\log\left( \tfrac{2L_{s}\upsilon}{-\underline{s}}\vee 1 \right)$, $\overline{\mu}:=\tfrac{2}{\overline{s}}\log\left( \tfrac{2L_{s}\upsilon}{\overline{s}}\vee 1 \right)$, $\upsilon:=\int_{R}\drv x$, $\underline{s} := -S + \min_{x\in R} s(x)$ and $\overline{s} := -S + \max_{x\in R} s(x)$.
\end{mylem}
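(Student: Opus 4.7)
The strategy is to adapt the smoothing-error analysis developed for the purely classical setting in \cite{TobiasSutter14} to the present cq context; the key difference is that the linear functional $\mathcal{W}\lambda - H(\rho)$ now arises from the channel operator and must be controlled uniformly over the compact set $\Lambda$ rather than for a fixed data vector. Throughout, fix $\lambda\in\Lambda$ and let $f_{\lambda}(x):=\mathcal{W}\lambda(x)-H(\rho_x)$, which has Lipschitz constant $L_f$ (uniformly in $\lambda\in\Lambda$) by the lemma preceding the statement.

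The plan is to exploit the explicit analytic form of the smoothed optimiser. By Lemma~\ref{lem:coverCont} the maximiser $p_\nu^{\lambda}$ of \eqref{eq:cqContGnu} has the exponential-family form \eqref{eq:optimizer:CQ:cts} with Lagrange multipliers $\mu_1,\mu_2$ determined by the normalisation $\int_R p_\nu^{\lambda}\,\drv x=1$ and by $\inprod{p_\nu^{\lambda}}{s}=S$. Taking the base-$2$ logarithm of the density, integrating against $p_\nu^{\lambda}$, and using these two constraints gives the identity
\begin{equation*}
-h(p_\nu^{\lambda})=\mu_1+\tfrac{1}{\nu}\inprod{p_\nu^{\lambda}}{f_{\lambda}}+\mu_2 S.
\end{equation*}
Because $p_\nu^{\lambda}$ is also feasible for $G(\lambda)$, the trivial bound $G(\lambda)\geq \inprod{p_\nu^{\lambda}}{f_{\lambda}}$ combined with the definition of $G_{\nu}(\lambda)$ yields $G(\lambda)-G_{\nu}(\lambda)\leq -\nu h(p_\nu^{\lambda})+\nu\log\upsilon$. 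Hence it suffices to upper-bound $-h(p_\nu^{\lambda})+\log\upsilon$ in terms of the quantities $T_1,T_2$ of the statement.

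The first technical step is to produce a uniform bound $|\mu_2|\leq \underline{\mu}\vee\overline{\mu}$ on the constraint multiplier. Because $s$ is Lipschitz (Assumption~\ref{a:constraint_fct:s}\eqref{a:constraint_fct:s:i}) and $R$ is compact, if $|\mu_2|$ were too large the density $p_\nu^{\lambda}$ would concentrate exponentially near the minimiser (respectively maximiser) of $s$, violating the constraint $\inprod{p_\nu^{\lambda}}{s}=S$; making this quantitative via a tail estimate on the exponential weight $2^{\mu_2 s(x)}$ — exactly as in \cite[Prop.~3.7]{TobiasSutter14} — gives the explicit constants $\underline{\mu}$ and $\overline{\mu}$ in terms of $\underline{s}=-S+\min_{R}s$ and $\overline{s}=-S+\max_{R}s$, and hence the constant $T_2=L_s\upsilon(\underline{\mu}\vee\overline{\mu})$. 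The second step is to bound $\mu_1$ using normalisation: writing $\mu_1=-\log\int_R 2^{(1/\nu)f_{\lambda}(x)+\mu_2 s(x)}\,\drv x$ and applying the Lipschitz continuity of $f_\lambda$ together with the just-obtained bound on $|\mu_2|$, a standard log-partition estimate (comparing the integrand to its maximum value over a subinterval of length proportional to $\nu$) produces an upper bound of the form $\log(T_1/\nu+T_2)$, where the extra factor $2L_f L_s\upsilon^{2}(1/(-\underline{s})\vee 1/\overline{s})$ in $T_1$ arises from the $\mu_2 s(x)$ contribution via the bound on $|\mu_2|$.

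Combining the two bounds inside the identity $-h(p_\nu^{\lambda})=\mu_1+(1/\nu)\inprod{p_\nu^{\lambda}}{f_\lambda}+\mu_2 S$, and noting that the term $(1/\nu)\inprod{p_\nu^{\lambda}}{f_\lambda}$ cancels against the corresponding term in $G_\nu(\lambda)$, produces the claimed bound $\nu(\log(T_1/\nu+T_2)+1)$; the additive $1$ comes from a rounded logarithm estimate on the maximum of the exponent inside the partition function. The two regimes in the statement simply reflect the case where the logarithmic term becomes nonpositive, in which the trivial bound $G(\lambda)-G_{\nu}(\lambda)\leq \nu$ (from $h(p)\leq\log\upsilon$ for the degenerate choice $p\equiv 1/\upsilon$) is tighter. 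I expect the main obstacle to be Step~2, the sharp uniform bound on $|\mu_2|$: the argument requires translating the Lipschitz bound on $s$ into an explicit exponential-tail control that is independent of $\lambda\in\Lambda$, and the resulting constants must then propagate cleanly through the partition-function estimate so that no hidden dependence on $\lambda$ remains.
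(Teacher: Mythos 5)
Your proof plan contains a sign error at its central step, and unfortunately the error is not superficial: the claimed inequality cannot be rescued by the same route. You assert that $G(\lambda)\geq\inprod{p_\nu^\lambda}{f_\lambda}$ (correct, by feasibility of $p_\nu^\lambda$) together with $G_\nu(\lambda)=\inprod{p_\nu^\lambda}{f_\lambda}+\nu h(p_\nu^\lambda)-\nu\log\upsilon$ yields $G(\lambda)-G_\nu(\lambda)\leq -\nu h(p_\nu^\lambda)+\nu\log\upsilon$. Subtracting, the feasibility bound actually gives $G(\lambda)-G_\nu(\lambda)\geq -\nu h(p_\nu^\lambda)+\nu\log\upsilon\geq 0$, i.e.\ only the easy inequality $G_\nu\leq G$. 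Worse, plugging the identity $-h(p_\nu^\lambda)=\mu_1+\tfrac{1}{\nu}\inprod{p_\nu^\lambda}{f_\lambda}+\mu_2 S$ into your target inequality shows it is \emph{equivalent} to $G(\lambda)\leq\inprod{p_\nu^\lambda}{f_\lambda}$, which is generically false (it would force $p_\nu^\lambda$ to also optimize the unsmoothed problem). So bounding $-h(p_\nu^\lambda)$ via $|\mu_1|,|\mu_2|$, however carefully done, cannot close the argument along the lines you describe.

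The actual difficulty here — and the reason the classical Nesterov bound $G\leq G_\nu+\nu D_2$ does not carry over directly — is that $\inf_{p\in\dens}h(p)=-\infty$, so one cannot compare at the unsmoothed optimizer (which may be a point mass). The working argument instead constructs an explicit \emph{near}-optimal feasible density $\hat p_\delta$ with controlled entropy, e.g.\ (roughly) uniform on a ball of radius $\delta$ around the maximizer $x^\star$ of $f_\lambda$, shifted/mixed to restore the affine constraint $\inprod{p}{s}=S$. Lipschitz continuity of $f_\lambda$ (constant $L_f$) gives $\inprod{\hat p_\delta}{f_\lambda}\geq G(\lambda)-L_f\delta-\text{(constraint-repair loss)}$, while $h(\hat p_\delta)\geq\log\delta$. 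Feeding $\hat p_\delta$ into $G_\nu(\lambda)\geq\inprod{\hat p_\delta}{f_\lambda}+\nu(h(\hat p_\delta)-\log\upsilon)$ and optimizing over $\delta$ (the balance is $\delta\sim\nu/L_f$) produces the $\nu\big(\log(T_1/\nu+T_2)+1\big)$ form, with the extra terms $T_2$ and the $2L_fL_s\upsilon^2(\ldots)$ piece in $T_1$ coming from the constraint-repair mechanism, which is where the multiplier bounds $\underline{\mu},\overline{\mu}$ you isolate in your Step 1 do legitimately enter. Your intuition about where the multiplier bounds come from is sound; the missing idea is that they must be applied to a hand-built comparison density, not to $p_\nu^\lambda$ itself.
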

 \begin{myremark}
In case of no input constraints, the unique optimizer to \eqref{eq:cqContGnu} is given by
\begin{equation*}
p_{\nu}^{\lambda}(x) = \frac{2^{\frac{1}{\nu}({\mathrm{tr}}[\rho_x \lambda] - \Hh{\rho_x}) }}{\int_{R} 2^{\frac{1}{\nu}({\mathrm{tr}}[\rho_x \lambda ] - \Hh{\rho_x})} \drv x},
\end{equation*}
whose straightforward evaluation is numerically difficult for small $\nu$. A numerically stable technique to evaluate the above integral for small $\nu$ can be obtained by following the method presented in Remark~\ref{remark:comp:stablility}.
\end{myremark}
\begin{myremark}[\cite{TobiasSutter14}]\label{rmk:finite:constraint:optimizer:CQ:cts}
As already highlighted and discussed in Remark~\ref{rmk:finite:constraint:optimizer:CQ}, in case of additional input constraints, we seek for an efficient method to find the coefficients $\mu_{i}$ in \eqref{eq:optimizer:CQ:cts}. Similarly to the finite input alphabet case the problem of finding $\mu_{i}$ can be reduced to the finite dimensional convex optimization problem \cite[p.~257 ff.]{ref:Lasserre-11}
\begin{equation} \label{eq:opt:problem:find:mu:cts}
\sup\limits_{\mu\in\R^{2}}\left\{ \inprod{y}{\mu} -\int_{R}p_{\nu}^{\lambda}(x)\drv x \right\},
\end{equation}
where $y:=(1,S)$. Note that $\eqref{eq:opt:problem:find:mu:cts}$ is an unconstrained maximization of a concave function. 
However, unlike to the finite input alphabet case, the evaluation of its gradient and Hessian involves computing moments of the measure $p_{\nu}^{\lambda}(x,\mu)\drv x$, which we want to avoid in view of computational efficiency. There are efficient numerical schemes known, based on semidefinite programming, to compute the gradient and Hessian (see \cite[p.~259 ff.]{ref:Lasserre-11} for details).
\end{myremark}
\begin{mylem}[{\cite[Lem.~3.14]{TobiasSutter14}}]\label{lem:strong:convexity:CQ:cts}
The function $d:\dens \to\Rp$, $p\mapsto -h(p)+ \log (\upsilon) $ with $\upsilon:=\int_{R}\drv x$ as introduced in \eqref{eq:cqContGnu} is strongly convex with convexity parameter $1$.
\end{mylem}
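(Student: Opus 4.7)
The plan is to prove strong convexity of $d$ with respect to the $\Lp{1}(R)$-norm (which is the natural norm for the primal space $\Lp{1}(R)$ in the dual pair $(\Lp{1}(R),\Lp{\infty}(R))$ used throughout this section) via a one-dimensional second-derivative calculation combined with a Cauchy--Schwarz step.

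First, observe that since any $p\in\dens$ integrates to $1$, one has
\begin{equation*}
d(p) = \int_R p(x)\log p(x)\drv x + \log\upsilon
     = \int_R p(x)\log\!\frac{p(x)}{1/\upsilon}\drv x
     = \KL{p}{q_0},
\end{equation*}
where $q_0 \equiv 1/\upsilon$ is the uniform density on $R$. So the claim reduces to strong convexity of $p\mapsto \KL{p}{q_0}$ in $\Lp{1}(R)$ with parameter $1$, and since $\KL{p}{q_0}$ differs from $-h(p)$ by the affine term $\log\upsilon$, the two notions of strong convexity coincide.

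Next, fix $p_0,p_1\in\dens$ and set $p_t := (1-t)p_0 + t p_1$, $\varphi(t):=d(p_t)$. A direct differentiation (justified by dominated convergence using Assumption~\ref{ass:channel:CQcont}-type positivity / a standard approximation argument for densities that may vanish) yields
\begin{equation*}
\varphi'(t) = \int_R (p_1-p_0)\log p_t\,\drv x,
\qquad
\varphi''(t) = \frac{1}{\ln 2}\int_R \frac{(p_1-p_0)^2}{p_t}\,\drv x,
\end{equation*}
where the constant $\tfrac{1}{\ln 2}$ comes from differentiating the base-$2$ logarithm. I would then apply the Cauchy--Schwarz inequality in the form
\begin{equation*}
\norm{p_1-p_0}_{\Lp{1}}^{\,2}
= \Bigl(\int_R \frac{|p_1-p_0|}{\sqrt{p_t}}\sqrt{p_t}\,\drv x\Bigr)^{2}
\leq \Bigl(\int_R \frac{(p_1-p_0)^2}{p_t}\drv x\Bigr)\Bigl(\int_R p_t\,\drv x\Bigr)
= \int_R \frac{(p_1-p_0)^2}{p_t}\drv x,
\end{equation*}
to conclude $\varphi''(t) \geq \tfrac{1}{\ln 2}\norm{p_1-p_0}_{\Lp{1}}^{\,2} \geq \norm{p_1-p_0}_{\Lp{1}}^{\,2}$ for every $t\in[0,1]$. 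Integrating this second-derivative bound twice then gives the strong convexity inequality
\begin{equation*}
d(p_t) \leq (1-t)\,d(p_0)+t\,d(p_1) - \tfrac{1}{2}\,t(1-t)\,\norm{p_1-p_0}_{\Lp{1}}^{\,2},
\end{equation*}
which is exactly strong convexity with parameter $1$.

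The main technical obstacle is justifying the differentiation of $\varphi$ at points where $p_t$ vanishes on a set of positive measure or where $\int (p_1-p_0)^2/p_t$ is infinite. In that case the bound $\varphi''(t)\geq \norm{p_1-p_0}_{\Lp{1}}^{2}$ holds trivially (both sides may be $+\infty$ or one can argue via Fatou), and the integrated inequality extends by a standard approximation $p_i \rightsquigarrow p_i + \delta q_0$ followed by $\delta\downarrow 0$, using lower semicontinuity of the relative entropy. Everything else is routine.
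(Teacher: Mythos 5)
The paper does not reproduce a proof of this lemma; it is cited verbatim from \cite[Lem.~3.14]{TobiasSutter14}, so there is no in-paper argument to compare against line-by-line. Your proof is nevertheless correct and is the standard route to Pinsker-type strong convexity: rewriting $d$ as $\KL{p}{q_0}$ (or, equivalently, noting that $d$ and $-h$ differ by a constant and hence have the same second variation), computing $\varphi''(t)=\tfrac{1}{\ln 2}\int_R (p_1-p_0)^2/p_t\,\drv x$ along segments, and lower-bounding this by $\norm{p_1-p_0}_{\Lp{1}}^2$ via Cauchy--Schwarz with weight $\sqrt{p_t}$. This gives the sharper constant $1/\ln 2>1$ in bits, from which the claimed parameter $1$ (with respect to the $\Lp{1}$-norm, the norm that pairs with the $\Lp{\infty}$ dual space used throughout Section~3 and in the bound $\norm{\WW}_{\opnorm}\leq 1$) follows a fortiori. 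The only part that is genuinely delicate is the regularity discussion when $p_t$ vanishes or the second-derivative integral diverges; you correctly flag that the bound then holds trivially, and that the integrated inequality extends by approximating $p_i$ with $p_i+\delta q_0$ and invoking lower semicontinuity of relative entropy. That is exactly the right fix, and the remainder is routine, so the proposal is complete up to the usual amount of bookkeeping.
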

Finally, we can show that the uniform approximation $G_{\nu}(\lambda)$ is smooth and has a Lipschitz continuous gradient with known constant.
The following result is a generalization of Proposition~\ref{prop:CQ:lipschitz} and follows from Theorem~5.1 in \cite{ref:devolder-12}.
\begin{myprop}[Lipschitz constant of $\nabla G_{\nu}$] \label{prop:Lipschitz:continuity}
The function $G_{\nu}(\lambda)$ is well defined and continuously differentiable at any $\lambda\in\Hop$. Moreover, this function is convex and its gradient 
\begin{equation*}
\nabla G_{\nu}(\lambda) = \int_{R} \rho_{x}\,p_{\nu}^{\lambda}(x) \drv x
\end{equation*}
is Lipschitz continuous with constant
$L_{\nu} = \frac{1}{\nu} $ with respect to the Frobenius norm.
\end{myprop}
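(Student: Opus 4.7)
The plan is to recognize that $G_\nu(\lambda)$ has precisely the structure of the Nesterov-smoothed dual appearing in \cite{nesterov05,ref:devolder-12}, with the linear coupling implemented by the bounded operator $\WW:\Hop\to\Lp{\infty}(R)$ and the prox-term $d(p) = -h(p) + \log(\upsilon)$ supplying strong convexity on the density space. Indeed, writing the inner problem as
\begin{equation*}
G_\nu(\lambda) \;=\; \sup_{p\in\dens,\,\inprod{p}{s}=S}\bigl\{\inprod{p}{\WW\lambda} - \inprod{p}{H(\rho)} - \nu\,d(p)\bigr\},
\end{equation*}
Lemma~\ref{lem:strong:convexity:CQ:cts} gives that $d$ is $1$-strongly convex with respect to the $\Lp{1}(R)$-norm, and the strictly concave objective together with weak compactness of the feasible set (ensured by Assumption~\ref{ass:continuous}) yields existence and uniqueness of the maximizer $p_\nu^\lambda$ in \eqref{eq:optimizer:CQ:cts}. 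This establishes that $G_\nu(\lambda)$ is well defined. Convexity of $G_\nu$ is then immediate since it is a pointwise supremum over $p$ of functions that are affine in $\lambda$ (the only $\lambda$-dependent term is $\inprod{p}{\WW\lambda}$, which is linear in $\lambda$).

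For the gradient formula, I would appeal to a Danskin-type differentiation result: because the maximizer $p_\nu^\lambda$ is unique and depends continuously on $\lambda$ (by strict concavity plus the uniform bound on $\Lambda$), the envelope function $G_\nu$ is continuously differentiable and its gradient coincides with the partial derivative of the objective evaluated at $p_\nu^\lambda$. Since $\inprod{p}{\WW\lambda} = \int_R \Tr{\rho_x \lambda}\,p(\!\drv x)$ and $\nabla_\lambda \Tr{\rho_x \lambda} = \rho_x$, this yields $\nabla G_\nu(\lambda) = \int_R \rho_x\, p_\nu^\lambda(x)\,\drv x$, matching the expression in the statement.

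The main technical step is the Lipschitz bound. I would apply \cite[Thm.~5.1]{ref:devolder-12}, which is the infinite-dimensional generalization of Nesterov's smoothing result and asserts a Lipschitz constant of the form $\tfrac{\|\WW\|^2}{\nu\,\kappa}$, where $\kappa = 1$ by Lemma~\ref{lem:strong:convexity:CQ:cts}. It remains to verify that the relevant operator norm of $\WW:(\Hop,\|\cdot\|_F)\to (\Lp{\infty}(R),\|\cdot\|_{\infty})$, paired against the prox-distance on $\dens$ measured in $\Lp{1}(R)$, is at most $1$: for any $\lambda$ with $\|\lambda\|_F\le 1$ and $x\in R$, the Cauchy–Schwarz inequality gives $|\WW\lambda(x)| = |\inprod{\rho_x}{\lambda}_F| \le \|\rho_x\|_F\|\lambda\|_F \le 1$, using $\|\rho_x\|_F^2 = \Tr{\rho_x^2}\le \Tr{\rho_x}=1$. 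Substituting these values produces $L_\nu = 1/\nu$.

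As an alternative (self-contained) route, one could bound $\|\nabla G_\nu(\lambda_1) - \nabla G_\nu(\lambda_2)\|_F \le \int_R \|\rho_x\|_F\,|p_\nu^{\lambda_1}(x) - p_\nu^{\lambda_2}(x)|\,\drv x \le \|p_\nu^{\lambda_1}-p_\nu^{\lambda_2}\|_1$, and then use strong convexity of $d$ together with the optimality conditions for $p_\nu^{\lambda_i}$ to derive $\|p_\nu^{\lambda_1}-p_\nu^{\lambda_2}\|_1 \le \tfrac{1}{\nu}\|\lambda_1-\lambda_2\|_F$. The principal obstacle compared to Proposition~\ref{prop:CQ:lipschitz} is precisely this infinite-dimensional stability argument for the smoothed maximizer; deferring it to the abstract framework of \cite{ref:devolder-12} is the cleanest route, since their proof already handles the integrability/measurability subtleties absent in the discrete case.
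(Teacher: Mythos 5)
Your proposal matches the paper's proof essentially line for line: both invoke \cite[Thm.~5.1]{ref:devolder-12} for the abstract smoothing result giving $L_\nu = \norm{\WW}^2/(\nu\kappa)$, use Lemma~\ref{lem:strong:convexity:CQ:cts} to set $\kappa = 1$, and conclude by bounding $\norm{\WW}_{\opnorm}\leq 1$ via Cauchy--Schwarz and $\norm{\rho_x}_F\leq 1$. The only cosmetic difference is that the paper states the operator-norm computation through the bilinear pairing $\sup_{\norm{\lambda}_F=1,\,\norm{p}_1=1}\inprod{p}{\WW\lambda}$ while you bound $\norm{\WW\lambda}_\infty$ directly, which is the same estimate.
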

\begin{proof}
See Appendix~\ref{app:propLipCont}.
\end{proof}
We consider the smooth, convex optimization problem
\begin{align}\label{Lagrange:Dual:Program:smooth:cts:CQ}
 \mathsf{D}_{\nu}: \left\{ \begin{array}{ll}
	\min\limits_{\lambda} 		& F(\lambda) + G_{\nu}(\lambda) \\
			\st 					& \lambda\in \Lambda,
	\end{array}\right.
\end{align}
whose solution can be approximated with the Algorithm~\hyperlink{algo:1}{1} presented in Section~\ref{sec:cq:channel}. For the parameter $D_{1}:=\tfrac{1}{2}(M \log(\gamma^{-1}\vee \e))^2$ we have the following result, when running Algorithm~\hyperlink{algo:1}{1} on the problem \eqref{Lagrange:Dual:Program:smooth:cts:CQ}.
 
 \begin{mythm}\label{thm:error:bound:capacity:continuous:CQchannel}
Let $\alpha := 2(T_1 + T_2 + 1)$ where $T_1$ and $T_2$ are as defined in Lemma~\ref{lem:iota:new}. Given a precision $\varepsilon \in (0, \tfrac{\alpha}{4})$, we set the smoothing parameter $\nu = \tfrac{\varepsilon / \alpha}{\log \left( \alpha / \varepsilon \right)}$ and number of iterations $k\geq \tfrac{1}{\varepsilon} \sqrt{16 D_{1}\alpha}\sqrt{\log(\varepsilon^{-1}) + \log(\alpha) + \tfrac{1}{2}}$.  Consider 
\begin{align} 
\hat{\lambda} = y_{k} \in \Lambda \qquad \text{and} \qquad \hat{p}=\sum_{i=0}^{n}\frac{2(i+1)}{(k+1)(k+2)} p_{\nu}^{\lambda_{i}}\in \mathcal{D}(R), \label{eq:optInPut:cts}
\end{align}
where $y_i$ computed at the $\text{i}^{\text{th}}$ iteration of Algorithm~\hyperlink{algo:1}{1} and $p_{\nu}^{\lambda_{i}}$ is the analytical solution in \eqref{eq:optimizer:CQ:cts}. Then, $\hat{\lambda}$ and $\hat{p}$ are the approximate solutions to the problems  \eqref{eq:cqContCapacityDual} and \eqref{eq:cqContCapacityPrimal}, i.e., 
\begin{align}
0\leq F(\hat{\lambda}) + G(\hat{\lambda}) - \I{\hat{p}}{\rho} \leq \varepsilon. \label{eq:EBB:cts}
\end{align}
Therefore, Algorithm~\hyperlink{algo:1}{1} requires $O\left( \tfrac{1}{\varepsilon}\sqrt{\log\left( \varepsilon^{-1} \right)} \right)$ iterations to find an $\varepsilon$-solution to the problems  \eqref{eq:cqContCapacityPrimal} and \eqref{eq:cqContCapacityDual}.
\end{mythm}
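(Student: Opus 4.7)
The plan is to combine Nesterov's convergence estimate for his optimal scheme applied to the smoothed dual $\mathsf{D}_{\nu}$ with the uniform smoothing bound on $G$ established in Lemma~\ref{lem:iota:new}, and then balance the smoothing parameter $\nu$ against the iteration count $k$ so that the total primal-dual gap falls below $\varepsilon$. Mechanically, this mirrors Theorem~\ref{thm:errorCQdiscrete} for the discrete case, but replaces the linear smoothing bound $\nu\log N$ by the logarithmic expression of Lemma~\ref{lem:iota:new}, which is the source of the additional $\sqrt{\log(1/\varepsilon)}$ factor.

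First I would invoke Proposition~\ref{prop:lipschitz:constant:nablaF} and Proposition~\ref{prop:Lipschitz:continuity} to observe that $\nabla(F+G_{\nu})$ is Lipschitz continuous on $\Lambda$ with respect to the Frobenius norm with constant $L_{\nu}=2+\tfrac{1}{\nu}$, while the prox function $d(\lambda)=\tfrac{1}{2}\|\lambda\|_{F}^{2}$ is $\tfrac{1}{2}$-strongly convex on $\Lambda$ and bounded by $D_{1}=\tfrac{1}{2}(M\log(\gamma^{-1}\vee\e))^{2}$. Nesterov's optimal scheme \cite{nesterov05} then guarantees after $k$ iterations that
\begin{equation*}
(F+G_{\nu})(y_{k}) \,-\, \min_{\lambda\in\Lambda}(F+G_{\nu})(\lambda) \;\leq\; \frac{4L_{\nu}D_{1}}{(k+1)^{2}}.
\end{equation*}
The standard primal-aggregation argument from \cite{nesterov05}, applied with the averaged primal candidate $\hat{p}=\sum_{i=0}^{k}\tfrac{2(i+1)}{(k+1)(k+2)}p_{\nu}^{\lambda_{i}}$ from \eqref{eq:optInPut:cts}, then yields a bound on the true primal-dual gap through
\begin{equation*}
F(\hat{\lambda})+G(\hat{\lambda}) - \I{\hat{p}}{\rho} \;\leq\; \iota(\nu) + \frac{4L_{\nu}D_{1}}{(k+1)^{2}},
\end{equation*}
where I use the right-hand inequality in \eqref{eq:uniform:bound:cts} to pass from $G_{\nu}$ to $G$, and strong duality between \eqref{eq:cqContCapacityPrimal} and \eqref{eq:cqContCapacityDual} to interpret the minimum on the smoothed dual as an upper bound for $\I{\hat{p}}{\rho}$.

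Next I would choose $\nu$ so that $\iota(\nu)\leq\varepsilon/2$ and $k$ so that $4L_{\nu}D_{1}/(k+1)^{2}\leq\varepsilon/2$. For $\varepsilon\in(0,\alpha/4)$ the proposed choice $\nu=(\varepsilon/\alpha)/\log(\alpha/\varepsilon)$ is smaller than $1$ and smaller than $T_{1}/(1-T_{2})$, so the first branch of $\iota$ in Lemma~\ref{lem:iota:new} applies, giving $\iota(\nu)=\nu\bigl(\log(T_{1}/\nu+T_{2})+1\bigr)$. Using $T_{1}/\nu+T_{2}\leq(T_{1}+T_{2})(1/\nu\vee 1)$ together with the definition $\alpha=2(T_{1}+T_{2}+1)$ and the monotonicity $x\mapsto x\log(1/x)$ for $x\in(0,\e^{-1})$, an elementary calculation shows $\iota(\nu)\leq\varepsilon/2$. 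For the gradient-method error, I bound $L_{\nu}\leq 2/\nu$ (absorbing the constant $2$ into the factor $16$ appearing in the theorem) and solve $(k+1)^{2}\geq 16D_{1}/(\nu\varepsilon)$; substituting the chosen $\nu$ and using $\log(\alpha/\varepsilon)=\log(\alpha)+\log(\varepsilon^{-1})$ delivers exactly the iteration count stated in the theorem, and hence establishes \eqref{eq:EBB:cts}. The overall complexity $O(\varepsilon^{-1}\sqrt{\log(\varepsilon^{-1})})$ is then immediate.

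The main obstacle is the bookkeeping in the second paragraph above: unlike the discrete case where $\iota(\nu)=\nu\log N$ is linear and one simply sets $\nu\propto\varepsilon$, here $\iota(\nu)$ contains a $\log(1/\nu)$ factor, so one must instead take $\nu\propto\varepsilon/\log(1/\varepsilon)$ and verify carefully that the resulting $\iota(\nu)$ still falls below $\varepsilon/2$ despite the logarithmic inflation. Getting the constants to match the clean form stated in the theorem, rather than just the asymptotic $O(\cdot)$ rate, requires the precise definition of $\alpha$ as $2(T_{1}+T_{2}+1)$ and the minor trick of using $L_{\nu}\leq 2/\nu$ in a regime where $\nu$ is small enough that the additive $2$ is dominated.
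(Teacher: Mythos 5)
Your proposal follows exactly the route the paper intends: the paper's own proof is a one-line deferral to \cite[Thm.~3.15]{TobiasSutter14}, and what you reconstruct — Nesterov's optimal scheme on the smoothed dual $\mathsf{D}_{\nu}$ over the compact ball $\Lambda$ with $L_{\nu}=2+\tfrac{1}{\nu}$ and prox bound $D_1$, combined with the uniform smoothing gap $\iota(\nu)$ from Lemma~\ref{lem:iota:new} and a primal-averaging step to bound the true primal-dual gap, then balancing $\nu\propto\varepsilon/\log(\varepsilon^{-1})$ against $k$ — is precisely the argument that reference supplies (and that Theorem~\ref{thm:errorCQdiscrete} instantiates in the discrete case). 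The only caveat is that the exact multiplicative constants in your Nesterov estimate (e.g.\ whether the $\sigma_1=\tfrac12$ strong-convexity parameter contributes an extra factor of $2$) and in the verification that $\iota(\nu)\leq\varepsilon/2$ for the specific choice of $\nu$ are not fully checked here, and cannot be cross-checked against the paper since it gives no details; but the chain of inequalities you set up does land on the iteration count stated in the theorem up to the small $+\tfrac12$ slack, so the bookkeeping is plausibly consistent.
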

\begin{proof}
The proof is a minor modification of \cite[Thm.~3.15]{TobiasSutter14}.
\end{proof}
% 
% 
% 
% 
% 
%\begin{mythm}\label{thm:error:bound:capacity:continuous:CQchannel}
%{\cts completely new theorem --- pay attention to different strong convexity parameters} Consider a smoothing parameter
%\begin{align*}
%\nu = \nu(n) = \frac{2}{n+1}\sqrt{\frac{D_{1}}{D_{2}}}.
%\end{align*}
%Then after $n$ iterations we can generate the approximate solutions to the problems \eqref{eq:cqContCapacityDual} and \eqref{eq:cqContCapacityPrimal}, namely,
%\begin{align*} 
%\hat{\lambda} = y_{n} \in \Lambda,\qquad \qquad \hat{p}=\sum_{i=0}^{n}\frac{2(i+1)}{(n+1)(n+2)} p_{\nu}^{\lambda_{i}}\in \dens,
%\end{align*}
%which satisfy the following inequality:
%\begin{align*}
%0\leq F(\hat{\lambda}) + G(\hat{\lambda}) - \I{\hat{p}}{W} \leq \frac{4}{n+1} \sqrt{D_{1} D_{2}} + \frac{4 D_{1}}{(n+1)^{2}}. 
%\end{align*}
%Thus, the complexity of finding an $\varepsilon$-solution to the problems \eqref{eq:cqContCapacityDual} and \eqref{eq:cqContCapacityPrimal} by the smoothing technique does not exceed
%\begin{align*}
%4  \sqrt{D_{1} D_{2}} \ \frac{1}{\varepsilon} + 2 \sqrt{\frac{D_{1}}{\varepsilon}}.
%\end{align*}
%\end{mythm}
%\begin{proof}
%The theorem can be proven by following the proof in \cite{nesterov05} while using Proposition~\ref{prop:Lipschitz:continuity}.
%\end{proof}
Let us highlight that we have two different quantitative bounds for the approximation error. First, the \emph{a priori} bound $\varepsilon$ for which Theorem~\ref{thm:error:bound:capacity:continuous:CQchannel} prescribes a lower bound for the required number of iterations. Second, we have an \emph{a posteriori} bound $F(\hat{\lambda}) + G(\hat{\lambda}) -  \I{\hat{p}}{\rho}$
after $k$ iterations. In practice, the a posteriori bound often approaches $\varepsilon$ within significantly less number of iterations than predicted by Theorem~\ref{thm:error:bound:capacity:continuous:CQchannel}. Besides, note that 
by \eqref{eq:uniform:bound:cts} and Theorem~\ref{thm:error:bound:capacity:continuous:CQchannel}
\begin{align*}		
0\leq F(\hat{\lambda}) + G_{\nu}(\hat{\lambda}) +\iota(\nu) -  \I{\hat{p}}{\rho} \leq \iota(\nu) + \varepsilon,
\end{align*}
which shows that $ F(\hat{\lambda}) + G_{\nu}(\hat{\lambda}) +\iota(\nu)$ is an upper bound for the channel capacity with a priori error $\iota(\nu) + \varepsilon$. This bound can be particularly helpful in cases where an evaluation of $G(\lambda)$ for a given $\lambda$ is hard.

\begin{myremark}[No input constraint]
In the absence of an input constraint we can derive an analytical expression for $G_{\nu}(\lambda)$ and its gradient. As derived above, the optimizer solving $\eqref{eq:cqContGnu}$ is 
\begin{equation*}
p^{\star}(x)=\frac{2^{{\mathrm{tr}}[\rho_x \lambda  ] - \Hh{\rho_x}}}{\int_{R}2^{{\mathrm{tr}}[\rho_y \lambda ] -\Hh{\rho_y}} \drv y}, \quad x \in R, 
\end{equation*}
which gives
\begin{equation*}
G_{\nu}(\lambda)= \nu \log \left(\int_{R} 2^{\tfrac{1}{\nu}\left({\mathrm{tr}}[\rho_x \lambda] -\Hh{\rho_x} \right)} \drv x\right)-\nu \log\left(\upsilon \right)
\end{equation*}
and
\begin{equation} \label{eq:gradient:G:no:input:constr}
\frac{\partial G_{\nu}(\lambda)}{\partial \lambda_{m,\ell}} = \left(\nabla G_{\nu}(\lambda) \right)_{m,\ell}= \frac{1}{S(\lambda)} \int_{R} 2^{\tfrac{1}{\nu}\left({\mathrm{tr}}[\rho_x \lambda ] -\Hh{\rho_x}\right)} (\rho_x)_{\ell,m} \drv x,
\end{equation}
with $S(\lambda)=\int_{R} 2^{\tfrac{1}{\nu}\left({\mathrm{tr}}[ \rho_x \lambda  ]-\Hh{\rho_x}\right)}\drv x$. Similarly to Remark~\ref{remark:comp:stablility}, we have used $\tfrac{\partial \Tr{\rho \lambda }}{\partial \lambda_{m,\ell}}=\rho_{\ell,m}$ \cite[Prop.~10.7.2]{ref:Ber-09}.
\end{myremark}
\subsection{Inexact first-order information} \label{sec:subsection:inexact}
Our analysis up to now assumes availability of exact first-order information, namely we assumed that the gradients $\nabla G_{\nu}(\lambda)$ and $\nabla F(\lambda)$ are exactly available for any $\lambda$. However, in many cases, e.g., in the presence of an additional input cost constraint (Remark~\ref{rmk:finite:constraint:optimizer:CQ:cts}), the evaluation of those gradients requires solving another auxiliary optimization problem or a multi-dimensional integral \eqref{eq:gradient:G:no:input:constr}, which only can be done approximately. This motivates the question of how to solve \eqref{Lagrange:Dual:Program:smooth:cts:CQ} in the case of inexact first-order information which indeed has been studied in detail in \cite{ref:Devolver-13}. In our problem \eqref{Lagrange:Dual:Program:smooth:cts:CQ}, $\nabla F(\lambda)$ has a closed form expression \eqref{eq:grad:F} and as such can be assumed to be known exactly. Let us assume, however, that we only have an oracle providing an approximation $\nabla \tilde{G}_{\nu}(\lambda)$, which satisfies $\norm{\nabla \tilde{G}_{\nu}(\lambda)-\nabla G_{\nu}(\lambda)}_{\opnorm}\leq \delta$ for any $\lambda\in\Lambda$ and some $\delta>0$. Recall that $\pi_{\Lambda}$, as defined in Proposition~\ref{prop:proj}, denotes the projection operator onto the set $\Lambda$, defined in Lemma~\ref{lem:compact:set:CQ:cts}, that is the Frobenius norm ball with radius $r:=M\log\left(\gamma^{-1} \vee \e \right)$.
\begin{table}[!htb]
\centering 
\begin{tabular}{c}
  \Xhline{3\arrayrulewidth}  \hspace{1mm} \vspace{-3mm}\\ 
\hspace{35mm}{\bf{\hypertarget{algo:2}{Algorithm 2: } }} Scheme for inexact first-order information \hspace{34mm} \\ \vspace{-3mm} \\ \hline \vspace{-0.5mm}
\end{tabular} \\
\vspace{-5mm}

 \begin{flushleft}
  {\hspace{1.7mm}Choose some $\lambda_0 \in \Hop$}
 \end{flushleft}
 \vspace{-8mm}

 \begin{flushleft}
  {\hspace{1.7mm}\bf{For $m\geq 0$ do$^{*}$}}
 \end{flushleft}
 \vspace{-8mm}
 
  \begin{tabular}{l l}
{\bf Step 1: } & Compute $\nabla F(\lambda_{m})+\nabla \tilde{G}_{\nu}(\lambda_{m})$ \\
{\bf Step 2: } & $\lambda_{m+1} = \pi_{\Lambda}\left(-\frac{1}{L_{\nu}}\left( \nabla F(\lambda_m)+\nabla \tilde{G}_{\nu}(\lambda_m) \right) + \lambda_m\right)$\\
\vspace{-7mm}
  \end{tabular}
  
  \begin{flushleft}
  {\hspace{1.7mm}[*The stopping criterion is explained in Remark~\ref{remark:stopping2}]}
  \vspace{-10mm}
 \end{flushleft}   
  
\begin{tabular}{c}
\hspace{38mm} \phantom{ {\bf{Algorithm:}} Scheme for inexact first-order information}\hspace{36.5mm} \\ \vspace{-1.0mm} \\\Xhline{3\arrayrulewidth}
\end{tabular}
\end{table}
\begin{myprop}\label{prop:inexact:oracle}
For every $\nu \in \Rps$, after $k$ iterations of Algorithm~\hyperlink{algo:2}{2}
\begin{equation}
 F(\lambda_{k})+G(\lambda_{k})-C_{\mathsf{cq},S}(\W)  \leq \frac{(2+\tfrac{1}{\nu})D^{2}}{2k}+\iota(\nu) + 2\delta D, \label{eq:robustError}
\end{equation}
where $\iota(\nu)$ is given in Lemma~\ref{lem:iota:new} and $D:= M\log\left(\gamma^{-1} \vee \e \right)$.
\end{myprop}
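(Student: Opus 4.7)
The plan is to combine strong duality with the convergence rate of projected gradient descent under an inexact first-order oracle, as developed in \cite{ref:Devolver-13}. First, strong duality between \eqref{eq:cqContCapacityPrimal} and \eqref{eq:cqContCapacityDual} together with Lemma~\ref{lem:compact:set:CQ:cts} provides a dual minimizer $\lambda^{\star}\in\Lambda$ with $C_{\mathsf{cq},S}(\W)=F(\lambda^{\star})+G(\lambda^{\star})$, so it suffices to upper bound the gap $F(\lambda_k)+G(\lambda_k)-F(\lambda^{\star})-G(\lambda^{\star})$. Applying the uniform smoothing bound \eqref{eq:uniform:bound:cts}, namely $G(\lambda_k)\leq G_{\nu}(\lambda_k)+\iota(\nu)$ at $\lambda_k$ and $G_{\nu}(\lambda^{\star})\leq G(\lambda^{\star})$ at the optimum, yields
\begin{align*}
F(\lambda_k)+G(\lambda_k)-C_{\mathsf{cq},S}(\W)\leq [F(\lambda_k)+G_{\nu}(\lambda_k)]-\min_{\lambda\in\Lambda}[F(\lambda)+G_{\nu}(\lambda)]+\iota(\nu),
\end{align*}
thereby reducing the claim to a convergence estimate for Algorithm~\hyperlink{algo:2}{2} on the smoothed dual \eqref{Lagrange:Dual:Program:smooth:cts:CQ}.

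Second, I would invoke the inexact-oracle bound of \cite{ref:Devolver-13} for projected gradient descent. By Propositions~\ref{prop:lipschitz:constant:nablaF} and \ref{prop:Lipschitz:continuity} the gradient of $F+G_{\nu}$ is $L_{\nu}$-Lipschitz with $L_{\nu}=2+\tfrac{1}{\nu}$ in the Frobenius norm; the gradient of $F$ is computed exactly via \eqref{eq:grad:F}, so the total oracle inaccuracy is inherited solely from the assumption $\|\nabla\tilde{G}_{\nu}(\lambda)-\nabla G_{\nu}(\lambda)\|_{\opnorm}\leq\delta$. For the basic (non-accelerated) projected gradient scheme with step size $1/L_{\nu}$, the oracle error does not accumulate across iterations, which produces the standard bound
\begin{align*}
[F(\lambda_k)+G_{\nu}(\lambda_k)]-\min_{\lambda\in\Lambda}[F(\lambda)+G_{\nu}(\lambda)]\leq \frac{L_{\nu}D^{2}}{2k}+2\delta D,
\end{align*}
where $D=M\log(\gamma^{-1}\vee\e)$ is the Frobenius-norm radius of $\Lambda$. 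Substituting into the preceding display gives \eqref{eq:robustError}.

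The main obstacle is precisely the non-accumulation of the oracle noise: this is why Algorithm~\hyperlink{algo:2}{2} is chosen to be the basic projected gradient rather than the accelerated Nesterov scheme used in Algorithm~\hyperlink{algo:1}{1}. Under acceleration, the persistent noise term would grow at least linearly in $k$ and swamp the $O(1/k)$ rate, ruling out convergence for any fixed $\delta>0$. A secondary technical point is reconciling the operator-norm oracle bound $\delta$ with the Frobenius-norm gradient analysis; this is handled by a matrix Cauchy--Schwarz-type estimate and accounts for the factor of $2$ in $2\delta D$.
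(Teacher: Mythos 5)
Your proposal follows the paper's route exactly: apply the inexact-oracle projected-gradient bound of Devolder et al.\ to the smoothed dual $\min_{\lambda\in\Lambda}\{F(\lambda)+G_{\nu}(\lambda)\}$, then transfer the estimate to the unsmoothed objective using the uniform bound $G_{\nu}\leq G\leq G_{\nu}+\iota(\nu)$, which is precisely the two-line argument in the paper. Your explicit chain (bound $G(\lambda_k)$ from above by $G_\nu(\lambda_k)+\iota(\nu)$ and bound the optimum from below by the smoothed optimum) cleanly reconstructs what the paper compresses, and your observation that Algorithm~2 must be the \emph{non-accelerated} scheme so that the oracle noise term $2\delta D$ stays bounded rather than growing with $k$ is indeed the reason that scheme is chosen. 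One caveat: the closing speculation that the constant $2$ in $2\delta D$ arises from a matrix Cauchy--Schwarz reconciliation of $\norm{\cdot}_{\opnorm}$ with $\norm{\cdot}_{F}$ is not right --- the $2$ is just the diameter of the Frobenius ball of radius $D$, while the mismatch between the operator-norm oracle guarantee and the Frobenius-norm analysis is a genuine, in principle dimension-dependent issue that the paper (and your proof) passes over by citing Devolder directly.
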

\begin{proof}
We denote the optimum value to \eqref{Lagrange:Dual:Program:smooth:cts:CQ} by $C_{\nu,\mathsf{cq},S}(\W)$. According to \cite{ref:Devolver-13}, for every $\nu \in \Rps$, after $k$ iterations of Algorithm~\hyperlink{algo:2}{2}
\begin{equation}\label{eq:proof:inexact:devolder}
 F(\lambda_{k})+G(\lambda_{k})-C_{\nu,\mathsf{cq},S}(\W)  \leq \frac{(2+\tfrac{1}{\nu})D^{2}}{2k} + 2\delta D.
\end{equation}
By recalling $\eqref{eq:uniform:bound:cts}$, which leads to $C_{\nu,\mathsf{cq},S}(\W)\leq C_{\mathsf{cq},S}(\W)$ the statement \label{eq:proof:inexact:devolder} can be refined to
\begin{equation*}
 F(\lambda_{k})+G(\lambda_{k})-C_{\mathsf{cq},S}(\W)  \leq \frac{(2+\tfrac{1}{\nu})D^{2}}{2k}+\iota(\nu) + 2\delta D.
\end{equation*}
\end{proof}

\begin{myremark}[Stopping criterion of Algorithm~\hyperlink{algo:1}{2}] \label{remark:stopping2}
In case of no average power constraint the following explicit formulas can be used as a stopping criterion of Algorithm~\hyperlink{algo:1}{2}. Choose an a priori error $\varepsilon>0$. For $\beta:=1+\tfrac{\log \e}{\e}$ and $\alpha:=\log T_{1} +1$, where $T_{1}$ is as in Lemma~\ref{lem:iota:new}, consider $\nu\leq \tfrac{\varepsilon}{3 \beta \left( \alpha + \log(3\beta \varepsilon^{-1})\right)}$, $k\geq \tfrac{3(M\log(\gamma^{-1} \vee \e ))^{2}(2\varepsilon + 3 \beta ( \alpha + \log(3\beta\varepsilon^{-1})))}{2\varepsilon^{2}}$ and $\delta\leq \tfrac{\varepsilon}{6M\log\left(\gamma^{-1} \vee \e \right)}$. For this choice Algorithm~\hyperlink{algo:1}{2} guarantees an $\varepsilon$-close solution, i.e., the right hand side of \eqref{eq:robustError} is upper bounded by $\varepsilon$. This analysis follows by Lemma~\ref{lem:lemma1:stop:proof} that is given in Appendix~\ref{app:lemma1}.
\end{myremark}

%%%%%%%%%%%%%%%%%%%%%%%%%%%%%%%%%
%%%%%%%%%%%%%%%%%%%%%%%%%%%%%%%%%
\section{Approximating the Holevo Capacity} \label{sec:Holevo}
In this section it is shown how ideas developed in the previous sections for cq channels can be extended to quantum channels with a quantum mechanical input and output, also known as qq channels.
Let $\Phi:\mathcal{B}(\mathcal{H}_A)\to \mathcal{B}(\mathcal{H}_B)$ be a quantum channel, where $\mathcal{B}(\mathcal{H})$ denotes the space of bounded linear operators on some Hilbert space $\mathcal{H}$ that are equipped with the trace norm. The classical capacity describing the maximal amount of classical information that can be sent on average, asymptotically reliable over the channel $\Phi$ per channel use,  has proven to be \cite{holevo98,schumacher97}
\begin{equation}
C(\Phi)= \lim_{k\to \infty} \frac{1}{k} C_{\mathcal{X}}(\Phi^{\otimes k}), \label{eq:clCapacity} 
\end{equation}
where
\begin{equation}
C_{\mathcal{X}}(\Phi)= \sup \limits_{\{p_i,\rho_i\}} \Hh{\sum_i p_i \Phi(\rho_i)} - \sum_i p_i \Hh{\Phi(\rho_i)}, \label{eq:holevoCapacity}
\end{equation}
denotes the Holevo capacity. It is immediate to verify that $C(\Phi) \geq C_{\mathcal{X}}(\Phi)$ for all quantum channels $\Phi$. In 2008, the existence of channels satisfying $C(\Phi) > C_{\mathcal{X}}(\Phi)$ has been proven which implies that the limit in \eqref{eq:clCapacity} which is called \emph{regularization} is necessary \cite{hastings09}. Due to the regularization, a direct approximation of $C(\Phi)$ seems difficult. 

In this section, we present an approximation scheme for the Holevo capacity based on the method explained in Section~\ref{sec:cq:contInput}. It has been shown that the supremum in \eqref{eq:holevoCapacity} is attained on an ensemble consisting of no more than $N^2$ pure states, where $N:=\dim \mathcal{H}_A$ \cite[Cor.~8.5]{holevo_book}. The Holevo capacity is in general hard to compute since \eqref{eq:holevoCapacity} is a non-convex optimization problem as the objective function is concave in $\{p_i\}$ for fixed $\{\rho_i\}$ and convex in $\{\rho_i\}$ for fixed $\{p_i\}$ \cite[Thms.~12.3.5 and 12.3.6]{wilde_book}. Furthermore, Beigi and Shor showed that computing the Holevo capacity is $\mathsf{NP}$-complete \cite{shor08}. Their proof also implies that it is $\mathsf{NP}$-hard to compute the Holevo capacity up to $\tfrac{1}{\poly(N)}$ accuracy. Based on a stronger complexity assumption, Harrow and Montanaro improved this result by showing that the Holevo capacity is in general hard to approximate even up to a constant accuracy \cite{harrow13}.

Using a universal encoder, which is a mapping translating a classical state into a quantum state, we can compute the Holevo capacity of a quantum channel by calculating the cq capacity of a channel having a continuous, bounded input alphabet (see Figure~\ref{fig:cqHolevo}). A universal encoder is defined as the mapping $ \Eu: R \ni r \mapsto \ket{r}\! \bra{r}=:\rho_r \in \mathcal{D}(\mathcal{H}_A)$.  
From an optimization point of view, by adding the universal encoder we map a finite dimensional non-convex optimization problem (of the form \eqref{eq:holevoCapacity}) into an infinite dimensional convex optimization problem (of the form \eqref{eq:cqContCapacity}), which we know how to approximate as discussed in Section~\ref{sec:cq:contInput}.
To represent an $N$ dimensional pure state we need $2N-2$ real bounded variables.\footnote{We need to describe an $N$ dimensional complex vector, where one real parameter can be removed since the global phase is irrelevant. A second parameter is determined as the vector must have unit length.} As an example, for $N=2$ a possible universal encoder is $\Eu: [0,\pi] \times [0,2\pi]  \owns (\phi,\theta) \mapsto \ket{v}\! \bra{v} \in \mathbb{C}^{2 \times 2}$, with $\ket{v}=(\cos \theta,\sin \theta \, e^{\ii \phi})\transp$. A possible universal encoder for a general $N$ dimensional setup is discussed in Remark~\ref{rmk:universalEncoder}.
\begin{figure}[!htb]
\centering
\def \xenc{1.0}
\def \yenc{0.4}
\def \xchannel{1.0}
\def \ychannel{0.4}

\def \x{0.8}
\def \xm{1.0} %distance in the middle
\def \la{0.2}

\def \xdash{0.25}
\def \ydash{0.25}

\begin{tikzpicture}[scale=1,auto, node distance=1cm,>=latex']
	
 \draw[->] (0,0) -- (\x,0);   
 \node at (-1*\la,0) {$r$};  
%encoder
 \draw[] (\x,-\yenc) -- (\x,\yenc);   
 \draw[] (\x+\xenc,-\yenc) -- (\x+\xenc,\yenc);   
 \draw[] (\x,-\yenc) -- (\x+\xenc,-\yenc);   
 \draw[] (\x,\yenc) -- (\x+\xenc,\yenc);   
 \node at (\x+0.5*\xenc,0) {$\Eu$};       

 \draw[->] (\x+\xenc,0) -- (\x+\xm+\xenc,0);   
 \node at (\x+0.5*\xm+\xenc,1.5*\la) {$\rho_r$};     

%channel
  \draw[] (\x+\xm+\xenc,-\ychannel) -- (\x+\xm+\xenc,\ychannel);   
  \draw[] (\x+\xm+\xenc+\xchannel,-\ychannel) -- (\x+\xm+\xenc+\xchannel,\ychannel);    
  \draw[] (\x+\xm+\xenc,-\ychannel) -- (\x+\xm+\xenc+\xchannel,-\ychannel);    
  \draw[] (\x+\xm+\xenc,\ychannel) -- (\x+\xm+\xenc+\xchannel,\ychannel);  
  \node at (\x+\xm+\xenc+0.5*\xchannel,0) {$\Phi$};   
    
 \draw[->] (\x+\xm+\xenc+\xchannel,0) -- (2*\x+\xm+\xenc+\xchannel,0);   
 \node at (2*\x+\xm+\xenc+\xchannel+2*\la,0) {$\sigma_r$};

 % dashed box
   \draw[dashed] (\x-\xdash,\yenc+\ydash) -- (\x+\xm+\xenc+\xchannel+\xdash,\yenc+\ydash);   
   \draw[dashed] (\x-\xdash,-\yenc-\ydash) -- (\x+\xm+\xenc+\xchannel+\xdash,-\yenc-\ydash);   
   \draw[dashed] (\x-\xdash,\yenc+\ydash) -- (\x-\xdash,-\yenc-\ydash);   
   \draw[dashed] (\x+\xm+\xenc+\xchannel+\xdash,-\yenc-\ydash) -- (\x+\xm+\xenc+\xchannel+\xdash,\yenc+\ydash);   
   \node at (\x+0.5*\xm+0.5*\xenc+0.5*\xchannel,\yenc+\ydash+\la) {$\W$};            
\end{tikzpicture}
\caption{\small Using a universal encoder $ \Eu: R \ni r \mapsto \ket{r}\! \bra{r}=:\rho_r \in \mathcal{D}(\mathcal{H}_A)$ we embed the quantum channel $\Phi:\mathcal{B}(\mathcal{H}_A) \to \mathcal{B}(\mathcal{H}_B)$ into a cq channel $\W: R \ni r \mapsto (\Phi\circ \Eu)(r)= \Phi( \ket{r}\! \bra{r})=:\sigma_r \in \mathcal{D}(\mathcal{H}_B)$ having a continuous bounded input alphabet. We then have $C_{\mathsf{cq}}(\W)=C_{\mathcal{X}}(\Phi)$, with $C_{\mathsf{cq}}(\W)$ and $C_{\mathcal{X}}(\Phi)$ as defined in \eqref{eq:cqCapacity} and \eqref{eq:holevoCapacity}.}
\label{fig:cqHolevo}
\end{figure}

As explained in Figure~\ref{fig:cqHolevo}, using the idea of the universal encoder gives $C_{\mathsf{cq}}(\W)=C_{\mathcal{X}}(\Phi)$, i.e., we can approximate $C_{\mathcal{X}}(\Phi)$ by approximating $C_{\mathsf{cq}}(\W)$. This can be done as explained in Section~\ref{sec:cq:contInput}. 
For an approximation error $\varepsilon >0$, Theorem~\ref{thm:error:bound:capacity:continuous:CQchannel} gives a minimal number of iterations $k$ and a smoothing parameter $\nu >0$ such that after $k$ iterations Algorithm~\hyperlink{algo:1}{1} generates a lower and upper bound $C_{\mathcal{X},\textnormal{LB}}(\Phi)\leq C_{\mathcal{X}}(\Phi) \leq C_{\mathcal{X},\textnormal{UB}}(\Phi)$ to the Holevo capacity such that 
\begin{equation*}
0 \leq  C_{\mathcal{X},\textnormal{UB}}(\Phi) - C_{\mathcal{X},\textnormal{LB}}(\Phi) \leq \varepsilon. 
\end{equation*}

\begin{myremark}[Universal encoder] \label{rmk:universalEncoder}
For a channel $\Phi:\mathcal{B}(\mathcal{H}_A)\to \mathcal{B}(\mathcal{H}_B)$ with $N=\dim \mathcal{H}_A$ a possible universal encoder can be derived using spherical coordinates as
\begin{align*}
&\Eu: R = [0,\pi]\times \ldots \times [0,\pi] \times [0,2\pi] \times [0,\pi] \times \ldots \times [0,\pi] \to \mathbb{C}^{N \times N }\\
&\hspace{5mm}(\theta_1,\ldots,\theta_{N-2},\theta_{N-1},\phi_1,\ldots,\phi_{N-1}) \mapsto  \ket{v}\! \bra{v}
\end{align*}
with
\begin{align*}
\ket{v}&=(\cos \theta_1,\sin \theta_1 \cos \theta_2 e^{\ii \phi_1},\sin \theta_1 \sin \theta_2 \cos \theta_3 e^{\ii \phi_2},\ldots,\sin \theta_1 \ldots \sin \theta_{N-2} \cos \theta_{N-1} e^{\ii \phi_{N-2}},\\
&\hspace{8mm}\sin \theta_1 \ldots \sin \theta_{N-2} \sin \theta_{N-1} e^{\ii \phi_{N-1}})\transp.
\end{align*}
It can be verified immediately that the Lebesgue measure of the set $R$ is equal to $2\pi^{2 N-2}$ for this setup.
\end{myremark}

%%%%%%%%%%%%%%%%%%%%%%%%%%%%%%%%%%%%%
\subsection{Computational complexity}
Let $\{\Phi:\mathcal{B}(\mathcal{H}_A)\to \mathcal{B}(\mathcal{H}_B)\}_{N,M}$ be a family of quantum channels with $N:=\dim \mathcal{H}_A$ and $M:=\dim \mathcal{H}_B$. For such a family, we derive the complexity of our method presented in this chapter to ensure an $\varepsilon$-close solution. 
Suppose the family of channels $\{\Phi\}_{N,M}$ satisfies the following assumption.
\begin{myass}[Regularity] \label{ass:FPRAS}
$\gamma_M:=\min \limits_{\rho \in \mathcal{D}(\mathcal{H}_A)} \min \spec \Phi(\rho)>0$ 
\end{myass}
To simplify notation, define the function $\Rp\ni M\mapsto \p(M):=\log\left( \gamma_M^{-1} \right) \in \Rp$. 
We will discuss later in Remark~\ref{rmk:avoidingAssFPRAS} how Assumption~\ref{ass:FPRAS} can be removed at the cost of computational complexity proportional to $\varepsilon^{-1} \log \varepsilon^{-1}$ where $\varepsilon$ is the preassigned approximation error, i.e.,  considering $\varepsilon$ as a constant Assumption~\ref{ass:FPRAS} can be automatically satisfied. As detailed in the preceding section and summarized in Algorithm~\hyperlink{algo:1}{1}, for the approximation of the Holevo capacity one requires to efficiently evaluate the gradient $\nabla G_{\nu}(\lambda)$ for an arbitrary $\lambda \in \Lambda$ given by \eqref{eq:gradient:G:no:input:constr}, which involves two integrations over $R$. 

\begin{mydef}[Gradient oracle complexity] \label{def:oracle}
Given a family of channels $\{\Phi\}_{N,M}$, the computational complexity for Algorithm $\mathscr{A}$ to provide an estimate $\tilde{G}_{\nu}(\lambda)$ for any $\lambda\in\Lambda$ of the form
\begin{equation*}
\Prob{\norm{\nabla G_{\nu}(\lambda) - \nabla \tilde{G}_{\nu}(\lambda)}_{\opnorm}\geq \delta} \leq \eta
\end{equation*}
is denoted (when it exists) by $\mathscr{C}_{\Phi,\mathscr{A}}(N,M,\delta^{-1}, \eta^{-1})$.\footnote{Note that $\mathscr{C}_{\Phi,\mathscr{A}}(N,M,\delta^{-1}, \eta^{-1})$ is increasing in all its components.}
\end{mydef}
In Sections~\ref{subsec:algoForGradient} and \ref{subsec:algoForGradient:second}, we discuss two candidates for $\mathscr{A}$ and derive their complexity as defined in Definition~\ref{def:oracle}.

\begin{mythm}[Complexity of Algorithm~\hyperlink{algo:2}{2}] \label{thm:FRRAS}
Let $\{\Phi\}_{N,M}$ be a family of quantum channels satisfying Assumption~\ref{ass:FPRAS}. Then, Algorithm~\hyperlink{algo:2}{2} together with $\mathscr{A}$ require 
\begin{align*}
&O\Big(\varepsilon^{-2} M^4 \p(M)^2\big(N +\log(M \p(M)) + \log(\varepsilon^{-1})\big)  \\ 
&\hspace{7mm} \mathscr{C}_{\Phi,\mathscr{A}}\big(N,M,\varepsilon^{-1} M \p(M), \xi^{-1} \varepsilon^{-2} M^2 \p(M)^2 (N + \log(M \p(M))+\log(\varepsilon^{-1}))\big)\Big)
\end{align*}
to compute an $\varepsilon$-close solution to the Holevo capacity with probability $1-\xi$. 
\end{mythm}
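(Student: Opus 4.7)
The plan is to invoke Proposition~\ref{prop:inexact:oracle} on the continuous-input cq channel $\W = \Phi \circ \Eu$ induced by the universal encoder of Figure~\ref{fig:cqHolevo}, balance the smoothing parameter $\nu$, the iteration count $k$, and the per-call gradient accuracy $\delta$ so that the deterministic suboptimality bound falls below $\varepsilon$, and finally apply a union bound over the $k$ iterations to convert the per-call success probability into the overall probability $1-\xi$.

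Concretely, under the universal encoder the input domain $R$ has Lebesgue measure $\upsilon = 2\pi^{2N-2}$, hence $\log\upsilon = \Theta(N)$, and the induced channel $\W$ carries no input constraint; Assumption~\ref{ass:FPRAS} implies Assumption~\ref{ass:channel:CQcont} with $\gamma = \gamma_M$, giving $D := M\log(\gamma_M^{-1}\vee \e) = \Theta(M\p(M))$. In Lemma~\ref{lem:iota:new} the unconstrained case corresponds to $L_s = 0$, hence $T_2 = 0$ and $T_1 = L_f\upsilon$ with $L_f = O(M\p(M))$, so
\begin{equation*}
\alpha := \log T_1 + 1 = O\!\big(N + \log(M\p(M))\big).
\end{equation*}
Following the balancing in Remark~\ref{remark:stopping2}, I would choose
\begin{equation*}
\nu \asymp \tfrac{\varepsilon}{\alpha + \log\varepsilon^{-1}}, \qquad \delta \asymp \tfrac{\varepsilon}{M\p(M)}, \qquad k \asymp \tfrac{M^{2}\p(M)^{2}(\alpha + \log\varepsilon^{-1})}{\varepsilon^{2}},
\end{equation*}
which forces each of the three summands in the bound $\tfrac{(2+1/\nu)D^{2}}{2k}+\iota(\nu)+2\delta D$ of Proposition~\ref{prop:inexact:oracle} below $\varepsilon/3$ whenever every oracle call meets its $\delta$-tolerance.

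To upgrade the deterministic guarantee into a high-probability one, a union bound over the $k$ iterations requires the per-call failure probability to satisfy $\eta \leq \xi/k$, so that
\begin{equation*}
\eta^{-1} = O\!\Big(\tfrac{1}{\xi}\cdot\tfrac{M^{2}\p(M)^{2}(N+\log(M\p(M))+\log\varepsilon^{-1})}{\varepsilon^{2}}\Big),
\end{equation*}
which matches the fourth argument of $\mathscr{C}_{\Phi,\mathscr{A}}$ stated in the theorem. Each iteration then incurs one call to $\mathscr{A}$ of cost $\mathscr{C}_{\Phi,\mathscr{A}}(N, M, \delta^{-1}, \eta^{-1})$ together with polynomial-in-$M$ matrix arithmetic for evaluating $\nabla F(\lambda) = -2^{-\lambda}/\Tr{2^{-\lambda}}$ via \eqref{eq:grad:F} (an eigendecomposition of $\lambda$) and for the Frobenius-ball projection of Proposition~\ref{prop:proj}; multiplying $k$ by the per-iteration cost and absorbing these matrix operations into the $M^{4}$ outer factor yields the advertised complexity.

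The main difficulty I expect is the careful book-keeping of the two places where the universal-encoder volume $\log\upsilon = \Theta(N)$ enters: first, additively into $\alpha$ and hence into $k$; and second, again into $\eta^{-1}$ through the union bound, which then reappears inside the fourth argument of $\mathscr{C}_{\Phi,\mathscr{A}}$. Verifying that the balancing of $(\nu,\delta,k)$ preserves exactly the additive structure $N+\log(M\p(M))+\log\varepsilon^{-1}$ at both of these locations is the bulk of the routine calculation.
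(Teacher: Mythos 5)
Your proposal is correct and follows essentially the same route as the paper's own proof: invoke Proposition~\ref{prop:inexact:oracle} on the induced continuous-input cq channel $\Phi\circ\Eu$, balance $(\nu,\delta,k)$ via Remark~\ref{remark:stopping2}, and convert the per-call oracle guarantee to an overall probability $1-\xi$ by a union bound over $k$ iterations (Lemma~\ref{lem:Omar}), with $\eta^{-1}=O(\xi^{-1}k)$ landing in the fourth argument of $\mathscr{C}_{\Phi,\mathscr{A}}$. One small slip: in $T_1 = L_f\upsilon$ the constant $L_f$ should be the Lipschitz constant $L_{N,M}=O(N^{3/2}M\p(M))$ from Lemma~\ref{lem:assLip} rather than $O(M\p(M))$, but since $L_{N,M}$ only enters $\alpha$ through $\log T_1$ and the $\log(N^{3/2})$ piece is swallowed by the dominant $\log\upsilon=\Theta(N)$ term, the asymptotics of $\alpha$, $k$, and the final bound are unaffected.
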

\begin{myremark}
Theorem~\ref{thm:FRRAS} establishes a link in terms of computational complexity from the main objective of this section, the Holevo capacity of a family of quantum channels $\{\Phi\}_{N,M}$ under Assumption~\ref{ass:FPRAS}, to the computation of $\nabla G_\nu(\lambda)$ for a given $\lambda\in \Lambda$, the task of Algorithm $\mathscr{A}$ in Definition~\ref{def:oracle}. That is, if $\mathscr{C}(N,M,\delta^{-1},\eta^{-1})$ for $\delta^{-1}$ and $\eta^{-1}$ given in Theorem~\ref{thm:FRRAS} is polynomial (resp. sub-exponential) in $(N,\varepsilon^{-1})$, then the complexity of the proposed scheme to approximate the Holevo capacity is polynomial (resp. sub-exponential) in $(N,\varepsilon^{-1})$.
\end{myremark}

To prove Theorem~\ref{thm:FRRAS} one requires a few preparatory lemmas. First we need an explicit a priori error bound in a similar fashion as in Section~\ref{sec:cq:contInput} given that the function $f_{\lambda,M}(x):= \Tr{\Phi(\Eu(x))\lambda}-\Hh{\Phi(\Eu(x))}$ is Lipschitz continuous uniformly in $\lambda\in \Lambda$. The following lemma shows that this readily follows from Assumption~\ref{ass:FPRAS}.
\begin{mylem}\label{lem:assLip}
Let $\{\Phi\}_{N,M}$ be a family of channels satisfying Assumption~\ref{ass:FPRAS}. The function $f_{\lambda,M}(x):= \Tr{\Phi(\Eu(x))\lambda}-\Hh{\Phi(\Eu(x))}$ for $x \in R$ is Lipschitz continuous uniformly in $\lambda\in \Lambda$ with respect to the $\ell^1$-norm with constant $L_{N,M}=2 N \sqrt{N} \left( M \log(\tfrac{1}{\gamma_M} \vee \e) +  \sqrt{M} \log(\tfrac{1}{\gamma_M \e} \vee \e)\right)$.
\end{mylem}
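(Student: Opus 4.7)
The plan is to mirror the continuous–input argument that precedes this lemma (the estimate for $f_\lambda$ in Section~\ref{sec:cq:contInput}), but now pulled back through the universal encoder. Writing $\sigma_x:= \Phi(\Eu(x))$, I first apply the triangle inequality to split
\[
|f_{\lambda,M}(x_1)-f_{\lambda,M}(x_2)|\leq \left|\inprod{\sigma_{x_1}-\sigma_{x_2}}{\lambda}_F\right| + \left|\Hh{\sigma_{x_1}}-\Hh{\sigma_{x_2}}\right|.
\]
The first summand is handled by Cauchy–Schwarz together with the uniform bound $\norm{\lambda}_F\leq M\log(\gamma_M^{-1}\vee\e)$ supplied by Lemma~\ref{lem:compact:set:CQ:cts}; it is controlled by $M\log(\gamma_M^{-1}\vee\e)\,\norm{\sigma_{x_1}-\sigma_{x_2}}_{\trnorm}$. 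For the second summand I invoke Claim~\ref{claim:entropyLip}, noting that Assumption~\ref{ass:FPRAS} guarantees a uniform spectral lower bound $\gamma_M$ on $\sigma_x$, so the von Neumann entropy on this set is Lipschitz with constant $\sqrt{M}\log(\tfrac{1}{\gamma_M\e}\vee\e)$ with respect to the trace norm. Both estimates reduce the lemma to establishing
\[
\norm{\sigma_{x_1}-\sigma_{x_2}}_{\trnorm}\leq 2N\sqrt{N}\,\norm{x_1-x_2}_{1}.
\]

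The next step is to exploit contractivity. Since $\Phi$ is cptp it is a contraction in trace norm, so $\norm{\sigma_{x_1}-\sigma_{x_2}}_{\trnorm}\leq\norm{\Eu(x_1)-\Eu(x_2)}_{\trnorm}$. Writing $\Eu(x)=\ket{v(x)}\!\bra{v(x)}$ as in Remark~\ref{rmk:universalEncoder}, the standard rank-one estimate $\norm{\ket{v}\!\bra{v}-\ket{w}\!\bra{w}}_{\trnorm}\leq 2\,\norm{\ket{v}-\ket{w}}_{2}$ reduces the task to bounding $\norm{\ket{v(x_1)}-\ket{v(x_2)}}_2$ in terms of $\norm{x_1-x_2}_1$.

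The remaining, and most delicate, estimate is the Lipschitz constant of the encoder itself. Each coordinate $v_k(x)$ is a product of sines/cosines of the $\theta_i$ multiplied by a phase $\e^{\ii\phi_j}$, whose magnitude is bounded by one and whose partial derivatives with respect to any angle are again bounded in absolute value. Applying the mean value theorem componentwise and summing over the $N$ components of $\ket{v}$ yields a bound of the form $\norm{\ket{v(x_1)}-\ket{v(x_2)}}_2\leq c\sqrt{N}\,N\,\norm{x_1-x_2}_1$ for an absolute constant $c$, where one factor of $\sqrt{N}$ comes from summing squares of $N$ component-wise differences and the additional factor $N$ absorbs the product-rule contributions across the spherical-coordinate representation. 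Combining with the rank-one inequality gives the claimed constant $2N\sqrt{N}$ for the encoder, and hence $L_{N,M}=2N\sqrt{N}\bigl(M\log(\gamma_M^{-1}\vee\e)+\sqrt{M}\log(\tfrac{1}{\gamma_M\e}\vee\e)\bigr)$ after inserting the two norm bounds above.

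The main obstacle in executing this plan cleanly is the last step: one must verify that the spherical-coordinate encoder of Remark~\ref{rmk:universalEncoder} has the stated Lipschitz constant in the $\ell^{1}\!\to\!\ell^{2}$ sense. A careful bookkeeping of the products of trigonometric factors (including their derivatives via product rule and the phase factors $\e^{\ii\phi_j}$) is needed to justify the $2N\sqrt{N}$ polynomial dependence; once this is in place, contractivity of $\Phi$, Cauchy–Schwarz, and Claim~\ref{claim:entropyLip} combine immediately to produce the advertised $L_{N,M}$.
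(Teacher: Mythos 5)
Your decomposition and the first several steps match the paper's proof: triangle inequality split into a trace term and an entropy term, Cauchy--Schwarz together with the $\Lambda$-bound $\norm{\lambda}_F \leq M\log(\gamma_M^{-1}\vee\e)$, the entropy-Lipschitz estimate of Claim~\ref{claim:entropyLip}, and contractivity of $\Phi$ in trace norm, reducing everything to a Lipschitz bound on the encoder $\Eu$. Where the paper differs slightly is in the last, ``delicate'' step that you flag. Rather than routing through an $\ell^1\!\to\!\ell^2$ Lipschitz estimate for $x\mapsto\ket{v(x)}$ plus the sharp rank-one bound $\norm{\ket{v}\!\bra{v}-\ket{w}\!\bra{w}}_{\trnorm}\leq 2\norm{\ket{v}-\ket{w}}_2$, the paper proves two short auxiliary claims and stays in $\ell^1$ throughout: Claim~\ref{claim:basic} gives $\norm{\ket{v(x_1)}-\ket{v(x_2)}}_1\leq N\norm{x_1-x_2}_1$, and Claim~\ref{claim:encoderFoo} gives $\norm{v v^\dagger - w w^\dagger}_{\trnorm}\leq 2\sqrt{N}\norm{v-w}_1$; composing the two yields the factor $2N\sqrt{N}$.

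Your worry that ``product-rule contributions'' introduce an extra factor of $N$ is unfounded, and this is precisely what Claim~\ref{claim:basic} makes precise. In the spherical-coordinate encoder of Remark~\ref{rmk:universalEncoder} each angle $\theta_j$ or $\phi_j$ appears at most once in each component $v_k$; differentiating with respect to $\theta_j$ merely replaces a single $\sin\theta_j$ factor by $\cos\theta_j$ (or $\cos\theta_k$ by $-\sin\theta_k$), and differentiating with respect to $\phi_{k-1}$ multiplies by $\ii$, so every partial derivative obeys $|\partial_j v_k|\leq 1$. Hence each component $v_k$ is $1$-Lipschitz with respect to $\norm{\cdot}_1$ on the angle tuple, and the only polynomial factor comes from summing over the $N$ components --- no blowup from differentiating the product. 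Had you pushed your $\ell^2$ route to its conclusion you would in fact have found $\norm{\ket{v(x_1)}-\ket{v(x_2)}}_2 \leq \sqrt{N}\,\norm{x_1-x_2}_1$, and with the sharp rank-one bound this gives an encoder Lipschitz constant of order $\sqrt{N}$ rather than $N\sqrt{N}$. That is still perfectly acceptable --- a smaller constant certainly suffices --- but the $2N\sqrt{N}$ stated in the lemma comes from the cruder $\ell^1$ composition that the paper actually carries out.
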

\begin{proof}
See Appendix~\ref{app:lem:assLip}.
\end{proof}

%%%%%%%%%%%%%%%%%%%
\begin{mylem}\label{lem:Omar}
Let $\eta \in [0,1]$ and $n\in \mathbb{N}$. Then $1-(1-\eta)^n \leq n \eta$.
\end{mylem}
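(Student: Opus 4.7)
The plan is to prove this by straightforward induction on $n$, which is clean and avoids invoking auxiliary inequalities. The statement is essentially a form of Bernoulli's inequality, $(1-\eta)^n \geq 1 - n\eta$ for $\eta \in [0,1]$, but rather than citing it, I would derive the bound directly so the argument is self-contained within the appendix.

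First I would dispose of the base case $n=1$, where both sides equal $\eta$. Then for the inductive step, assuming $1-(1-\eta)^n \leq n\eta$, I would factor
\begin{equation*}
1-(1-\eta)^{n+1} = 1 - (1-\eta)(1-\eta)^n = \bigl(1-(1-\eta)^n\bigr) + \eta(1-\eta)^n.
\end{equation*}
The first summand is bounded by $n\eta$ by the induction hypothesis, and the second summand is bounded by $\eta$ since $\eta \in [0,1]$ implies $(1-\eta)^n \in [0,1]$. Summing gives $(n+1)\eta$, closing the induction.

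There is no real obstacle here; the only mild care needed is to note that $\eta \in [0,1]$ is used twice, once to ensure $(1-\eta)^n \in [0,1]$ in the inductive step, and implicitly throughout to keep all quantities in the right range. An alternative one-line proof would be the probabilistic interpretation via the union bound: if $A_1,\ldots,A_n$ are independent events each of probability $\eta$, then $\Prob{\bigcup_i A_i} = 1-(1-\eta)^n$, which by Boole's inequality is at most $\sum_i \Prob{A_i} = n\eta$. This is conceptually appealing given that the lemma is applied in a probabilistic union-bound context in Theorem~\ref{thm:FRRAS}, but I would keep the induction proof for its elementary nature.
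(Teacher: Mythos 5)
Your induction is correct, and it differs from the paper's argument. The paper fixes $n$ and treats the quantity as a function of $\eta$: it sets $f(\eta):=1-(1-\eta)^n-n\eta$, shows concavity via $f''(\eta)=-n(n-1)(1-\eta)^{n-2}\leq 0$, finds that the unique critical point is $\eta^\star=0$, and concludes from $f(0)=0$ (and $f(1)=1-n\leq 0$) that $f\leq 0$ on $[0,1]$. Your proof instead varies $n$ discretely and uses the clean decomposition $1-(1-\eta)^{n+1} = \bigl(1-(1-\eta)^n\bigr) + \eta(1-\eta)^n$, bounding the two pieces by $n\eta$ and $\eta$ respectively. The calculus route is a one-liner once you trust differentiation under the hood, but it implicitly assumes $n\geq 2$ for the second-derivative formula to be meaningful (the $n=1$ case is degenerate, though trivially true). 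Your induction is entirely elementary, covers $n=1$ cleanly as the base case, and makes the role of $\eta\in[0,1]$ explicit at the one spot it is used. The probabilistic reading you mention (the union bound $\Prob{\bigcup_i A_i}\leq \sum_i\Prob{A_i}$ for independent events of probability $\eta$) is arguably the most illuminating, precisely because Lemma~\ref{lem:Omar} is invoked in Theorem~\ref{thm:FRRAS} to bound the failure probability of $k$ gradient evaluations — either version would be an acceptable replacement for the paper's calculus argument.
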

\begin{proof}
For a fixed $n \in \mathbb{N}$ the function $[0,1] \ni \eta \mapsto f(\eta):=1-(1-\eta)^n-n\eta$ is concave since $\tfrac{\drv^2 f(\eta)}{\drv \eta} =- n(n-1)(1-\eta)^{n-2} \leq 0$. Solving $\tfrac{\drv f(\eta)}{\drv \eta}=0$, gives $\eta^{\star}=0$. As $f(0)=0$ and $f(1)=1-n \leq 0$ this proves that $f(\eta)\leq 0$ for all $n \in \mathbb{N}$ and $\eta \in [0,1]$.
\end{proof}
%%%%%%%%%%%%%%%%%%%

%%%%%%%%%%%%%%%%%%%
\begin{proof}[Proof of Theorem~\ref{thm:FRRAS}]
Recall that according to Proposition~\ref{prop:inexact:oracle}, after $k$ iterations of Algorithm~\hyperlink{algo:2}{2}, where the gradient $\nabla G_{\nu}(\lambda_{i})$ in each iteration $i$ is approximated with $\nabla \tilde{G}_{\nu}(\lambda_{i})$ using Algorithm $\mathscr{A}$ as introduced in Definition~\ref{def:oracle}, we get
\begin{equation}\label{eq:diamond}
F(\lambda_{k})+G(\lambda_{k})-C_{\mathcal{X}}(\Phi)\leq \frac{(2+\tfrac{1}{\nu})D^{2}}{2k}+\iota(\nu) + 2 \delta D,
\end{equation}
where the function $\iota(\cdot)$ is given in \eqref{eq:iottaeq}.

%%%%%%%%%%%%%%%%%%%
As ensured by Definition~\ref{def:oracle} with probability $1-\eta$ the numerically evaluated gradient $\nabla \tilde{G}_{\nu}(\lambda)$ is close to its exact value $\nabla G_{\nu}(\lambda)$ or more precisely with probability at least $1-\eta$, $\nabla \tilde{G}_{\nu}(\lambda) \in \mathcal{A}$, where $\mathcal{A}:=\{X \in \mathbb{C}^{n \times n}: \norm{\nabla G_{\nu}(\lambda) - X}_{\opnorm}<  \delta \}$ denotes a confidence region. We first derive the complexity of finding an $\varepsilon$-close solution to $C_{\mathcal{X}}(\Phi)$ given that in every iteration step the numerically evaluated gradient lies in the confidence region $\mathcal{A}$. Afterwards we justify that the probability that the gradient in all iteration steps is evaluated approximately correctly, i.e., such that its value lies inside the confidence region, is high. 

%Lemma~\ref{lem:McDiamand} and \eqref{eq:diamond} tell us how to choose the parameters $\nu$, $\delta$, $n$ and $k$ to achieve an $\varepsilon$-close solution. 
Recall that for our setup the function $\iota(\cdot)$ in \eqref{eq:diamond} has the form 
\begin{align} \label{eq:iottaeq}
\iota(\nu)= \left\{ \begin{array}{ll}
\nu \log\left( \frac{L_{N,M}2\pi^{2N-2}}{\nu}\right)+\nu, \qquad &\quad\nu<L_{N,M}(2\pi^{2N-2}) \\
\nu, & \quad \text{otherwise},
\end{array}\right.
\end{align}
as given in Lemma~\ref{lem:iota:new} with $L_{N,M}$ defined in Lemma~\ref{lem:assLip}. Note that we use a universal encoder as introduced in Remark~\ref{rmk:universalEncoder} which gives $\upsilon=\int_R \drv x = 2 \pi^{2N-2}$. 

According to Remark~\ref{remark:stopping2} and \eqref{eq:iottaeq} we define $\beta=1+\tfrac{\log \e}{\e}$ and $\alpha:=\log(L_{N,M}) + (2N-2)\log(2 \pi) +1$, which by Lemma~\ref{lem:assLip} scales as $\alpha = O(N + \log(N^{3/2}M \p(M)))$. Following Remark~\ref{remark:stopping2} the number of iterations $k$ and the gradient approximation accuracy $\delta$ are chosen such that
\begin{align}
k&=O\Big(\varepsilon^{-2} M^2 \p(M)^2 \left(N+\log(M \p(M))+ \log(\varepsilon^{-1})\right)\Big). \label{eq:iterationsFRPAS} \\
\delta&\leq \tfrac{\varepsilon}{6M\log\left(\gamma^{-1} \vee \e \right)}=O\left(\frac{\varepsilon}{M \p(M)}\right). \label{eq:deltaaa}
\end{align}
As shown in Remark~\ref{remark:stopping2}, for these two parameters with a smoothing parameter $\nu\leq \tfrac{\varepsilon}{3 \beta \left( \alpha + \log(3\beta \varepsilon^{-1})\right)}$ after $k$ iterations of Algorithm~\hyperlink{algo:1}{2} we obtain an $\varepsilon$-close solution. The total complexity for an $\varepsilon$-solution is $k$ times the complexity of a single iteration which is 
\begin{align*}
&k \, O\!\left(M^2 \mathscr{C}_{\Phi,\mathscr{A}}(N,M,\delta^{-1}, \eta^{-1})\right) \\
&\hspace{10mm}= k\, O\!\left(M^2 \mathscr{C}_{\Phi,\mathscr{A}}(N,M,\varepsilon^{-1} M \log(\p(M)), \eta^{-1})\right) \\
&\hspace{10mm}=O\!\left(\varepsilon^{-2} M^4 \p(M)^2 \left(N+\log(M \p(M))+ \log(\varepsilon^{-1})\right)\mathscr{C}_{\Phi,\mathscr{A}}(N,M,\varepsilon^{-1} M \log(\p(M)), \eta^{-1})\right),
\end{align*}
where we used \eqref{eq:iterationsFRPAS} and \eqref{eq:deltaaa}. 

We next show that the randomized scheme is reliable with probability $1-\xi$. As mentioned in Definition~\ref{def:oracle} each evaluation of the gradient $\nabla \tilde G_{\nu}(\lambda)$ is confident with a probability not smaller than $(1-\eta)$. The scheme is successful if the gradient evaluation lies inside the confidence region in each iteration step. Thus the probability that the approximation scheme fails can be bounded by
\begin{equation*}
\Prob{\textnormal{scheme fails}} \leq 1-(1-\eta)^k \leq k \eta = O( \varepsilon^{-2} M^2 \p(M)^2 \left(N+\log(M \p(M))+ \log(\varepsilon^{-1})\right) \eta),
\end{equation*}
where the second inequality is due to Lemma~\ref{lem:Omar} and $\eqref{eq:iterationsFRPAS}$. Therefore for $\eta^{-1} = k \, \xi^{-1}$ the scheme is reliable with probability $1-\xi$.

\end{proof}

\begin{myprop}[{Continuity of the Holevo capacity \cite[Cor.~11]{leung09}}] \label{prop:continuityHolevo}
Let $\Phi_1,\Phi_2 : \mathcal{B}(\mathcal{H}_A) \to \mathcal{B}(\mathcal{H}_B)$ be two quantum channels with $M=\dim \mathcal{H}_B$ such that $\norm{\Phi_1 - \Phi_2}_{\diamond}\leq \varepsilon$ for $\varepsilon \geq 0$, then
\begin{equation*}
\left| C_{\mathcal{X}}(\Phi_1)-C_{\mathcal{X}}(\Phi_2) \right| \leq 8 \varepsilon \log(M) + 4 \Hb(\varepsilon).
\end{equation*}
\end{myprop}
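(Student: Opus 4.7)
The plan is to reduce the claim to a continuity bound for the Holevo quantity (equivalently, the quantum mutual information of a classical-quantum state) and then invoke a Fannes--Audenaert/Alicki--Fannes--Winter type inequality. The driving observation is that for any input ensemble $\{p_i,\rho_i\}$ (which by \cite[Cor.~8.5]{holevo_book} may be taken to consist of at most $N^2$ pure states, so the supremum in \eqref{eq:holevoCapacity} is attained), the Holevo quantity $\chi_j := H(\sum_i p_i \Phi_j(\rho_i)) - \sum_i p_i H(\Phi_j(\rho_i))$ coincides with the quantum mutual information $I(X;B)_{\omega_j}$ of the bipartite classical-quantum state
\begin{equation*}
\omega_j \;:=\; \sum_i p_i \, \ket{i}\!\bra{i}_X \otimes \Phi_j(\rho_i), \qquad j=1,2.
\end{equation*}
Crucially, the marginals $\mathrm{tr}_B \omega_1 = \mathrm{tr}_B \omega_2 = \sum_i p_i \ket{i}\!\bra{i}_X$ are identical.

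The first technical step is to convert the diamond-norm hypothesis into a trace-norm bound on these cq-states. Using joint convexity of the trace norm together with the definition of $\norm{\cdot}_{\diamond}$, which uniformly bounds $\norm{\Phi_1(\rho)-\Phi_2(\rho)}_{\trnorm}$ over all inputs $\rho$, one obtains
\begin{equation*}
\norm{\omega_1 - \omega_2}_{\trnorm} \;\leq\; \sum_i p_i \norm{\Phi_1(\rho_i)-\Phi_2(\rho_i)}_{\trnorm} \;\leq\; \norm{\Phi_1-\Phi_2}_{\diamond} \;\leq\; \varepsilon.
\end{equation*}
Since the $X$-marginals coincide, the difference $|\chi_1-\chi_2|=|I(X;B)_{\omega_1}-I(X;B)_{\omega_2}|$ reduces to bounding $|H(B)_{\omega_1}-H(B)_{\omega_2}|$ and $|H(B|X)_{\omega_1}-H(B|X)_{\omega_2}|$ separately, each of which involves states on an $M$-dimensional output system.

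The second step is to bound each of these two entropy differences via the Fannes--Audenaert inequality \cite{audenaert07}: for density operators on an $M$-dimensional system whose trace distance is at most $\eta$, the von Neumann entropies differ by at most $\eta \log(M-1)+H_b(\eta)$. Applied to $H(B)$ one uses $\norm{\omega_1 - \omega_2}_{\trnorm}\leq \varepsilon$ after partial trace over $X$; applied to $H(B|X) = \sum_i p_i H(\Phi_j(\rho_i))$, one averages the per-$i$ Fannes--Audenaert bound and uses convexity of the trace distance. Combining the two contributions and absorbing the $\log(M-1) \leq \log M$ estimate yields the advertised constants $8\varepsilon\log M + 4 H_b(\varepsilon)$. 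Finally, since the ensemble $\{p_i,\rho_i\}$ was arbitrary, taking suprema and symmetrizing in $\Phi_1,\Phi_2$ yields the claim.

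The main obstacle is pinning down the correct constants: a naive application of Fannes--Audenaert to each summand in the definition of $\chi_j$ can produce an overcounted prefactor, and one has to exploit the identical $X$-marginals to keep the $H(X)$ contributions from entering the bound. Tracking whether $\varepsilon$ or $2\varepsilon$ (from the conversion between trace distance and trace norm) is the relevant small parameter at each occurrence is what ultimately fixes the constants $8$ and $4$.
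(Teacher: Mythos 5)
The paper does not prove Proposition~\ref{prop:continuityHolevo}; it merely cites it as Corollary~11 of Leung and Smith~\cite{leung09}, so there is no internal argument for you to match. Your sketch, however, is essentially a correct self-contained proof, and it in fact establishes a \emph{tighter} bound than the one stated. Running your steps literally: with $\omega_j:=\sum_i p_i\ket{i}\!\bra{i}_X\otimes\Phi_j(\rho_i)$ one has $\chi_j=I(X;B)_{\omega_j}=H(B)_{\omega_j}-H(B|X)_{\omega_j}$; since each $\norm{\Phi_1(\rho_i)-\Phi_2(\rho_i)}_{\trnorm}\le\norm{\Phi_1-\Phi_2}_\diamond\le\varepsilon$ and hence also $\norm{\Trp{X}{\omega_1}-\Trp{X}{\omega_2}}_{\trnorm}\le\varepsilon$, the Fannes--Audenaert inequality on the $M$-dimensional output bounds each of the two entropy differences by $\tfrac{\varepsilon}{2}\log(M-1)+\Hb(\varepsilon/2)$, giving $|\chi_1-\chi_2|\le\varepsilon\log(M-1)+2\Hb(\varepsilon/2)$ for any fixed ensemble. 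Taking the optimizing ensemble for $\Phi_1$ (finitely supported by \cite[Cor.~8.5]{holevo_book}) and symmetrizing gives $|C_{\mathcal X}(\Phi_1)-C_{\mathcal X}(\Phi_2)|\le\varepsilon\log(M-1)+2\Hb(\varepsilon/2)$, roughly a factor of eight better on the leading term.

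Two refinements to your write-up. First, ``convexity of the trace distance'' plays no role in the $H(B|X)$ estimate: the diamond-norm hypothesis already bounds every per-index trace norm $\norm{\Phi_1(\rho_i)-\Phi_2(\rho_i)}_{\trnorm}$ uniformly, so you simply average the per-index Fannes--Audenaert bounds with weights $p_i$. Second, the closing remark that tracking $\varepsilon$ versus $2\varepsilon$ ``fixes the constants $8$ and $4$'' misdescribes the situation: the looser constants in~\cite[Cor.~11]{leung09} arise because Leung--Smith run a single Alicki--Fannes-type estimate on conditional entropies so as to cover several capacity quantities at once, paying constant factors in return for uniformity. Your argument is specialized to the Holevo quantity and yields a strictly smaller right-hand side; the stated proposition then follows a fortiori. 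To close the gap cleanly, note $\log(M-1)\le\log M$, that $\Hb(\varepsilon/2)\le\Hb(\varepsilon)$ whenever $\varepsilon\le\tfrac12$, and that for $\varepsilon>\tfrac18$ the claimed bound already exceeds $\log M\ge C_{\mathcal X}(\Phi_j)$, so there is nothing to prove in that regime.
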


\begin{myremark}[Removing Assumption~\ref{ass:FPRAS}] \label{rmk:avoidingAssFPRAS}
The continuity of the Holevo capacity can be used to remove Assumption~\ref{ass:FPRAS}. 
Let $\{\Phi_1\}_{N,M}$ be a familiy of quantum channels that violates Assumption~\ref{ass:FPRAS}. Consider the family $\{ \Phi_2\}_{N,M}:=\{(1-\xi_{N,M}) \Phi_1 + \xi_{N,M} \Theta \}_{N,M}$ for $\xi_{N,M} \in (0,1)$ with $\Theta(\rho)=\Tr{\rho} \tfrac{\mathbf{1}}{M}$. Using the triangle inequality we find for each member of the two families
\begin{equation}
\norm{\Phi_1-\Phi_2}_{\diamond} = \norm{\xi_{N,M} (\Theta-\Phi_1)}_{\diamond} \leq \xi_{N,M} (\norm{\Theta}_{\diamond}+\norm{\Phi_1}_{\diamond}) \leq 2 \xi_{N,M}, \label{eq:normDiamond}
\end{equation}
where the final inequality uses the fact that the trace norm of a channel is always upper bounded by one.
Note that the family $\{\Phi_2 \}$ as defined above clearly satisfies Assumption~\ref{ass:FPRAS} as $\Phi_2(\rho) \geq \xi_{N,M} \tfrac{\mathbf{1}}{M}$ for all $\rho \in \mathcal{D}(\mathcal{H}_A)$. This argument shows that Assumption~\ref{ass:FPRAS} is not restrictive in the sense that if one encounters a family of channels which does not satisfy it there exists another family that is close in terms of diamond norm which satisfies Assumption~\ref{ass:FPRAS} and whose Holevo capacity is very close as ensured by Proposition~\ref{prop:continuityHolevo}.
\end{myremark}

%%%%%%%%%%%%%%%%%%%%%%%%%%%%%%%%%%%%%%
\subsection{Gradient approximation} \label{subsec:algoForGradient}
As shown in the previous section, the crucial element for our approximation method is Algorithm $\mathscr{A}$ to approximate the gradient $G_{\nu}(\lambda)$ that is given in \eqref{eq:gradient:G:no:input:constr}. In this section we propose two candidates and discuss their corresponding complexity function $\mathscr{C}_{\Phi,\mathscr{A}}$.
The main idea is to approximate $\nabla G_{\nu}(\lambda)$ via a probabilistic method. 
\subsubsection*{First approach: uniform sampling}
This approach relies on a simple randomized algorithm generating independent samples from a uniform distribution. Consider 
\begin{equation} \label{eq:stochastic:approximation:of:dG}
 \nabla \tilde{G}_{\nu}(\lambda):= \frac{\sum_{i=1}^{n} 2^{\tfrac{1}{\nu}\left({\mathrm{tr}}[\Phi(\Eu(X_i)) \lambda ] -H(\Phi(\Eu(X_i)))\right)} \Phi(\Eu(X_i))}{\sum_{i=1}^{n} 2^{\tfrac{1}{\nu}\left({\mathrm{tr}}[ \Phi(\Eu(X_i)) \lambda  ]-H(\Phi(\Eu(X_i)))\right)}},
\end{equation}
where $\{X_{i}\}_{i=1}^{n}$ are i.i.d. random variables uniformly distributed on $R$.
In Lemma~\ref{lem:McDiamand} we derive a measure concentration bound to quantify the approximation error. 
As above, we denote by $L_{N,M}$ the Lipschitz constant of the function $f_{\lambda,M}(x):= \Tr{\Phi(\Eu(x))\lambda}-\Hh{\Phi(\Eu(x))}$ with respect to the $\ell^1$-norm.
\begin{mylem}\label{lem:McDiamand}
For every $0\leq \delta \leq 2^{\frac{-\sqrt{N}L_{N,M}}{\nu}-1}$
 \begin{equation*}
 \Prob{\norm{\nabla G_{\nu}(\lambda) - \nabla \tilde{G}_{\nu}(\lambda)}_{\opnorm}\geq \delta} \leq M\exp \left( -\delta^{2}nK_{N,M} \right) =: \eta
 \end{equation*}
for $K_{N,M}:=\tfrac{1}{576} 2^{\frac{-4\sqrt{N}L_{N,M}}{\nu}}$.
\end{mylem}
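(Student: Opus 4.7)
The plan is to express $\nabla\tilde G_{\nu}(\lambda)$ as a ratio of an empirical Hermitian matrix and an empirical positive scalar, and then to rewrite the difference $\nabla\tilde G_{\nu}(\lambda)-\nabla G_{\nu}(\lambda)$ as a single mean-zero matrix-valued sum to which Tropp's matrix Hoeffding inequality applies. Writing $\sigma_x := \Phi(\Eu(x))$ and $g(x):=2^{f_{\lambda,M}(x)/\nu}$, and letting $X$ be uniform on $R$, one has
\begin{equation*}
\nabla G_{\nu}(\lambda)=\frac{\E{g(X)\sigma_X}}{\E{g(X)}}, \qquad \nabla\tilde G_{\nu}(\lambda)=\frac{\bar H_n}{\bar g_n},
\end{equation*}
with $\bar g_n=\tfrac{1}{n}\sum_{i=1}^n g(X_i)$ and $\bar H_n=\tfrac{1}{n}\sum_{i=1}^n g(X_i)\sigma_{X_i}$. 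Using the algebraic identity
\begin{equation*}
\nabla\tilde G_{\nu}(\lambda)-\nabla G_{\nu}(\lambda) = \frac{\bar H_n\,\E{g(X)} - \bar g_n\,\E{g(X)\sigma_X}}{\bar g_n\,\E{g(X)}} = \frac{1}{\bar g_n\,\E{g(X)}}\cdot \frac{1}{n}\sum_{i=1}^n Y_i,
\end{equation*}
where $Y_i := g(X_i)\bigl(\sigma_{X_i}\,\E{g(X)} - \E{g(X)\sigma_X}\bigr)$, reduces the problem to a concentration bound for a sum of $n$ i.i.d., Hermitian, mean-zero, bounded random matrices.

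Next I would control the range of $g$. By Lemma~\ref{lem:assLip}, $f_{\lambda,M}$ is Lipschitz uniformly in $\lambda\in\Lambda$ with constant $L_{N,M}$; combined with the diameter bound of order $\sqrt{N}$ for $R$ under the universal encoder of Remark~\ref{rmk:universalEncoder}, this yields $g(x)\in[g_{\min},g_{\max}]$ with $g_{\max}/g_{\min}\leq 2^{\sqrt{N}L_{N,M}/\nu}$. Since $\|\sigma_x\|_{\opnorm}\leq 1$ and $\|\E{g(X)\sigma_X}\|_{\opnorm}\leq \E{g(X)}\leq g_{\max}$, the triangle inequality gives the deterministic bound $\|Y_i\|_{\opnorm}\leq 2g_{\max}^2$, together with the almost-sure bounds $\bar g_n\geq g_{\min}$ and $\E{g(X)}\geq g_{\min}$.

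Applying Tropp's matrix Hoeffding inequality to $\tfrac{1}{n}\sum_i Y_i$ then yields
\begin{equation*}
\Prob{\bigl\|\tfrac{1}{n}\textstyle\sum_i Y_i\bigr\|_{\opnorm}\geq s} \leq 2M\exp\!\left(-\frac{ns^2}{c\,g_{\max}^4}\right)
\end{equation*}
for an absolute constant $c$, and combining with $\|\nabla\tilde G_{\nu}(\lambda)-\nabla G_{\nu}(\lambda)\|_{\opnorm}\leq \bigl\|\tfrac{1}{n}\sum_i Y_i\bigr\|_{\opnorm}/(g_{\min}^2)$ and setting $s=\delta\,g_{\min}^2$ delivers
\begin{equation*}
\Prob{\|\nabla G_{\nu}(\lambda)-\nabla\tilde G_{\nu}(\lambda)\|_{\opnorm}\geq \delta}\leq 2M\exp\!\left(-\frac{n\delta^2 g_{\min}^4}{c\,g_{\max}^4}\right),
\end{equation*}
which, after substituting $g_{\min}^4/g_{\max}^4=2^{-4\sqrt{N}L_{N,M}/\nu}$ and collecting numerical constants, is of the claimed form $M\exp(-\delta^2 n K_{N,M})$. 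The upper bound $\delta\leq 2^{-\sqrt{N}L_{N,M}/\nu-1} = g_{\min}/(2g_{\max})$ guarantees that the requested precision does not exceed the trivial diameter $\|\nabla G_\nu(\lambda)\|_{\opnorm}\leq 1$ of the target and that the lower bounds on $\bar g_n$ and $\E{g(X)}$ used in the ratio step are effective.

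The main obstacle is tracking the absolute numerical constants carefully enough to obtain the precise prefactor $\tfrac{1}{576}$: the structural exponent $2^{-4\sqrt{N}L_{N,M}/\nu}$ arises naturally because the variance proxy in matrix Hoeffding scales like $(2g_{\max}^2)^2$ while the denominator contributes $g_{\min}^4$, producing the fourth power of the range ratio $g_{\max}/g_{\min}$. A secondary technical point, already alluded to above, is verifying that the universal encoder gives a diameter bound of order $\sqrt{N}$ rather than $N$, which is what dictates the $\sqrt{N}$ factor in the final exponent.
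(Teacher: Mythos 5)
Your proof takes a genuinely different route from the paper's. You fold the ratio estimator into a single mean-zero Hermitian sum via the cross-multiplication identity, lower-bound the random denominator $\bar g_n\,\E{g(X)}$ deterministically by $g_{\min}^2$, and apply a single matrix Hoeffding bound. The paper instead shifts the exponent by $\bar f_\lambda:=\max_{x\in R}f_\lambda(x)$ so that $g_\lambda:=f_\lambda-\bar f_\lambda\leq 0$, writes $\nabla G_\nu(\lambda)=T(\lambda)/\bar S(\lambda)+\rho_{x^\star}\transp$, proves separate concentration bounds (a scalar McDiarmid inequality for $\bar S$ and a matrix McDiarmid inequality for $T$), and combines them through the ratio-perturbation bound of Claim~\ref{claim:bounding:fraction}; the restriction $\delta\leq 2^{-\sqrt{N}L_{N,M}/\nu-1}$ is exactly what keeps the perturbed denominator in that lemma bounded away from zero. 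Your cross-multiplication avoids the ratio lemma entirely, and in fact your argument does not need the restriction on $\delta$ at all, since the deterministic lower bound $\bar g_n\geq g_{\min}$ holds surely. Both approaches lose the same quantity --- the range ratio of the Gibbs weights --- raised to the fourth power, and both reproduce the prefactor $M$ only up to a factor of $2$ and the constant $1/576$ only up to an absolute constant, which you acknowledge. One caveat worth flagging: Lemma~\ref{lem:assLip} gives Lipschitz continuity of $f_{\lambda,M}$ with respect to the $\ell_1$-norm, and the $\ell_1$-diameter of $R$ (a $(2N-2)$-dimensional box with side lengths $\pi$ or $2\pi$) is $O(N)$, not $O(\sqrt{N})$, so the inequality $g_{\max}/g_{\min}\leq 2^{\sqrt{N}L_{N,M}/\nu}$ that you invoke does not follow from a plain diameter bound. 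The paper's own lower bound $\bar S(\lambda)\geq 2^{-\sqrt{N}L_{N,M}/\nu}$ exhibits the same $\sqrt{N}$-versus-$N$ mismatch (and its stated ball volume $\varepsilon^N$ should be $\varepsilon^{2N-2}$), so this is not a defect unique to your proof, but neither argument is fully watertight on this dimensional factor.
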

\begin{proof}
See Appendix~\ref{app:lem:mcdiarmid}.
\end{proof}

\begin{mycor}\label{cor:complexity}
Given a family of channels $\{\Phi\}_{N,M}$ and some fixed $\varepsilon >0$. With high probability, using Algorithm~\hyperlink{algo:2}{2} with a uniform sampling method as explained in Lemma~\ref{lem:McDiamand}, the complexity for an $\varepsilon$-close solution to the Holevo capacity is 
\begin{equation*}
O\left(M^6 \log(M \log M)^4\Big(N+\log\big(M \log(M \log M)\big)\Big) 2^{c (N^{3/2}+N^{1/2}\log(N^{3/2}M \log(M \log M)))L_{N,M}}\right),
\end{equation*}
where $c>0$ is a constant.
\end{mycor}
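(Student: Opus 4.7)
The strategy is to compose the iteration-complexity of Theorem~\ref{thm:FRRAS} with the sample complexity of the uniform-sampling oracle $\nabla \tilde G_\nu(\lambda)$ from \eqref{eq:stochastic:approximation:of:dG}, certified by Lemma~\ref{lem:McDiamand}, and then to remove Assumption~\ref{ass:FPRAS} through Remark~\ref{rmk:avoidingAssFPRAS}. First I would apply Theorem~\ref{thm:FRRAS} and pin down the gradient-oracle parameters it prescribes: $\delta^{-1} = O(\varepsilon^{-1} M \p(M))$ and $\eta^{-1}$ scaling as $\xi^{-1} \varepsilon^{-2} M^{2} \p(M)^{2} (N + \log(M\p(M)) + \log(\varepsilon^{-1}))$. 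With $\varepsilon$ and $\xi$ absorbed into constants, these reduce to $\delta^{-1} = O(M \p(M))$ and $\log(\eta^{-1}) = O(N + \log(M\p(M)))$.

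Next I invoke Lemma~\ref{lem:McDiamand} to translate $(\delta, \eta)$ into a sample count $n = O(\log(M\eta^{-1}) \, \delta^{-2} \cdot 2^{4\sqrt{N}L_{N,M}/\nu})$; since each sample requires only polynomial-in-$(N,M)$ work to evaluate $\Phi(\Eu(X_i))$ and to diagonalize it, this is also the essential order of $\mathscr{C}_{\Phi,\mathscr{A}}$. The smoothing parameter dictated in Remark~\ref{remark:stopping2} satisfies $\nu^{-1} = O(N + \log(N^{3/2} M \p(M)))$ for fixed $\varepsilon$, so the exponent becomes $O\!\left((N^{3/2} + N^{1/2}\log(N^{3/2} M \p(M))) L_{N,M}\right)$.

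Finally I deal with Assumption~\ref{ass:FPRAS}, which is not assumed in the corollary's hypotheses, via Remark~\ref{rmk:avoidingAssFPRAS}: mixing the channel with weight $\xi_{N,M}$ of order $\varepsilon/\log M$ only shifts the Holevo capacity by $O(\varepsilon)$ according to Proposition~\ref{prop:continuityHolevo}, while producing a new family with $\gamma_M \geq \xi_{N,M}/M$, hence $\p(M) = O(\log(M\log M))$ once $\varepsilon$ is held fixed. Substituting this value of $\p(M)$ into Theorem~\ref{thm:FRRAS}'s prefactor $M^4 \p(M)^2 (N + \log(M\p(M)))$ and then multiplying by the oracle's polynomial factor $\delta^{-2} \log(M\eta^{-1}) = O(M^2 \p(M)^2 (N + \log(M\p(M))))$ produces the claimed $M^6 \log(M\log M)^4 (N + \log(M\log(M\log M)))$ polynomial piece, while the exponential factor turns into $2^{c(N^{3/2} + N^{1/2}\log(N^{3/2} M \log(M\log M))) L_{N,M}}$ as stated.

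I expect the main obstacle to lie in the union-bound bookkeeping: the per-iteration failure probability must be chosen small enough that the total failure probability over $k$ iterations is at most $\xi$, which is precisely why $\eta^{-1}$ in Theorem~\ref{thm:FRRAS} carries the extra $k$ factor; this only inflates $\log \eta^{-1}$ logarithmically and is absorbed by the $N + \log(M\p(M))$ term, but verifying the absorption requires attentive tracking of constants. A secondary technicality is to check that after all substitutions the range restriction $\delta \leq 2^{-\sqrt{N}L_{N,M}/\nu - 1}$ in Lemma~\ref{lem:McDiamand} is met, which holds because $\sqrt{N}L_{N,M}/\nu$ dominates $\log \delta^{-1}$ asymptotically in $N$ and $M$.
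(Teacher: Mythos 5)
Your decomposition is the same as the paper's: compose the iteration complexity and the oracle parameters $(\delta,\eta)$ prescribed by Theorem~\ref{thm:FRRAS} with the sample-complexity bound of Lemma~\ref{lem:McDiamand}, then dispose of Assumption~\ref{ass:FPRAS} via Remark~\ref{rmk:avoidingAssFPRAS} and set $\p(M)=\log(M\log M)$. Two steps in your bookkeeping, however, do not go through as written.

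First, you record $\log(\eta^{-1}) = O(N + \log(M\p(M)))$. That is the order of $\eta^{-1}$ after absorbing $\varepsilon,\xi$, but its \emph{logarithm} is only $O(\log N + \log(M\p(M)))$. If you plug your loose bound into $\delta^{-2}\log(M\eta^{-1})$ and multiply by the iteration prefactor $M^4\p(M)^2(N+\log(M\p(M)))$, you obtain $M^6\p(M)^4\,(N+\log(M\p(M)))^2$ in the polynomial part, whereas the corollary (and the paper's Appendix~\ref{app:cor:complexity}) carry the factor $(N+\log(M\p(M)))$ only once; the paper's $n=O(M^2\p(M)^2\,2^{c'\cdot})$ deliberately pushes the residual $O(\log N + \log(M\p(M)))$ into the exponential term rather than into the polynomial prefactor.

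Second, the direction of your range-restriction argument is reversed. The condition $\delta \le 2^{-\sqrt{N}L_{N,M}/\nu - 1}$ asks that $\log\delta^{-1}$ \emph{exceed} $\sqrt{N}L_{N,M}/\nu$, not the other way around. With $\delta^{-1} = O(M\p(M))$ alone the restriction is violated for all but very small $N$; the paper instead picks $\delta = O\bigl(\tfrac{1}{M\p(M)}\,2^{-c\sqrt{N}L_{N,M}/\nu}\bigr)$, which is fine because Theorem~\ref{thm:FRRAS} only imposes an upper bound on $\delta$, and the resulting extra $2^{O(\sqrt{N}L_{N,M}/\nu)}$ blow-up in $\delta^{-2}$ (hence in $n$) is absorbed into the exponent's constant $c$. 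Note it, since otherwise your invocation of Lemma~\ref{lem:McDiamand} is outside its stated hypotheses.
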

\begin{proof}
See Appendix~\ref{app:cor:complexity}.
\end{proof}

\begin{mycor} [Subexponential or polynomial running time] 
Let $\varepsilon >0$. Given a family of channels $\{\Phi\}_{N,M}$ with $M= \poly(N)$ such that 
\begin{enumerate}[(i)]
\item $\tfrac{1}{L_{N,M}}=\Omega(N^{3/2})$. Then the method described in this section, using an integration method explained in Lemma~\ref{lem:McDiamand}, provides with high probability an $\varepsilon$-approximation to the Holevo capacity with a complexity $O\big(M^6 \log(M \log M)^4(N+\log(M \log(M \log M))) \big) = \poly(N)$.
\item $\tfrac{1}{L_{N,M}}=\Omega(N^{1/2+\alpha})$ for $\alpha>0$. Then the method described in this section, using an integration method explained in Lemma~\ref{lem:McDiamand}, provides with high probability an $\varepsilon$-approximation to the Holevo capacity with a complexity $O\big(M^6 \log(M \log M)^4(N+\log(M \log(M \log M))\big) \, 2^{c N^{1-\alpha}})=\subexp(N)$ for a constant $c>0$.
\end{enumerate}
\end{mycor}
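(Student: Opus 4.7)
The plan is to reduce the statement to a direct arithmetic manipulation of the complexity bound already established in Corollary~\ref{cor:complexity}. Recall that the complexity of the scheme is
\[
O\!\left(M^6 \log(M \log M)^4\big(N+\log(M \log(M \log M))\big)\, 2^{c (N^{3/2}+N^{1/2}\log(N^{3/2}M \log(M \log M)))L_{N,M}}\right),
\]
so under the blanket hypothesis $M=\poly(N)$ the prefactor $M^6 \log(M\log M)^4(N+\log(M\log(M\log M)))$ is already polynomial in $N$. The entire question is whether the exponent $c(N^{3/2}+N^{1/2}\log(N^{3/2}M\log(M\log M)))\,L_{N,M}$ is bounded (case (i)) or grows subexponentially in $N$ (case (ii)).

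For case (i), I would plug in the assumption $L_{N,M}=O(N^{-3/2})$. Then the exponent becomes
\[
c\big(N^{3/2}+N^{1/2}\log(N^{3/2}M\log(M\log M))\big)\cdot O(N^{-3/2}) = O\!\left(1+\tfrac{\log N}{N}\right) = O(1),
\]
using $\log(N^{3/2}M\log(M\log M))=O(\log N)$ under $M=\poly(N)$. Hence $2^{c(\cdot)L_{N,M}}=O(1)$, and the complexity collapses to the polynomial prefactor, giving $\poly(N)$ as claimed.

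For case (ii), I would substitute $L_{N,M}=O(N^{-1/2-\alpha})$ with $\alpha>0$, obtaining
\[
c\big(N^{3/2}+N^{1/2}\log(N^{3/2}M\log(M\log M))\big)\cdot O(N^{-1/2-\alpha}) = O\!\left(N^{1-\alpha}+N^{-\alpha}\log N\right)= O(N^{1-\alpha}),
\]
so the exponential factor is $2^{O(N^{1-\alpha})}$, which is subexponential in $N$ since $1-\alpha<1$. Multiplying by the polynomial prefactor preserves subexponentiality, establishing the second claim.

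I do not foresee a genuine obstacle here; the corollary is essentially a bookkeeping exercise that tracks how the Lipschitz-constant scaling of $f_{\lambda,M}$ (from Lemma~\ref{lem:assLip}) interacts with the exponent in the sample-complexity bound of Corollary~\ref{cor:complexity}. The only minor care point is to verify that the logarithmic terms inside the exponent, namely $\log(N^{3/2}M\log(M\log M))$, remain of order $\log N$ under $M=\poly(N)$, so that they are absorbed into the dominant $N^{3/2}L_{N,M}$ term in both regimes.
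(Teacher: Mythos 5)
Your proof is correct and follows essentially the same route as the paper, which simply states that the corollary follows directly from Corollary~\ref{cor:complexity}; you have spelled out the bookkeeping of substituting the two hypothesized scalings of $L_{N,M}$ into the exponent and verifying that the logarithmic terms are absorbed under $M = \poly(N)$.
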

\begin{proof}
Follows directly from Corollary~\ref{cor:complexity}.
\end{proof}

The following example presents families of channels $\{\Phi\}_{N,M}$ with an arbitrarily scaling Lipschitz constant $L_{N,M}$.
\begin{myex}[Familiy of channels with an arbitrary Lipschitz constant]
Consider the family of channels $\{\Phi : \mathcal{B}(\mathcal{H}_A) \to \mathcal{B}(\mathcal{H}_B)\}_{N,M}$ that maps $\rho \mapsto (1-\phi(N,M))\tfrac{\mathbf{1}}{M}+\phi(N,M) \Theta(\rho)$, where $\Theta : \mathcal{B}(\mathcal{H}_A) \to \mathcal{B}(\mathcal{H}_B)$ denotes an arbitrary cptp map and $\phi(N,M)\geq 0$. Following the lines of the proof of Lemma~\ref{lem:assLip} using that $\norm{\Phi(\rho_1)-\Phi(\rho_2)}_{\trnorm}=\phi(N,M)\norm{\Theta(\rho_1)-\Theta(\rho_2)}_{\trnorm}\leq\phi(N,M) \norm{\rho_1-\rho_2}_{\trnorm}$ it follows that the Lipschitz constant $L_{N,M}$  with respect to the $\ell^1$-norm of the function $f_{\lambda,M}$ as defined in Lemma~\ref{lem:assLip}  if given by $L_{N,M}=(2 M \sqrt{M}(\log(\tfrac{1}{\gamma_M} \vee \e)) + 2M (\log(\tfrac{1}{\gamma_M \e} \vee \e))) \phi(N,M)$.
\end{myex}

%%%%%%%%%%%%%%%%%%%%%%%%%%%%%%%%%%%%%%
\subsubsection*{Second approach: importance sampling} \label{subsec:algoForGradient:second}
The second approach invokes a non-trivial sampling method, known as \emph{importance sampling} \cite{ref:robert-04}.
Define the function $f_{\lambda}(x):=\Tr{\Phi(\Eu(x))\lambda}-H(\Phi(\Eu(x)))$ such that the gradient of $G_{\nu}(\lambda)$, given in \eqref{eq:gradient:G:no:input:constr}, can be expressed as
\begin{equation*}
\nabla G_{\nu}(\lambda) = \frac{\int_{R}2^{\frac{1}{\nu}f_{\lambda}(x)}\Phi(\Eu(x))\drv x}{\int_{R}2^{\frac{1}{\nu}f_{\lambda}(x)}\drv x} = \EQ{\Phi(\Eu(x))},
\end{equation*}
where the expectation is with respect to the probability density $Q(x)=\tfrac{2^{\frac{1}{\nu}f_{\lambda}(x)}}{\int_{R}2^{\frac{1}{\nu}f_{\lambda}(x)}\drv x}$. Consider i.i.d. random variables $\{X_{i}\}_{i=1}^{n}$ according to the density $Q$ and define the random variable $Z_{n}:=\tfrac{1}{n}\sum_{i=1}^{n}\Phi(\Eu(X_i))$.
\begin{mylem} \label{lem:second:approach:num:int}
For every $t\geq 0$ and $n\in\mathbb{N}$, $ \Prob{\norm{\nabla G_{\nu}(\lambda)-Z_{n}}_{\opnorm}\geq t}\leq M \exp\left( \frac{-t^{2}n}{32} \right)$.
\end{mylem}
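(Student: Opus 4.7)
The plan is to recognize this as a direct application of a matrix concentration inequality (Hoeffding/Ahlswede--Winter/Tropp) to an i.i.d. sum of bounded self-adjoint random matrices. First, by the choice of the importance-sampling density $Q(x)\propto 2^{f_{\lambda}(x)/\nu}$, the gradient formula \eqref{eq:gradient:G:no:input:constr} rewrites as $\nabla G_{\nu}(\lambda)=\EQ{\Phi(\Eu(X))}$. Hence, writing $Y_{i}:=\Phi(\Eu(X_{i}))-\EQ{\Phi(\Eu(X))}$, we have $Z_{n}-\nabla G_{\nu}(\lambda)=\tfrac{1}{n}\sum_{i=1}^{n}Y_{i}$ where $Y_{1},\dots,Y_{n}$ are i.i.d., centered, and self-adjoint on $\mathcal{H}_{B}$.

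Second, I would verify the required deterministic bound. Since $\Eu(X_{i})$ is a pure state and $\Phi$ is cptp, each $\Phi(\Eu(X_{i}))$ is a density operator on $\mathcal{H}_{B}$: positive semidefinite with unit trace, and in particular $0\preceq \Phi(\Eu(X_{i}))\preceq \mathbf{1}$, so $\norm{\Phi(\Eu(X_{i}))}_{\opnorm}\leq 1$. By linearity and convexity $\EQ{\Phi(\Eu(X))}$ is likewise a density operator, so $\norm{Y_{i}}_{\opnorm}\leq 2$ and $Y_{i}^{2}\preceq 4\mathbf{1}$ almost surely.

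The key step is then to apply the matrix Hoeffding inequality: for independent self-adjoint $M\times M$ matrices $Y_{i}$ with $\mathbb{E}[Y_{i}]=0$ and $Y_{i}^{2}\preceq A_{i}^{2}$, one has
\begin{equation*}
\Prob{\lambda_{\max}\!\left(\sum_{i=1}^{n}Y_{i}\right)\geq s}\;\leq\; M\,\exp\!\left(\frac{-s^{2}}{8\,\bigl\lVert\sum_{i}A_{i}^{2}\bigr\rVert_{\opnorm}}\right).
\end{equation*}
With $A_{i}^{2}=4\mathbf{1}$, setting $s=nt$ gives $M\exp(-nt^{2}/32)$. The identical bound applies to $-\sum_{i}Y_{i}$, and since for a Hermitian matrix $A$ the operator norm equals $\max\bigl(\lambda_{\max}(A),\lambda_{\max}(-A)\bigr)$, a union bound over the two tails yields the stated inequality (with the factor of two either absorbed into a one-sided statement of interest or treated as a minor constant adjustment).

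There is no genuine obstacle here: the argument is essentially a bookkeeping exercise once one identifies the correct concentration inequality and verifies that density operators automatically provide the operator-norm bound. The only subtle point is ensuring that the importance-sampling density $Q$ is well defined (which requires $\int_{R}2^{f_{\lambda}(x)/\nu}\drv x<\infty$, guaranteed by compactness of $R$ and continuity of $f_{\lambda}$ under Assumption~\ref{a:constraint_fct:s}), so that the centering identity $\nabla G_{\nu}(\lambda)=\EQ{\Phi(\Eu(X))}$ is valid.
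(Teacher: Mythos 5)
Your argument is correct and reaches the right exponent. The paper takes a slightly different but closely related route: instead of centering $\Phi(\Eu(X_i))$ and applying a matrix Hoeffding bound to the i.i.d.\ sum, it applies the matrix bounded-difference (McDiarmid) inequality \cite[Cor.~7.5]{ref:Tropp-12} directly to the averaging function $f(x_1,\dots,x_n)=\tfrac{1}{n}\sum_{i}\Phi(\Eu(x_i))$, checking that changing one coordinate perturbs $f$ by a matrix whose square is $\preccurlyeq \tfrac{4}{n^2}\mathbf{1}$. Because $f$ is an average of independent terms, McDiarmid collapses to Hoeffding here and both yield $\sigma^{2}=4n$ and the same tail $M\exp(-t^{2}n/32)$, so the two proofs are interchangeable; yours is the more direct identification of the mechanism. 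You are also right to flag the one-sided versus operator-norm subtlety: the Tropp bounds control $\lambda_{\max}$, and passing to $\norm{\cdot}_{\opnorm}$ via a union bound over the two tails strictly produces a factor $2M$. The paper's own proof quietly applies the $\lambda_{\max}$ bound as if it were an operator-norm bound without this factor, so the stated constant $M$ is a minor overreach on the paper's side as well. One small tightening you could have made: since both $\Phi(\Eu(X_i))$ and $\EQ{\Phi(\Eu(X))}$ are density operators with spectrum in $[0,1]$, one in fact has $\norm{Y_i}_{\opnorm}\leq 1$ (not merely $2$), which would improve the exponent to $-t^{2}n/8$; the looser bound, however, matches the lemma as stated, so nothing is lost.
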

\begin{proof}
The function defined as $R^{n}\ni x\mapsto f(x_{1},\hdots,x_{n}):=\tfrac{1}{n}\sum_{i=1}^{n}\Phi(\Eu(x_i))$ satisfies the following bounded difference assumption
\begin{align}
\norm{\left( f(x_{1},\hdots,x_{i},\hdots,x_{n})-f(x_{1},\hdots,x_{i-1},x_{i^\prime},x_{i+1},\hdots,x_{n}) \right)^{2}}_{\opnorm}
&\!\! \leq \! \left( \frac{1}{n} (\Phi(\Eu(x_i))\!-\!\Phi(\Eu(x_{i^\prime}))) \! \right)^{2}\label{step1:bdd:ass}\\
& \leq \frac{4}{n^{2}},\label{step2:bdd:ass}
\end{align}
where where \eqref{step1:bdd:ass} follows from $\norm{(B-C)^{2}}_{\opnorm}=\norm{B^{2}-BC-CB-C^{2}}_{\opnorm}\leq \norm{B^{2}}_{\opnorm}+\norm{BC}_{\opnorm}+\norm{CB}_{\opnorm}+\norm{C^{2}}_{\opnorm}\leq \norm{B}_{\opnorm}^{2}+2\norm{B}_{\opnorm}\norm{C}_{\opnorm}+\norm{C}_{\opnorm}^{2}=(\norm{B}_{\opnorm}+\norm{C}_{\opnorm})^{2}$ which uses the submultiplicative property of the operator norm. Inequality \eqref{step2:bdd:ass} is due to the fact that $\Phi(\Eu(x))$ are density operators for all $x\in R$. Hence, by the matrix Mc~Diarmid inequality \cite[Cor.~7.5]{ref:Tropp-12}, we get the concentration bound
\begin{equation*}
\Prob{\norm{\nabla G_{\nu}(\lambda)-Z_{n}}_{\opnorm}\geq t}\leq M \exp\left( \frac{-t^{2}n}{32} \right).
\end{equation*}
\end{proof}
The main difficulty in this approach is how to obtain samples $\{X_{i}\}_{i=1}^{n}$ according to the density $Q$ given above and in particular quantifying its computational complexity. It is well known that if the density $Q$ has a particular structure this samples can be drawn efficiently, e.g., if $Q$ is a log-concave density in polynomial time \cite{ref:Vempala-07}. 
Providing assumptions on the channel $\Phi$ such that sampling according to $Q$ can be done efficiently is a topic of further research.
\begin{myremark}
Let $\mathscr{S}(N,M)$ denote the computational cost of drawing one sample according to the density $Q$. Then, Lemma~\ref{lem:second:approach:num:int} shows that the computational complexity the gradient approximation given in Definition~\ref{def:oracle} using the importance sampling algorithm is $\mathscr{C}_{\Phi,\mathscr{A}}(N,M,\delta^{-1}, \eta^{-1}) = \mathscr{S}(N,M) \tfrac{32}{\delta^{2}}\ln\left(\tfrac{M}{\eta}\right) $. 
\end{myremark}

%%%%%%%%%%%%%%%%%%%%%%%%%%%%%%%%%%%%%%
%%%%%%%%%%%%%%%%%%%%%%%%%%%%%%%%%%%%%%
%%%%%%%%%%%%%%%%%%%%%%%%%%%%%%%%%%%%%%
\subsection{Simulation results}
The following three examples show the performance of our method to compute the Holevo capacity. In the first example we have chosen a quantum channel for which an analytical expression of the Holevo capacity is known. In the second example we demonstrate how to compute the classical capacity of an arbitrary qubit Pauli channel. As a third example, we have chosen a random qubit-input qubit-output channel for which the Holevo capacity is unknown.

The Choi-Jamiolkowski representation ensures that every quantum channel $\Phi:\mathcal{B}(\mathcal{H}_A)\to \mathcal{B}(\mathcal{H}_B)$ can be written as
\begin{equation*}
\sigma_B = \Phi(\rho_A) = N \, \Trp{A}{\left( \mathcal{T}_A(\rho_A)\otimes \textnormal{id}_B \right) \tau_{AB}},
\end{equation*}
where $\mathcal{T}_A(\cdot)$ is the transpose mapping and $\tau_{AB}$ denotes a density operator that fully characterizes the quantum channel and that satisfies $\Trp{B}{\tau_{AB}}=\tfrac{1}{N} \mathbf{1}$. For the following examples we use this representation of the channel.

Note that our method works for arbitrary quantum channels having a finite input dimension. The reason we have chosen qubit channels is to save computation time. All the simulations in this section are performed on a 2.3 GHz Intel Core i7 processor with 8 GB RAM with Matlab. For the evaluation of the gradient $\nabla G_{\nu}$ that involves the computation of an integral over the domain $[0,\pi]\times [0,2\pi]$ we used a trapezoidal method with a grid having $100 \times 200$ points. 
\begin{myex}[Qubit depolarizing channel] \label{ex:depolarizingChannel}
We consider the depolarizing channel with input and ouput dimension $2$, that can be described by the map $\rho_A \to (1-p) \rho_A + p  \tfrac{1}{2}\mathbf{1}$, for $p\in[0,1]$. It Choi state is given by $\tau_{AB}=(1-p) \ket{\omega} \! \bra{\omega} + \tfrac{p}{4} \mathbf{1}$, where $\ket{\omega}$ denotes a maximally entangled state. The Holevo capacity of the depolarizing channel can be computed analytically being \cite[Thm.~19.4.2]{wilde_book}
\begin{equation}
C_{\mathcal{X}}(p) = 1+\left(1-\frac{p}{2} \right)\log \left(1-\frac{p}{2} \right)+\frac{p}{2} \log\left(\frac{p}{2} \right)=1-\Hb\!\left(\frac{p}{2}\right).\label{eq:HolevoDepolarizing}
\end{equation}
Table~\ref{tab:depolarizing} shows the performance of our algorithm for the task of approximating the Holevo capacity for the depolarizing channel with parameter $p=\tfrac{1}{3}$. According to \eqref{eq:HolevoDepolarizing} the precise value of the Holevo capacity is $C_{\mathcal{X}}(p=\tfrac{1}{4})=1-\Hb(\tfrac{1}{6}) \approx 0.3499775784$.

 \begin{table}[!htb]
\centering 
\caption{Holevo Capacity of a depolarizing channel with $p=\tfrac{1}{3}$.}
\label{tab:depolarizing}

  \begin{tabular}{c @{\hskip 3mm} | @{\hskip 3mm} c @{\hskip 2mm} c @{\hskip 2mm} c}
 Iterations  & 10 & $10^2$ & $10^3$ \\
 $\nu$ & 0.1602 & 0.0174 & 0.0018\\
 $C_{\mathcal{X},\textnormal{UB}}$ &  0.3603 & 0.3500 & 0.3500 \\
  $C_{\mathcal{X},\textnormal{LB}}$ & 0.3500 & 0.3500 & 0.3500 \\
 A posteriori error & 1.029$\cdot10^{-2}$ &3.401$\cdot10^{-5}$ &8.019$\cdot10^{-6}$  \\
 Time [s] & 414 & 4144 & 41578
  \end{tabular}
\end{table}
\end{myex}

\begin{myex}[Qubit Pauli channel]
Consider the general Pauli channel for an input and output dimension $2$, which can be described by the map $\rho_A \to (1-p_X - p_Y - p_Z) \rho_A + p_X X \rho_A X + p_Y Y \rho_A Y + p_Z Z \rho_A Z$, where $X$,$Y$,$Z$ denote the Pauli matrices and $p_X,p_Y,p_Z \in [0,1]$ such that $p_X+p_Y+p_Z \in [0,1]$. The Choi state $\tau_{AB}$ representing this channel can be computed to be
\begin{equation*}
\tau_{AB} = \frac{1}{2} \left( \begin{matrix} 
1-p_X - p_Y & 0 & 0 & 1-p_X-p_Y-2p_Z \\
0& p_X + p_Y & p_X-p_Y & 0\\
0& p_X-p_Y & p_X + p_Y & 0 \\
1-p_X-p_Y-2p_Z & 0 & 0 & 1-p_X-p_Y
\end{matrix} \right).
\end{equation*}
\end{myex}
King proved that the Holevo capacity is additive for product channels, under the condition that one of the channels is a \emph{unital} qubit channel, with the other completely arbitrary \cite{king02}.\footnote{Unital channels are channels that map the identity to the identity, i.e., $\Phi(\textnormal{id})=\textnormal{id}$.} As Pauli channels are unital channels, the Holevo capacity is therefore equal to the classical capacity for arbitrary Pauli qubit channels. For certain qubit Pauli channels an analytical formula for the Holevo capacity is known (cf.\ the depolarizing channel in Example~\ref{ex:depolarizingChannel}), however in general the Holevo capacity is unknown. Our method introduced above allows us to approximate the Holevo capacity. To demonstrate this we compute upper and lower bounds for the Holevo capacity of a qubit Pauli channel with $p_X=\tfrac{1}{7}$, $p_Y=\tfrac{1}{10}$ and $p_Z=\tfrac{1}{4}$ as shown in Table~\ref{tab:PauliChannel}.
 \begin{table}[!htb]
\centering 
\caption{Holevo Capacity of a qubit Pauli channel with $p_X=\tfrac{1}{7}$, $p_Y=\tfrac{1}{10}$ and $p_Z=\tfrac{1}{4}$.}
\label{tab:PauliChannel}

  \begin{tabular}{c @{\hskip 3mm} | @{\hskip 3mm} c @{\hskip 2mm} c @{\hskip 2mm} c}
 Iterations & 10 & $10^2$ & $10^3$ \\
  $\nu$ & 0.1265 & 0.0138 & 0.0014 \\
 $C_{\mathcal{X},\textnormal{UB}}$ & 0.2026 & 0.2002 & 0.2002 \\
  $C_{\mathcal{X},\textnormal{LB}}$ & 0.1399 & 0.1894 &  0.1983\\
 A posteriori error & 6.267$\cdot10^{-2}$ &1.087$\cdot10^{-2}$ &1.940$\cdot10^{-3}$ \\
 Time [s] &  409 & 3919 & 40154 
  \end{tabular}
\end{table}

\begin{myex}[Random qubit channel]
We consider a random qubit-input qubit-output channel $\Phi:\mathfrak{T}(\mathcal{H}_A)\to \mathfrak{T}(\mathcal{H}_B)$ with $N=\dim(\mathcal{H}_A)=d_B=\dim(\mathcal{H}_B)=2$. More precisely, we consider the Choi state of $\Phi$, which is given by
\begin{equation*}
\tau_{AB} = \frac{1}{N} (\rho_A^{-\frac{1}{2}}\otimes \textnormal{id}_B) \, \rho_{AB} \, (\rho_A^{-\frac{1}{2}}\otimes \textnormal{id}_B),
\end{equation*}
where $\rho_{AB}$ is a random density matrix.\footnote{There are different methods to generate random density matrices which is however not relevant for this work. The interested reader might consider \cite{zyczkowski11} for further information.} To demonstrate the performance of our method, let
\begin{equation}
\tau_{AB}=\left( \begin{matrix} 
0.2041  & -0.1145 - 0.0926\ii &0.0590-0.0187\ii & 0.0721 + 0.0487\ii\\ 
 -0.1145 + 0.0926\ii & 0.2959 & -0.0861 - 0.0928\ii & -0.0590 + 0.00187\ii \\
 0.0590 + 0.0187\ii & -0.0861 + 0.0928\ii & 0.2350 & -0.1296 + 0.0128\ii \\
 0.0721 - 0.0487\ii & -0.0590 - 0.0187\ii & -0.1296 - 0.0128\ii & 0.2650 \\
\end{matrix} \right). \label{eq:exChannel}
\end{equation}

\end{myex}

 \begin{table}[!htb]
\centering 
\caption{Holevo Capacity of a random qq-channel described by its Choi state given in \eqref{eq:exChannel}.}
\label{tab:randomQQ}

  \begin{tabular}{c @{\hskip 3mm} | @{\hskip 3mm} c @{\hskip 2mm} c @{\hskip 2mm} c}
 Iterations  & 10 & $10^2$ & $10^3$ \\
  $\nu$ &  0.2575 & 0.0280 & 0.0028 \\
 $C_{\mathcal{X},\textnormal{UB}}$ &  0.3928 & 0.2648 & 0.2573 \\
  $C_{\mathcal{X},\textnormal{LB}}$  & 0.0900 & 0.2032 & 0.2522 \\
 A posteriori error &  3.028$\cdot10^{-1}$ &6.156$\cdot10^{-2}$ &5.061$\cdot10^{-3}$ \\
 Time [s] & 421 & 4025 & 41630 
  \end{tabular}
\end{table}

%%%%%%%%%%%%%%%%%%%%%%%%%%%%%%%%%
%%%%%%%%%%%%%%%%%%%%%%%%%%%%%%%%%

%%%%%%%%%%%%%%%%%%%%%%%%%%%%%%%%
%%%%%%%%%%%%%%%%%%%%%%%%%%%%%%%%
\section{Conclusion and Future Work} \label{sec:conclusion}
We have presented a new approach to approximate the capacity of cq channels with discrete or continuous bounded input alphabets possibly having constraints on the input distribution. More precisely, we derived iterative upper and lower bounds for the capacity and proved that they converge with a given rate. The dual problem of the cq channel capacity formula turns out to have a particular structure such that the Lagrange dual function admits a closed form solution. Applying smoothing techniques to the dual function allows us to finally approximate the problem efficiently. For cq channels with a discrete input alphabet of size $N$ and without additional input constraints, the complexity of generating an $\varepsilon$-close solution is $O(\tfrac{(N \vee M) M^3 (\log(N))^{1/2}}{\varepsilon})$ where $M$ denotes the output dimension. Using the idea of a universal encoder then enables us to extend the idea for the task of approximating the Holevo capacity. It turns out that the problem gets mapped to a multidimensional integration problem. We compute the complexity for generating an $\varepsilon$-close solution to the Holevo capacity using the new method. In addition, we derive assumptions on the family of channels under which an $\varepsilon$-close solution can be determined in subexponential or even polynomial time.

Recall that the classical capacity of a quantum channel $\Phi:\mathcal{B}(\mathcal{H}_A)\to \mathcal{B}(\mathcal{H}_B)$ is given by its regularized Holevo capacity, i.e., 
\begin{equation}
C(\Phi)= \lim_{k\to \infty} \frac{1}{k} C_{\mathcal{X}}(\Phi^{\otimes k}). \label{eq:HSWcapacity}
\end{equation}
The regularization required in \eqref{eq:HSWcapacity} makes the classical capacity of a quantum channel very hard to compute. If for some channel $\Phi$ the Holevo capacity is additive, i.e., $C_{\mathcal{X}}(\Phi \otimes \Theta)=  C_{\mathcal{X}}(\Phi) + C_{\mathcal{X}}(\Theta)$ for an arbitrary channel $\Theta$, this implies that $C(\Phi)=C_{\mathcal{X}}(\Phi)$ making the classical capacity a lot simpler to compute and proves that entangled states at the encoder do not help to improve the rate.

For a while there existed a conjecture that the Holevo capacity is additive for all quantum channels. In 2009 using techniques from measure concentration, Hastings disproved the conjecture by constructing high dimensional random quantum channels whose Holevo capacity is provably not additive \cite{hastings09}.
However, it remains unsolved whether there exist explicit small dimensional quantum channels whose Holevo capacity is not additive. Our approximation scheme can be used to check the additivity of the Holevo capacity for channels with small dimensions. 

%\ds{For example, we verified the Holevo capacity is additive for some qubit channels that are listed in Appendix~\ref{ap.}.}

The number of iterations our approximation scheme needs for an $\varepsilon$-solution highly depends on the Lipschitz constant estimate of the objective's gradient. Recently there has been some work motivating an adaptive estimate of the local Lipschitz constant that has been shown to be very efficient in practice (up to three orders of magnitude reduction of computation time), while preserving the worst-case complexity \cite{ref:Baes-14}. This may help to achieve a faster convergence for our algorithm, i.e., a smaller number of iterations would be required to achieve a certain approximation error.

Another idea to reduce the computation time of the approximation scheme is to make use of possible symmetry properties the channel might have. More precisely, certain symmetry properties could enable us to restrict the set $R$ over which one has to integrate in order to evaluate the gradient $\nabla G_{\nu}$. This would speed up the computational cost per iteration considerably. 

A different topic that deserves further investigation is to check whether the approach to approximate a capacity formula via smoothing its dual program might be applicable for different capacities such as the classical entanglement-assisted capacity or the channel coherent information.

%%%%%%%%%%%%%%%%%%%%%%%%%%%%%%%%%%

%%%%%%%%%%%%%%%%%%%%%%%%%%%%%%%%%%%

%%%%%%%%%%%%%%%%%%%%%%%%%%%%%%%%%%
%%%%%%%%%%%%%%%%%%%%%%%%%%%%%%%%%%%

%%%%%%%%%%%%%%%%%%%%%%%%%%%%%%%%%
\appendix

\section{Proof of Lemma~\ref{lem:inputConstraint}} \label{app:inputCost}
This proof follows a very similar structure as the proof of Lemma~2.1 in \cite{TobiasSutter14}. Adding the constraint $\sigma:=\sum_{i=1}^N p_i \rho_i$ gives $\I{p}{\rho}=\Hh{\sigma}-\sum_{i=1}^N p_i \Hh{\rho_i}$. Since $p \in \Delta_N$ and $\rho_i \in \mathcal{D}(\mathcal{H})$ for all $1\leq i \leq N $ it follows that $\sigma \in \mathcal{D}(\mathcal{H})$. 

By definition of $S_{\max}$ it is clear that the constraint $\inprod{p}{s} \leq S$ is inactive if $S\geq S_{\max}$ proving \eqref{eq:cqCapacityPrimalwithoutConst}. It remains to show that for $S < S_{\max}$ the optimization problems \eqref{eq:cqCapacity} and \eqref{eq:cqCapacityPrimal} are equivalent. To keep notation simple, let $C_{\mathsf{cq}}(S):=C_{\mathsf{cq},S}(\W)$ for some fixed cq channel $\W$. We next prove that $C_{\mathsf{cq}}(S)$ is concave in $S$ for $S \in [0,S_{\max}]$. Let $S^{(1)},S^{(2)}\in [0,S_{\max}]$, $\lambda \in [0,1]$ and let $p^{(i)}$ be capacity achieving input distribution for $C_{\mathsf{cq}}(S^{(i)})$ with $i \in \{1,2 \}$. Let $p^{(\lambda)}:= \lambda p^{(1)} + (1-\lambda) p^{(2)}$, which gives
\begin{align*}
\inprod{s}{p^{(\lambda)}} &= \lambda \inprod{s}{p^{(1)}} + (1-\lambda)\inprod{s}{p^{(2)}} \\
&\leq \lambda S^{(1)} + (1-\lambda) S^{(2)} \\
&=: S^{(\lambda)} \in [0,S_{\max}].
\end{align*}
Using the fact that $p \mapsto \I{p}{\rho}$ is concave\footnote{This follows directly from the well known fact that $p \mapsto \Hh{p}$ is concave.} we obtain
\begin{align*}
\lambda C_{\mathsf{cq}}(S^{(1)}) + (1-\lambda)C_{\mathsf{cq}}(S^{(2)}) &= \lambda \I{p^{(1)}}{\rho} + (1-\lambda) \I{p^{(2)}}{\rho} \\
&\leq \I{p^{(\lambda)}}{\rho} \\
& \leq C_{\mathsf{cq}}(S^{(\lambda)}), 
\end{align*}
where the final inequality follows from \eqref{eq:cqCapacity}. 

$C_{\mathsf{cq}}(S)$ is clearly non-degreasing in $S$ as enlarging $S$ relaxes the input cost constraint. We next show that $C_{\mathsf{cq}}(S)$ is even strictly increasing in $S \in [0,S_{\max}]$. We first prove that for all $\varepsilon >0$,
\begin{equation}
C_{\mathsf{cq}}(S_{\max}-\varepsilon) < C_{\mathsf{cq}}(S_{\max}). \label{eq:triangles}
\end{equation}
Suppose $C_{\mathsf{cq}}(S_{\max}-\varepsilon) = C_{\mathsf{cq}}(S_{\max})$ and denote $C_{\mathsf{cq}}^{\star}=\max_{p \in \Delta_N} \I{p}{\rho}$. This implies that there exists a $\bar p \in \Delta_N$ such that $\I{\bar p}{\rho}=C_{\mathsf{cq}}^{\star}$  and $\inprod{\bar p}{s}= S_{\max}- \varepsilon$, which contradicts the definition of $S_{\max}$. 
Thus by concavity of $C_{\mathsf{cq}}(S)$ together with the the non-decreasing property and \eqref{eq:triangles} imply that $C_{\mathsf{cq}}(S)$ is strictly increasing in $S$.

Finally, assume that $C_{\mathsf{cq}}(S)$ is achieved for $p^{\star} \in \Delta_N$ such that $\inprod{p^{\star}}{s}= \bar S < S$. Then we have
\begin{equation*}
C_{\mathsf{cq}}(\bar S):= \left \lbrace \begin{array}{ll} 
\max \limits_p & \I{p}{\rho}\\
\st & \inprod{p}{s} \leq \bar S\\
& p \in \Delta_N
\end{array} \right. = \I{p^{\star}}{\rho} = C_{\mathsf{cq}}(S),
\end{equation*}
which is a contradiction as $C_{\mathsf{cq}}(S)$ is strictly increasing in $S \in [0,S_{\max}]$. \qed
%%%%%%%%%%%%%%%%%%%%%%%%%%%%%%%%%%%%%%%%%%%%%%%%%%%%
%%%%%%%%%%%%%%%%%%%%%%%%%%%%%%%%%%%%%%%%%%%%%%%%%%%%
%%%%%%%%%%%%%%%%%%%%%%%%%%%%%%%%%%%%%%%%%%%%%%%%%%%%

\section{Proof of Lemma~\ref{lem:compact:set:CQ}} \label{app:compact}
The proof is extending the ideas used to prove \cite[Lem.~2.4]{TobiasSutter14}.
Consider the following two optimization problems
\begin{align*} 
\mathsf{P}_{\beta}:\left\{ \begin{array}{ll}
	\max\limits_{p,\sigma,\varepsilon} 		& \Hh{\sigma} - \sum_{i =1}^N p_i \Hh{\rho_i} - \beta\varepsilon \\
			\st						& \norm{ \sum_{i=1}^N p_i \rho_i -\sigma }_{\opnorm} \leq \varepsilon   \\
			                                				& \inprod{p}{s} = S\\
			                                				& p \in\Delta_N,  \sigma \in \mathcal{D}(\mathcal{H}), \varepsilon \in \Rp
	\end{array}\right.
  \quad \textnormal{and} \quad
	\quad 
\mathsf{D}_{\beta}: \left\{ \begin{array}{ll}
			\min\limits_{\lambda} 		&F(\lambda) + G(\lambda) \\
			\st					& \norm{\lambda}_{\trnorm} \leq \beta \\
									& \lambda\in \Hop.
	\end{array}\right. 
\end{align*}
\begin{myclaim}
Strong duality holds between $\mathsf{P}_{\beta}$ and $\mathsf{D}_{\beta}$.
\end{myclaim}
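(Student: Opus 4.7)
The plan is to derive the Lagrange dual of $\mathsf{P}_{\beta}$ explicitly and identify it with $\mathsf{D}_{\beta}$. First I would introduce an auxiliary matrix variable $\mu:=\sum_{i=1}^{N} p_i \rho_i - \sigma\in\Hop$, so that the coupling constraint in $\mathsf{P}_{\beta}$ splits into the affine equality $\mu = \sum_{i=1}^{N} p_i \rho_i - \sigma$ together with the norm inequality $\norm{\mu}_{\opnorm}\leq \varepsilon$. Attaching a Lagrange multiplier $\lambda\in\Hop$ to the equality, the Lagrangian decomposes cleanly as
\begin{equation*}
L(p,\sigma,\varepsilon,\mu,\lambda) = \bigl[\Hh{\sigma}-\inprod{\sigma}{\lambda}_F\bigr] + \sum_{i=1}^{N}p_i\bigl[\inprod{\rho_i}{\lambda}_F - \Hh{\rho_i}\bigr] - \beta\varepsilon - \inprod{\lambda}{\mu}_F,
\end{equation*}
so that the three blocks $\sigma$, $p$, and $(\varepsilon,\mu)$ are decoupled.

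Next I would take the supremum of $L$ over the primal variables block by block. The suprema over $\sigma\in\mathcal{D}(\mathcal{H})$ and over $p$ feasible for $\mathsf{P}_{\beta}$ reproduce exactly $F(\lambda)$ and $G(\lambda)$ by their definitions in \eqref{eq:GandF}. For the block $(\varepsilon,\mu)$ I would invoke the duality between the operator norm and the trace norm on $\Hop$, namely $\max_{\norm{\mu}_{\opnorm}\leq \varepsilon}(-\inprod{\lambda}{\mu}_F) = \varepsilon\norm{\lambda}_{\trnorm}$, which reduces the $(\varepsilon,\mu)$-supremum to $\sup_{\varepsilon\geq 0}\varepsilon(\norm{\lambda}_{\trnorm}-\beta)$. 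This last quantity is finite (and equal to zero) exactly when $\norm{\lambda}_{\trnorm}\leq \beta$, and equals $+\infty$ otherwise. The dual objective is therefore $F(\lambda)+G(\lambda)$ with effective domain $\{\lambda\in\Hop:\norm{\lambda}_{\trnorm}\leq\beta\}$, which is precisely $\mathsf{D}_{\beta}$.

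Finally I would establish strong duality itself by appealing to the same standard convex duality result already used in the paper \cite[Prop.~5.3.1]{ref:Bertsekas-09}: the primal is a concave maximization over a convex set with an affine equality constraint $\inprod{p}{s}=S$ and a convex (operator-norm) inequality constraint that is strictly feasible, since for any $(p,\sigma)$ with $p\in\Delta_N$ and $\sigma\in\mathcal{D}(\mathcal{H})$ one can take $\varepsilon$ strictly larger than $\norm{\sum_i p_i \rho_i - \sigma}_{\opnorm}$. The main obstacle is precisely this constraint-qualification step, since both an affine equality and a norm inequality are present; as a backup route I would instead invoke Sion's minimax theorem, noting that after eliminating $\varepsilon$ the saddle function is concave and continuous in $(p,\sigma)$ on the compact convex set $\{p\in\Delta_N:\inprod{p}{s}=S\}\times\mathcal{D}(\mathcal{H})$ and linear in $\lambda$ on the compact convex trace-norm ball, so $\min$ and $\max$ may be exchanged and the inner maximum decouples into $F(\lambda)+G(\lambda)$.
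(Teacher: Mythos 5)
Your derivation is correct and lands on the same essential ingredients as the paper: the duality $\norm{X}_{\opnorm}=\max_{\norm{\lambda}_{\trnorm}\leq 1}\inprod{\lambda}{X}_F$, the decomposition of the dual objective into $F(\lambda)+G(\lambda)$ with the implicit constraint $\norm{\lambda}_{\trnorm}\leq\beta$, and a standard convex-duality/minimax citation. The route differs in bookkeeping: the paper first eliminates $\varepsilon$ (observing that the optimal $\varepsilon$ equals $\norm{\sum_i p_i\rho_i-\sigma}_{\opnorm}$), substitutes the variational formula for $-\beta\norm{\cdot}_{\opnorm}$ directly into the objective, and then exchanges $\max$ and $\min$, whereas you keep $\varepsilon$ and introduce a new splitting variable $\mu$, dualize the affine coupling constraint, and let the $(\varepsilon,\mu)$-supremum generate the trace-norm ball as the effective dual domain. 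Both bookkeeping schemes are standard and equivalent here. Your final backup via Sion's minimax theorem is actually the cleanest way to close the argument: after the $\varepsilon$-elimination step the saddle function is jointly continuous, concave in $(p,\sigma)$ on the compact convex set $\{p\in\Delta_N:\inprod{p}{s}=S\}\times\mathcal{D}(\mathcal{H})$ and linear in $\lambda$ on the compact trace-norm ball, so Sion applies with no constraint-qualification check; the paper's citation of Bertsekas is morally the same justification but is quoting a Lagrangian-duality result rather than a minimax theorem, so being explicit about Sion as you do is arguably a small improvement.
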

\begin{proof}
According to the identity $\norm{ \sum_{i=1}^N p_i \rho_i -\sigma }_{\opnorm}=\max_{\norm{\lambda}_{\trnorm}\leq 1} \inprod{\lambda}{\sum_{i=1}^N p_i \rho_i -\sigma}_F$ \cite[p.~7]{holevo_book} the optimization problem $\mathsf{P}_{\beta}$ can be rewritten as
\begin{align*} 
\mathsf{P}_{\beta}:  \left\{ \begin{array}{ll}
			\max\limits_{p,\sigma} 			&\Hh{\sigma} - \sum_{i =1}^N p_i \Hh{\rho_i} + \min\limits_{\norm{\lambda}_{\trnorm}\leq \beta}\inprod{\lambda}{\sum_{i=1}^N p_i \rho_i -\sigma}_F\\
			\text{s.t. }					& \inprod{p}{s} = S \\
			 							& p \in\Delta_N,  \sigma \in \mathcal{D}(\mathcal{H}),
	\end{array} \right.
\end{align*}
whose dual program, where strong duality holds according to \citep[Proposition~5.3.1, p.~169]{ref:Bertsekas-09} is given by
\begin{align*}
 \quad  \left\{ \begin{array}{lll}
			\min\limits_{\norm{\lambda}_{\trnorm}\leq \beta}  &\max\limits_{p,\sigma} 		&\Hh{\sigma} - \sum_{i =1}^N p_i \Hh{\rho_i} +  \inprod{\lambda}{\sum_{i=1}^N p_i \rho_i -\sigma}_F \\
			&\text{s.t. } 				  &\inprod{p}{s} = S \\
								&	&   p \in\Delta_N,  \sigma \in \mathcal{D}(\mathcal{H}),
	\end{array}\right. 
\end{align*}
which clearly is equivalent to $\mathsf{D}_{\beta}$ with $F(\cdot)$ and $G(\cdot)$ as given in \eqref{eq:GandF}.
\end{proof}
We denote by $\varepsilon^{\star}(\beta)$ the optimizer of $\mathsf{P}_{\beta}$ with the respective optimal value $J^{\star}_{\beta}$. Note that for
\begin{align}  \label{eq:J(eps):CQ}
J(\varepsilon):= \left\{ \begin{array}{ll}
	\max\limits_{p,\sigma} 		& \Hh{\sigma} - \sum_{i=1}^N p_i \Hh{\rho_i}  \\
			\st 						& \norm{ \sum_{i=1}^N p_i \rho_i -\sigma}_{\opnorm} \leq \varepsilon \\
			                                				& \inprod{p}{s} = S\\
			                                				& p \in\Delta_N,  \sigma \in \mathcal{D}(\mathcal{H})
	\end{array}\right. ,
	\end{align}
the mapping $\varepsilon \mapsto J(\varepsilon)$, the so-called perturbation function, is concave \cite[p.~268]{ref:BoyVan-04}. In a next step we write the optimization problem \eqref{eq:J(eps):CQ} in another equivalent form
\begin{align}  \label{eq:J(eps):equiv:CQ}
J(\varepsilon)=  \left\{ \begin{array}{ll}
			\max\limits_{p,v} 		&- \sum_{i=1}^N p_i \Hh{\rho_i} + \Hh{ \sum_{i=1}^N p_i \rho_i + \varepsilon v}  \\
			\st				&\norm{v}_{\opnorm} \leq 1 \\
								& \inprod{p}{s} = S \\
			 					& p\in\Delta_{N}, \ v\in\Hop.
	\end{array} \right. 
	\end{align}
The main idea of the proof is to show that for a sufficiently large $\beta$, which we will quantify in the following, the optimizer $\varepsilon^{\star}(\beta)$ of $\mathsf{P}_{\beta}$ is equal to zero.
That is, in light of the duality relations, the constraint $\norm{\lambda}_{\trnorm} \leq \tfrac{\beta}{2}$ in $\mathsf{D}_{\beta}$ is inactive and as such $\mathsf{D}_{\beta}$ is equivalent to $\mathsf{D}$. By using Taylor's theorem, there exists a $y_{\varepsilon}\in[0,\varepsilon]$ such that the entropy term in the objective function of \eqref{eq:J(eps):equiv:CQ} can be bounded as
\begin{align}
\Hh{ \sum_{i=1}^N p_i \rho_i + \varepsilon v}	&=		\Hh{ \sum_{i=1}^N p_i \rho_i} - \inprod{\log\left( \sum_{i=1}^N p_i\rho_{i}  \right) + \tfrac{1}{\ln 2} \boldsymbol{1}}{v}_{\!\!F}\varepsilon  \nonumber\\ 
 		& \qquad -\inprod{\left(  \sum_{i=1}^N p_i\rho_{i} + y_{\varepsilon} v \right)^{-1}}{v^{2}}_{\!\!F}\varepsilon^{2}\tfrac{1}{\ln 2}   \nonumber\\
		&\leq		\Hh{ \sum_{i=1}^N p_i \rho_i} - \inprod{\log\left( \sum_{i=1}^N p_i \rho_{i}  \right) +\tfrac{1}{\ln 2} \boldsymbol{1}}{v}_{\!\!F}\varepsilon + \frac{M}{\gamma \ln 2}\varepsilon^{2}.  \label{eq:Taylor:CQ}		
		\end{align}
Thus, the optimal value of problem $\mathsf{P}_{\beta}$ can be expressed as
\begin{align}
J_{\beta}^{\star} 	&= 		\max\limits_{\varepsilon} \left\{ J(\varepsilon)-\beta \varepsilon \right\} \nonumber \\
			&\leq		\max\limits_{\varepsilon} \left\{ \max\limits_{p,v}\left[ - \sum_{i=1}^N p_i \Hh{\rho_i} + \Hh{ \sum_{i=1}^N p_i \rho_i } \right. \right. \nonumber \\
			& \qquad \left. \left. - \inprod{\log\left( \sum_{i=1}^N p_i\rho_{i}  \right) + \tfrac{1}{\ln 2} \boldsymbol{1}}{v}_{\!\!F}\varepsilon \ : \ \inprod{p}{s} = S \right] + \frac{M}{\gamma \ln 2}\varepsilon^{2}  -\beta \varepsilon \right\} \label{eq:proof:compact:first:step:CQ}\\
			&\leq		\max\limits_{\varepsilon} \left\{ \max\limits_{p,v}\left[ - \sum_{i=1}^N p_i \Hh{\rho_i} + \Hh{ \sum_{i =1}^N p_i \rho_i } \ : \ \inprod{p}{s} = S \right] \right. \nonumber\\ 
			& \qquad  + (\rho - \beta)\varepsilon +  \frac{M}{\gamma \ln 2}\varepsilon^{2} \bigg\} \label{eq:proof:compact:second:step:CQ}\\
			&=		J(0) + \max\limits_{\varepsilon} \left\{  (\rho - \beta)\varepsilon\ +  \frac{M}{\gamma \ln 2}\varepsilon^{2} \right\}, \label{eq:proof:compact:third:step:CQ}
\end{align}
where $\rho = M \left( \log(\gamma^{-1}) \vee \tfrac{1}{\ln 2} \right)$. Note that \eqref{eq:proof:compact:first:step:CQ} follows from $\eqref{eq:J(eps):equiv:CQ}$ and \eqref{eq:Taylor:CQ}. The equation \eqref{eq:proof:compact:second:step:CQ} uses the fact that $- \inprod{\log\left( \sum_{i=1}^N p_i \rho_{i}  \right) + \tfrac{1}{\ln 2} \boldsymbol{1}}{v}_F \leq M \left( \log(\gamma^{-1}) \vee \tfrac{1}{\ln 2} \right)$. Thus, for $\beta>\rho$ and $\varepsilon_{1}=\tfrac{N\gamma}{M}(\rho-\beta)$, we get $\max\limits_{\varepsilon\leq \varepsilon_{1}} \left\{  (\rho - \beta)\varepsilon + \tfrac{M}{\gamma \ln 2}\varepsilon^{2} \right\} = 0.$ Therefore, \eqref{eq:proof:compact:third:step:CQ} together with the concavity of $\varepsilon$ implies that $J(0)$ is the global optimum of $J(\varepsilon)$ and as such $\varepsilon^{\star}(\beta)=0$ for $\beta>\rho$, indicating that $\mathsf{P}_{\beta}$ is equivalent to $\mathsf{P}$ in the sense that $J^{\star}_{\beta}=J^{\star}_{0}$. By strong duality this implies that the constraint $\norm{\lambda}_{\trnorm} \leq \beta$ in $\mathsf{D}_{\beta}$ is inactive. Finally, $\norm{\lambda}_{F}\leq\norm{\lambda}_{\trnorm}$ concludes the proof.  \qed

%%%%%%%%%%%%%%%%%%%%%%%%%%%%%%%%%%%%%%%%%%%%%%%%%%%%
\section{Proof of Proposition~\ref{prop:CQ:lipschitz}} \label{app:propLip}
The proof follows directly from the proof of Theorem 1 and Lemma 3 in \cite{nesterov05} together the following analysis.
Consider the operator $\WW: \mathcal{H}^{*}\to \R^{N}$ by $\WW\lambda:=\left(\inprod{\rho_{1}}{\lambda}_{F},\hdots,\inprod{\rho_{N}}{\lambda}_{F}\right)\transp$. Its operator norm can be bounded as
\begin{align}
\norm{\WW}_{\opnorm} 	&=		\max\limits_{\lambda\in \Hop, \ p\in\Delta_{N}} \left\{ \inprod{p}{\WW\lambda} \ : \norm{\lambda}_{F}=1, \ \norm{p}_{1}=1 \right\} \nonumber \\
			&\leq		\max\limits_{\lambda\in \Hop, \ p\in\Delta_{N}} \left\{ \left| \sum_{i=1}^{N}\inprod{\rho_{i}}{\lambda}_{F}p_{i} \right| \ : \norm{\lambda}_{F}=1, \ \norm{p}_{1}=1 \right\} \nonumber\\
			&\leq		\max\limits_{\lambda\in \Hop, \ p\in\Delta_{N}} \left\{ \sum_{i=1}^{N}\left|  \inprod{\rho_{i}}{\lambda}_{F}\right| p_{i} \ : \norm{\lambda}_{F}=1, \ \norm{p}_{1}=1 \right\} \label{eq:cq:classical:operator:triangle} \\
			&\leq		\max\limits_{p\in\Delta_{N}} \left\{ \sum_{i=1}^{N} \norm{\rho_{i}}_{F} p_{i} \ : \norm{\lambda}_{F}=1, \ \norm{p}_{1}=1 \right\} \label{eq:cq:classical:operator:CS} \\
			&\leq		1, \nonumber
\end{align}
where \eqref{eq:cq:classical:operator:triangle} follows from the triangle inequality, \eqref{eq:cq:classical:operator:CS} from Cauchy Schwarz and the last step is due to the fact that
\begin{align}
\| \rho_{i} \|_{F} 	&\leq 	\| \sqrt{\rho_{i}} \|_{F} \| \sqrt{\rho_{i}} \|_{F}  \label{eq:submultiplicative}\\
					&=		\sqrt{\Tr{\sqrt{\rho_{i}}\sqrt{\rho_{i}}\herm}} \sqrt{\Tr{\sqrt{\rho_{i}}\sqrt{\rho_{i}}\herm}} \nonumber\\
					&=		\sqrt{\Tr{\sqrt{\rho_{i}}\sqrt{\rho_{i}\herm}}} \sqrt{\Tr{\sqrt{\rho_{i}}\sqrt{\rho_{i}\herm}}} \label{eq:pos:herm:interchange}\\
					&=		\sqrt{\Tr{\rho_{i}}} \sqrt{\Tr{\rho_{i}}} \label{eq:property:density:operator1} \\
					&=		1, \label{eq:property:density:operator2}
\end{align}
where \eqref{eq:submultiplicative} is due to the submultiplicative property of the Frobenius norm and \eqref{eq:pos:herm:interchange} follows from the fact that $\rho_{i}$ is positive semi-definite. Finally, \eqref{eq:property:density:operator1} and \eqref{eq:property:density:operator2} follow since $\rho_{i}$ is a density operator.\qed
%%%%%%%%%%%%%%%%%%%%%%%%%%%%%%%%%%%%%%%%%%%%%%%%%%%%
\section{Proof of Proposition~\ref{prop:Lipschitz:continuity}} \label{app:propLipCont}
It is known, according to Theorem~5.1 in \cite{ref:devolder-12}, that $G_{\nu}(\lambda)$ is well defined and continuously differentiable at any $\lambda\in Q$ and that  this function is convex and its gradient $\nabla G_{\nu}(\lambda)=\WW^{\star} p_{\nu}^{\lambda}$ is Lipschitz continuous with constant $L_{\nu} =\tfrac{1}{\nu}\norm{\WW}^{2}$, where we have also used Lemma~\ref{lem:strong:convexity:CQ:cts}.
The operator norm can be simplified to 
\begin{align}
\norm{\WW}_{\opnorm} : 	&=		\sup\limits_{\lambda\in\Hop\!, \, p\in \Lp{1}(R)}\left\{ \inprod{p}{\WW\lambda} \ : \ \norm{\lambda}_{F}=1, \ \norm{p}_{1}=1 \right\} \nonumber \\
				&\leq		 \sup\limits_{\lambda\in\Hop\!, \, p\in \Lp{1}(R)}\left\{  \left| \int_{R} \Tr{\rho_{x}\lambda}\, p(x) \drv x \right| : \ \norm{\lambda}_{F}=1, \ \norm{p}_{1}=1 \right\}\nonumber \\
				&\leq		 \sup\limits_{\lambda\in\Hop\!, \, p\in \Lp{1}(R)}\left\{  \int_{R}  \left|\Tr{\rho_{x}\lambda}\right|\, p(x) \drv x  : \ \norm{\lambda}_{F}=1, \ \norm{p}_{1}=1 \right\}  \label{eq:lipschitz:proof:triangle}\\
				&\leq		 \sup\limits_{\lambda\in\Hop\!, \, p\in \Lp{1}(R)}\left\{  \int_{R} \norm{\rho_{x}}_{F}\, p(x) \drv x  : \ \norm{p}_{1}=1 \right\} \label{eq:lipschitz:proof:CS}\\
				&\leq		 \sup\limits_{p\in \Lp{1}(R)}\left\{  \int_{R}  p(x) \drv x  : \ \norm{p}_{1}=1 \right\}  \label{eq:lipschitz:proof:refer:to:finite:dim}\\
				&\leq 	1,  \nonumber 
\end{align}
where \eqref{eq:lipschitz:proof:triangle} follows from the triangle inequality, \eqref{eq:lipschitz:proof:CS} from Cauchy-Schwarz and \eqref{eq:lipschitz:proof:refer:to:finite:dim} is due to \eqref{eq:property:density:operator2}.  \qed
%%%%%%%%%%%%%%%%%%%%%%%%%%%%%%%%%%%%%%%%%%%%%%%%%%%%%%%%%%%%%%%%%%%%%%%%%%%%%%%%%%%%%%%%%%%%%%%%%%%%%%%%

%%%%%%%%%%%%%%%%%%%%%%%%%%%%%%%%%%%%%%%%%%%%%%%%%%%%%%%%%%%%%%%%%%
\section{Justification of Remark~\ref{remark:stopping2}} \label{app:lemma1}
\begin{mylem}\label{lem:lemma1:stop:proof}
For $\alpha\in\Rp$, consider the function $\Rps\ni \nu \mapsto \iota(\nu):=\nu \left( \log\nu^{-1} + \alpha\right)\in\R$. For all $\varepsilon\in\left( 0,2^{\alpha}\left( 1+ \tfrac{\log \e}{\e}\right)\right)$ if $\nu\leq \tfrac{\varepsilon}{\left( 1+ \tfrac{\log \e}{\e} \right)\left( \alpha + \log\left( \left( 1+ \tfrac{\log \e}{\e}\right)\varepsilon^{-1} \right) \right)}$, then $\iota(\nu)\leq \varepsilon$.
\end{mylem}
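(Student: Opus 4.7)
The proof reduces, after a rescaling, to the universal bound $\sup_{u > 0} u \log u^{-1} = \log \e / \e$. Introduce the shorthands $\beta := 1 + \log \e / \e$ and $A := \alpha + \log(\beta/\varepsilon)$. The assumption $\varepsilon < 2^{\alpha}\beta$ rearranges to $\beta/\varepsilon > 2^{-\alpha}$, hence $\log(\beta/\varepsilon) > -\alpha$ and $A > 0$; in particular the threshold $\nu_{\star} := \varepsilon/(\beta A)$ appearing in the hypothesis is a well-defined positive number.

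Next I would set $u := \nu\beta/\varepsilon$, so that the condition $\nu \leq \nu_{\star}$ becomes $uA \leq 1$. A direct substitution, using $\log\nu^{-1} = \log(\beta/\varepsilon) - \log u = A - \alpha - \log u$, yields
\begin{equation*}
\iota(\nu) \;=\; \nu\bigl(\log\nu^{-1} + \alpha\bigr) \;=\; \frac{\varepsilon}{\beta}\,u\bigl(A - \log u\bigr).
\end{equation*}
Hence the desired conclusion $\iota(\nu) \leq \varepsilon$ is equivalent to the dimensionless inequality $u(A - \log u) \leq \beta$.

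To establish this last inequality, I split $u(A - \log u) = uA + u\log u^{-1}$ and bound each summand separately. The first satisfies $uA \leq 1$ by hypothesis. For the second, the universal estimate $\sup_{u > 0} u \log u^{-1} = \log\e / \e$ (attained at $u = 1/\e$, verified by a one-line calculus computation on $g(u) = -u\log u$) gives $u \log u^{-1} \leq \log\e/\e$. Summing the two bounds yields $u(A - \log u) \leq 1 + \log\e/\e = \beta$, which completes the proof.

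There is no significant obstacle: the whole argument is a short change of variable combined with summing two universal inequalities. The only point deserving care is the verification that $A > 0$ throughout the stated range of $\varepsilon$, which is immediate from $\varepsilon < 2^\alpha\beta$. Note that the two suprema bounding $uA$ and $u\log u^{-1}$ are attained at \emph{different} values of $u$ (at $u = 1/A$ and $u = 1/\e$ respectively), so the resulting inequality has slack for generic $\varepsilon$; nonetheless, the decomposition $\beta = 1 + \log\e/\e$ appearing in the lemma's threshold is precisely engineered so that these two separate maxima add to exactly $\beta$.
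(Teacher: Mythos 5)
Your proof is correct, and it rests on the same universal estimate $\sup_{u>0} u\log u^{-1} = \log\e/\e$ that drives the paper's argument; the difference is in the change of variables and in how the constraint $\nu\leq\nu_\star$ is handled. The paper substitutes $\nu = 2^{\alpha}\bar\varepsilon/\log\bar\varepsilon^{-1}$, computes $\iota$ \emph{at} the threshold $\nu_\star$ (corresponding to $\bar\varepsilon = \varepsilon/(2^{\alpha}\beta)$), factors out $2^{\alpha}\bar\varepsilon$, and bounds the remaining factor $1 + \tfrac{\log\log\bar\varepsilon^{-1}}{\log\bar\varepsilon^{-1}}$ by $\beta$; the extension to all $\nu < \nu_\star$ is left implicit, relying on the monotonicity of the map $\bar\varepsilon\mapsto 2^{\alpha}\bar\varepsilon/\log\bar\varepsilon^{-1}$. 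You instead use the linear rescaling $u=\nu\beta/\varepsilon$, which turns $\nu\leq\nu_\star$ into the clean inequality $uA\leq 1$, and then the additive decomposition $u(A-\log u)=uA + u\log u^{-1}\leq 1+\log\e/\e$ handles the whole range $\nu\leq\nu_\star$ in a single stroke, with each of the two summands majorized by the two pieces of $\beta$. The underlying computation is the same (at $\nu=\nu_\star$ your two terms coincide exactly with the paper's two summands), but your version is more explicit about the inequality $\nu\leq\nu_\star$ and does not require a separate monotonicity observation, which is a modest but genuine improvement in transparency.
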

\begin{proof}
Note that for all $\bar{\varepsilon}\in(0,1)$
\begin{align*}
\iota\left( \frac{2^{\alpha}\bar{\varepsilon}}{\log \bar{\varepsilon}^{-1}} \right) = 2^{\alpha}\bar{\varepsilon}\left( 1+ \frac{\log \log \bar{\varepsilon}^{-1}}{\log \bar{\varepsilon}^{-1}} \right) \leq 2^{\alpha} \bar{\varepsilon} \left( 1 +\frac{\log \e}{\e} \right),
\end{align*}
where the last step is due to the fact that $\tfrac{\log x}{x}\leq \tfrac{\log \e}{\e}$ for all $x\in\Rps$ is used. It then suffices to consider $\varepsilon:=2^{\alpha}\left( 1+\tfrac{\log \e}{\e} \right)\bar{\varepsilon}$.
\end{proof}

%%%%%%%%%%%%%%%%%%%%%%%%%%%%%%%%%%%%%%%%%%%%%%%%%%%%%%%%%%%%%%%%%%

\section{Proof of Lemma~\ref{lem:assLip}} \label{app:lem:assLip}
\begin{myclaim}\label{claim:basic}
The function $R \ni r \mapsto \ket{r} \in \mathbb{C}^N$ as given in Remark~\ref{rmk:universalEncoder} satisfies $\norm{\ket{r_1}-\ket{r_2}}_{1} \leq N\norm{r_1 - r_2}_{1}$.
\end{myclaim}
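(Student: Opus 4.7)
The plan is to exploit the fact that each of the $N$ components of $\ket{v}$ is a product of elementary trigonometric and exponential factors, each of which depends on a single coordinate of $r$ and is both bounded by $1$ in modulus and $1$-Lipschitz in its argument. The strategy has three steps: (i) record a ``product Lipschitz'' lemma that converts Lipschitz bounds on individual factors into a Lipschitz bound on their product; (ii) apply it to each coordinate $v_k$ of the encoder; (iii) sum over $k$ to obtain the claimed factor of $N$.

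First I would state and prove the following auxiliary fact via a telescoping identity: if $g_1,\dots,g_m$ are complex-valued functions, each of a different coordinate, with $|g_j|\le 1$ and $|g_j(a)-g_j(b)|\le |a-b|$, then for any two arguments $x,y\in\R^m$,
\begin{equation*}
\Bigl|\prod_{j=1}^m g_j(x_j) - \prod_{j=1}^m g_j(y_j)\Bigr| \;\le\; \sum_{j=1}^m |x_j-y_j|,
\end{equation*}
by writing the difference as $\sum_{j=1}^m \bigl(\prod_{i<j}g_i(x_i)\bigr)\bigl(g_j(x_j)-g_j(y_j)\bigr)\bigl(\prod_{i>j}g_i(y_i)\bigr)$ and bounding each summand.

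Next I would apply this to the encoder from Remark~\ref{rmk:universalEncoder}. Each component $v_k$ is a product of factors drawn from $\{\cos\theta_j,\sin\theta_j,e^{\ii\phi_j}\}$, each depending on a distinct coordinate of $r$, each bounded by $1$ in modulus, and each $1$-Lipschitz since $|\cos'|,|\sin'|\le 1$ and $\bigl|\tfrac{\drv}{\drv\phi}e^{\ii\phi}\bigr|=1$. The number of factors appearing in $v_k$ is at most $N$ (the last component $v_N$ achieves this bound with $N-1$ angles $\theta_j$ and one phase $\phi_{N-1}$). Hence the auxiliary lemma yields
\begin{equation*}
|v_k(r_1)-v_k(r_2)| \;\le\; \sum_{j\in J_k}|(r_1)_j-(r_2)_j| \;\le\; \|r_1-r_2\|_1
\end{equation*}
for each $k\in\{1,\dots,N\}$, where $J_k$ is the (at most $N$-element) set of coordinates on which $v_k$ depends.

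Finally, summing the componentwise bounds gives
\begin{equation*}
\|\ket{r_1}-\ket{r_2}\|_1 \;=\; \sum_{k=1}^N |v_k(r_1)-v_k(r_2)| \;\le\; N\,\|r_1-r_2\|_1,
\end{equation*}
which is the claim. I do not foresee any real obstacle; the only point that requires care is making sure the telescoping argument is written with the factors in the correct order (using $x$-arguments before index $j$ and $y$-arguments after), and confirming that each elementary factor in the spherical parametrization is indeed $1$-Lipschitz with modulus bounded by $1$.
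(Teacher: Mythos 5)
Your proof is correct and follows essentially the same route as the paper: the paper also invokes the ``simple fact'' that a product of bounded Lipschitz factors is Lipschitz with constant equal to the sum of the individual constants and then sums over the $N$ components of $\ket{v}$. Your telescoping write-up is a more careful rendering of that same argument, in particular correctly treating the unimodular complex factor $e^{\ii\phi_j}$, whereas the paper states its auxiliary fact only for real $[0,1]$-valued functions.
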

\begin{proof}
Using the simple fact that if $f,g:\R \to [0,1]$ are two Lipschitz continuous function with constant $L_g$ and $L_f$ then $f(\cdot) g(\cdot)$ is Lipschitz continuous with constant $L_f + L_g$ we get
\begin{align}
 \norm{\ket{r_1}-\ket{r_2}}_{1} = \sum_{i=1}^N \left| \ket{r_1}_i \ket{r_2}_i \right| \leq N \norm{r_1 - r_2}_{1}.
\end{align}
\end{proof}

\begin{myclaim} \label{claim:encoderFoo}
The function $\Delta_n \ni x \mapsto f(x) = x\, x\transp  \in \Rp^{n\times n}$ satisfies $\norm{f(x)-f(y)}_{\trnorm}\leq  2 \sqrt{n} \norm{x-y}_1$.
\end{myclaim}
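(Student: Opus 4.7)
The plan is to exploit the elementary algebraic identity
\begin{equation*}
xx\transp - yy\transp = x(x-y)\transp + (x-y)y\transp,
\end{equation*}
which rewrites the difference $f(x)-f(y)$ as a sum of two rank-one matrices. Applying the triangle inequality for the trace norm then reduces the problem to bounding $\norm{x(x-y)\transp}_{\trnorm}$ and $\norm{(x-y)y\transp}_{\trnorm}$ separately.

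For a rank-one matrix $uv\transp$ with $u,v\in\R^n$, the singular values are $(\norm{u}_{2}\norm{v}_{2},0,\ldots,0)$, so $\norm{uv\transp}_{\trnorm}=\norm{u}_{2}\norm{v}_{2}$. I would invoke this to obtain
\begin{equation*}
\norm{f(x)-f(y)}_{\trnorm} \leq \norm{x}_{2}\norm{x-y}_{2} + \norm{y}_{2}\norm{x-y}_{2} = (\norm{x}_{2}+\norm{y}_{2})\norm{x-y}_{2}.
\end{equation*}

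Since $x,y\in\Delta_{n}$ have nonnegative entries summing to one, $\norm{x}_{2}\leq \norm{x}_{1}=1$ and analogously $\norm{y}_{2}\leq 1$. Combining with the standard inequality $\norm{z}_{2}\leq \norm{z}_{1}$ for $z\in\R^{n}$ yields
\begin{equation*}
\norm{f(x)-f(y)}_{\trnorm} \leq 2\norm{x-y}_{2} \leq 2\norm{x-y}_{1} \leq 2\sqrt{n}\,\norm{x-y}_{1},
\end{equation*}
which even proves a sharper constant than the asserted bound. There is no real obstacle here: the only thing one needs to recognize is the rank-one decomposition in the first step, after which everything reduces to elementary norm inequalities and the simplex constraint. (If a factor of $\sqrt{n}$ is genuinely desired in order to match a later application's use of this claim, note that the same argument works via $\norm{A}_{\trnorm}\leq \sqrt{\mathrm{rank}(A)}\,\norm{A}_{F}\leq\sqrt{2}\,\norm{A}_{F}$ applied to $A=xx\transp-yy\transp$, but the route above is cleaner and already delivers the stated estimate.)
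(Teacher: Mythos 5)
Your proof is correct, and it takes a genuinely different route from the paper's. The paper first bounds the Frobenius norm, showing $\norm{x x\transp - y y\transp}_F \le 2\norm{x-y}_2$ via the expansion $\norm{x x\transp - y y\transp}_F^2 = \norm{x}_2^4 + \norm{y}_2^4 - 2\inprod{x}{y}^2$, a clever completion to the factored form $(\norm{x}_2^2+\norm{y}_2^2)^2 - (2\inprod{x}{y})^2 = \norm{x+y}_2^2\,\norm{x-y}_2^2$, and then bounds $\norm{x+y}_2\le 2$; the $\sqrt{n}$ in the claim then appears because the paper passes to the trace norm via the generic inequality $\norm{A}_{\trnorm}\le\sqrt{n}\,\norm{A}_F$. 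You instead decompose $xx\transp - yy\transp = x(x-y)\transp + (x-y)y\transp$ as a sum of two rank-one matrices, use $\norm{uv\transp}_{\trnorm}=\norm{u}_2\norm{v}_2$, and obtain directly $\norm{f(x)-f(y)}_{\trnorm}\le 2\norm{x-y}_2\le 2\norm{x-y}_1$. This is both simpler and strictly stronger --- it eliminates the $\sqrt{n}$ entirely rather than invoking norm equivalence, and it only needs the triangle inequality plus the elementary formula for the trace norm of a rank-one matrix. Your parenthetical observation that $A=xx\transp-yy\transp$ has rank at most $2$, so even the Frobenius route would give $\sqrt{2}$ rather than $\sqrt{n}$, is also right. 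Either way the claimed $2\sqrt{n}\norm{x-y}_1$ holds; the constant is just not tight in the paper's version.
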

\begin{proof}
Let $x,y \in \Delta_n$, then by Cauchy-Schwarz we find
\begin{align}
\norm{f(x)-f(y)}^2_F &= \norm{x\, x\transp-y\,y\transp}^2_F\nonumber\\
 &=\norm{x}_2^4 + \norm{y}_2^4 - 2\inprod{x}{y}^2 \nonumber\\
 &\leq \norm{x}_2^4 + \norm{y}_2^4 - 2\inprod{x}{y}^2 + 2 \norm{x}_2^2 \norm{y}_2^2 -2\inprod{x}{y}^2\nonumber \\
 & = \left(\norm{x}_2^2 + \norm{y}_2^2 \right)^2 - \left(2\inprod{x}{y} \right)^2\nonumber\\
 & = \left(\norm{x}_2^2 + \norm{y}_2^2 +2\inprod{x}{y} \right) \left(\norm{x}_2^2 + \norm{y}_2^2 -2\inprod{x}{y} \right)\nonumber\\
 &= \left(\norm{x}_2^2 + \norm{y}_2^2 +2\inprod{x}{y} \right) \norm{x-y}_2^2 \label{eq:paralello}\\
 & \leq 4 \norm{x-y}_2^2 \label{eq:assprob},
\end{align}
where \eqref{eq:paralello} uses the parallelogram identity and \eqref{eq:assprob} follows since by assumption we have $\norm{x}_2 \leq \norm{x}_1 = 1$ and $\norm{y}_2 \leq \norm{y}_1 = 1$. For a matrix $A\in \R^{n\times n}$ the equivalence of the Frobenius and the trace norm \cite{ref:Ber-09}, i.e., $\norm{A}_F \leq \norm{A}_{\trnorm} \leq \sqrt{n} \norm{A}_F$ and the equivalence for vector norms, i.e., $\norm{x}_2 \leq \norm{x}_1 \leq \sqrt{n} \norm{x}_2$ for $x\in \R^n$ finally proves the assertion.
\end{proof}
%%%%%%%%%%%%%%%%%%%%%%%%%%%%%%%%
\begin{myclaim} \label{claim:entropyLip}
Let $\rho_1, \rho_2 \in \mathcal{D}(\mathcal{H})$ with $m=\dim \mathcal{H}$ and $c:=\min_{i \in \{1,2 \}} \min \spec \rho_i >0$. Then $\left|\Hh{\rho_1}- \Hh{\rho_2} \right| \leq L_m \norm{\rho_1-\rho_2}_{\trnorm}$ with $L_m:=\sqrt{m}(\log(\tfrac{1}{c \e}\vee \e))$.
\end{myclaim}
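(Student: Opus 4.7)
My plan is to prove this by interpolating between $\rho_1$ and $\rho_2$ along the straight line $\rho_t := (1-t)\rho_2 + t\rho_1$ for $t\in[0,1]$ and applying the fundamental theorem of calculus to $t\mapsto H(\rho_t)$. A direct computation (treating the entropy as a smooth function on the space of Hermitian matrices) yields the gradient $\nabla H(\rho) = -\log_2 \rho - \tfrac{1}{\ln 2}\mathbf{1}$, from which
\begin{equation*}
H(\rho_1) - H(\rho_2) \;=\; -\int_{0}^{1} \mathrm{tr}\!\left[\left(\log_2 \rho_t + \tfrac{1}{\ln 2}\mathbf{1}\right)(\rho_1 - \rho_2)\right] \drv t.
\end{equation*}

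The first key step is to control the spectrum of $\rho_t$: since $\rho\mapsto \min\spec(\rho)$ is concave on the Hermitian matrices (it is the infimum of linear functionals $\rho\mapsto \langle v,\rho v\rangle$ over unit vectors $v$), one has $\min\spec(\rho_t) \geq c$ uniformly in $t$, while the eigenvalues of $\rho_t$ are trivially bounded above by $1$. Hence every eigenvalue of the Hermitian operator $\log_2 \rho_t + \tfrac{1}{\ln 2}\mathbf{1}$ lies in the interval $[\log_2(c\e),\,\log_2 \e]$. A short case distinction (treating $c\leq 1/\e$ and $c\geq 1/\e$ separately) then shows that the absolute value of any such eigenvalue is at most $\log_2\!\bigl(\tfrac{1}{c\e}\vee \e\bigr)$, so that
\begin{equation*}
\left\lVert \log_2 \rho_t + \tfrac{1}{\ln 2}\mathbf{1}\right\rVert_{\opnorm}\;\leq\; \log_2\!\bigl(\tfrac{1}{c\e}\vee \e\bigr) \qquad\text{for every } t\in[0,1].
\end{equation*}

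The second step is to combine this spectral bound with a suitable matrix H\"older inequality so as to bring out a trace norm on $\rho_1-\rho_2$ and a factor $\sqrt{m}$. I will use the chain $|\mathrm{tr}(AB)|\leq \norm{A}_{F}\norm{B}_{F}$ together with the standard equivalences $\norm{A}_F \leq \sqrt{m}\,\norm{A}_{\opnorm}$ and $\norm{B}_F \leq \norm{B}_{\trnorm}$, which applied to $A=\log_2 \rho_t + \tfrac{1}{\ln 2}\mathbf{1}$ and $B=\rho_1-\rho_2$ gives
\begin{equation*}
|H(\rho_1)-H(\rho_2)| \;\leq\; \sqrt{m}\,\log_2\!\bigl(\tfrac{1}{c\e}\vee \e\bigr)\,\norm{\rho_1-\rho_2}_{\trnorm} \;=\; L_m\,\norm{\rho_1-\rho_2}_{\trnorm},
\end{equation*}
which is the claimed Lipschitz estimate.

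I do not expect any real obstacle here — both the spectral bound and the H\"older-type inequality are standard. The only point worth double checking is that the interpolation $\rho_t$ genuinely inherits the spectral lower bound $c$ (so that $\log_2 \rho_t$ is well defined and the operator-norm estimate applies uniformly in $t$), which is precisely why the concavity of $\min\spec(\cdot)$ is invoked.
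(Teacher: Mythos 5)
Your proof is correct, but it follows a genuinely different route than the paper's. The paper's argument diagonalizes $\rho_1$ and $\rho_2$ separately, applies a scalar Lipschitz estimate for $x\mapsto -x\log x$ to the (sorted) eigenvalue sequences to get a bound in terms of $\sum_i|\lambda_1^{(i)}-\lambda_2^{(i)}|$, and then converts this eigenvalue distance into $\norm{\rho_1-\rho_2}_{\trnorm}$ by passing through the $\ell_2$/Frobenius distance and invoking the Hoffman--Wielandt inequality (this is where the $\sqrt m$ appears). You instead represent $\Hh{\rho_1}-\Hh{\rho_2}$ as $-\int_0^1\Tr{\bigl(\log_2\rho_t+\tfrac{1}{\ln 2}\mathbf{1}\bigr)(\rho_1-\rho_2)}\,\drv t$, use operator concavity of $\min\spec$ (or simply $\rho_i\succcurlyeq c\mathbf{1}\Rightarrow\rho_t\succcurlyeq c\mathbf{1}$) to keep the spectrum of $\rho_t$ in $[c,1]$ uniformly in $t$, bound $\norm{\log_2\rho_t+\tfrac{1}{\ln 2}\mathbf{1}}_{\opnorm}\le\log\!\bigl(\tfrac{1}{c\e}\vee\e\bigr)$, and close with Cauchy--Schwarz for the Frobenius pairing together with $\norm{\cdot}_F\le\sqrt m\,\norm{\cdot}_{\opnorm}$ and $\norm{\cdot}_F\le\norm{\cdot}_{\trnorm}$. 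Both routes need one nontrivial matrix-analytic ingredient: Hoffman--Wielandt for the paper, the trace-derivative identity $\tfrac{\drv}{\drv t}\Tr{f(\rho+tX)}=\Tr{f'(\rho)X}$ for you. One incidental remark: your interpolation argument would in fact give the dimension-free constant $\log\!\bigl(\tfrac{1}{c\e}\vee\e\bigr)$ (no $\sqrt m$) if you replaced the Frobenius Cauchy--Schwarz step by the tracial H\"older inequality $\left|\Tr{AB}\right|\le\norm{A}_{\opnorm}\norm{B}_{\trnorm}$; you deliberately weakened it to match the stated $L_m$, which is fine, but worth noting.
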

\begin{proof}
Consider the function $(0,1] \ni x \mapsto f(x)=-x \log x \in \Rp$. Note that $\tfrac{\partial f}{\partial x} = \log(\tfrac{1}{x e})$. As $f(\cdot)$ is a concave function we have for all $1\geq x_1\geq x_2 >0$, $f(x_2)-f(x_2) \leq \tfrac{\partial f}{\partial x}(x_1) (x_2-x_1)$. Thus it follows that $|f(x_1)-f(x_2)|\leq \max_{i \in \{1,2 \}} | \tfrac{\partial f}{\partial x}(x_i)| |x_1 -x_2|$ for all $x_1,x_2 \in (0,1]$, which then implies that for all $x_1,x_2 \in (0,1]$ and $c \in (0,1)$
\begin{equation}
|f(x_1)-f(x_2)| \leq \left( \log(\tfrac{1}{c e}) \vee \log(e)\right) |x_1 - x_2|. \label{eq:stepPeyman}
\end{equation}
For $\rho_1, \rho_2 \in \mathcal{D}(\mathcal{H})$, let $\spec(\rho_1) = \{\lambda_1^{(1)}, \lambda_1^{(2)},\ldots, \lambda_1^{(m)}Ê\}$ and $\spec(\rho_2) = \{\lambda_2^{(1)}, \lambda_2^{(2)},\ldots, \lambda_2^{(m)}Ê\}$. Using the triangle inequality then gives
\begin{align}
\left| \Hh{\rho_1} - \Hh{\rho_2} \right| &= \left| \sum_{i=1}^m - \lambda_1^{(i)} \log(\lambda_1^{(i)}) + \lambda_2^{(i)}\log(\lambda_2^{(i)}) \right| \nonumber\\
 &\leq \sum_{i=1}^m \left| - \lambda_1^{(i)} \log(\lambda_1^{(i)}) + \lambda_2^{(i)}\log(\lambda_2^{(i)}) \right| \nonumber\\
 &= \sum_{i=1}^m \left| f(\lambda_1^{(i)})- f(\lambda_2^{(i)}) \right| \nonumber\\
 &\leq \left( \log\left(\frac{1}{c \e}\right) \vee \log\left(\e\right)\right) \sum_{i=1}^m \left| \lambda_1^{(i)} - \lambda_2^{(i)} \right| \label{eq:myassu}\\
 & \leq \left( \log\left(\frac{1}{c \e} \vee e\right)\right)  \sqrt{m} \left( \sum_{i=1}^m \left| \lambda_1^{(i)}-\lambda_2^{(i)} \right| \right)^{1/2} \label{eq:equivNorm}\\
 & \leq \left( \log\left( \frac{1}{c \e}\vee \e\right)\right)  \sqrt{m} \norm{\rho_1 - \rho_2}_F \label{eq:WH}\\
 & \leq \left( \log\left(\frac{1}{c \e} \vee \e\right)\right)  \sqrt{m} \norm{\rho_1 - \rho_2}_{\trnorm} \label{eq:finalDA},
\end{align}
where \eqref{eq:myassu} follows by assumption together with \eqref{eq:stepPeyman}. Inequality \eqref{eq:equivNorm} uses the equivalence of the one and two vector norm and that the logarithm is monotonic. Inequality \eqref{eq:WH} uses the Hoffman-Wielandt inequality \cite[p.~56]{tao12}. Finally, \eqref{eq:finalDA} follows from the equivalence of the Frobenius and the trace norm.
\end{proof}
%%%%%%%%%%%%%%%%%%%%%%%%%%%%%%%%
For $x_1, x_2 \in R$, the triangle inequality gives
\begin{align}
|f_{\lambda,M}(x_1)-f_{\lambda,M}(x_2)|&=\left|\Tr{\Phi(\Eu(x_1))\lambda}- \Hh{\Phi(\Eu(x_1))} -\Tr{\Phi(\Eu(x_2))\lambda} +  \Hh{\Phi(\Eu(x_2))}\right|\nonumber\\
&\leq \left|\inprod{\Phi(\Eu(x_1))}{\lambda}_F\!\!- \inprod{\Phi(\Eu(x_2))}{\lambda}_F  \right|\! + \!\left| \Hh{\Phi(\Eu(x_1))}\!-\!\Hh{\Phi(\Eu(x_2))} \right|. \label{eq:tbound}
\end{align}
Using Cauchy-Schwarz and the linearity of quantum channels we can bound the first part of \eqref{eq:tbound} as
\begin{align}
\left|\inprod{\Phi(\Eu(x_1))}{\lambda}- \inprod{\Phi(\Eu(x_2))}{\lambda}_F  \right| & = \left|\inprod{\Phi(\Eu(x_1)-\Eu(x_2))}{\lambda}_F\right|\nonumber\\
&\leq \norm{\Phi(\Eu(x_1)-\Eu(x_2))}_F \norm{\lambda}_F\nonumber\\
&\leq \norm{\Phi(\Eu(x_1)-\Eu(x_2))}_{\trnorm} \norm{\lambda}_F\label{eq:eq_norm}\\
&\leq \norm{\Eu(x_1)-\Eu(x_2)}_{\trnorm} \norm{\lambda}_F \label{eq:contraction}\\
&\leq 2 N \sqrt{N}\norm{x_1-x_2}_{1} \norm{\lambda}_F, \label{eq:finiii}
\end{align}
where \eqref{eq:eq_norm} uses the equivalence of the Frobenius and the trace norm \cite{ref:Ber-09} and inequality \eqref{eq:contraction} is a direct consequence of the contractivity property under the trace norm of quantum channels \cite[Thm.~8.16]{wolfScript}. Inequality \eqref{eq:finiii} follows from Claims~\ref{claim:basic} and \ref{claim:encoderFoo}. 

Recall that $ \norm{\lambda}_{F}\leq M \log\left(\gamma_M^{-1} \vee \e \right)$ as by definition $\lambda \in \Lambda$.

With the help of Claim~\ref{claim:entropyLip} and Assumption~\ref{ass:FPRAS} we can also bound the second part of \eqref{eq:tbound}. Let $J_M:=\sqrt{M}(\log(\tfrac{1}{\gamma_M \e} \vee \e))$ we then have
\begin{align}
\left| \Hh{\Phi(\Eu(x_1))}-\Hh{\Phi(\Eu(x_2))} \right| &\leq J_M \norm{\Phi(\Eu(x_1))-\Phi(\Eu(x_2))}_{\trnorm}\nonumber\\
&=J_M \norm{\Phi(\Eu(x_1)-\Eu(x_2))}_{\trnorm}\nonumber\\
& \leq J_M \norm{\Eu(x_1)-\Eu(x_2)}_{\trnorm} \label{eq:contract2}\\
& \leq 2N \sqrt{N} J_M\norm{x_1-x_2}_1 \label{eq:reallydone},
\end{align}
where \eqref{eq:contract2} again uses the contractivity property under the trace norm of quantum channels \cite[Thm.~8.16]{wolfScript} and \eqref{eq:reallydone} follows from Claims~\ref{claim:basic} and \ref{claim:encoderFoo}.
\qed

%%%%%%%%%%%%%%%%%%%%%%%%%%%%%%%%%%%%%%
\section{Proof of Lemma~\ref{lem:McDiamand}} \label{app:lem:mcdiarmid}
Within this proof we use the notation $\rho_x := \Phi(\Eu(x))$.
We define the functions $R\ni x\mapsto f_{\lambda}(x):=\WW\lambda(x) - \Hh{\rho_x} = \Tr{\rho_x \lambda } -\Hh{\rho_x}\in\R$ and $R\ni x\mapsto g_{\lambda}(x):=f_{\lambda}(x)-\bar{f}_{\lambda}\in\R_{\leq 0}$, where $\bar{f}_{\lambda}:=\max_{x\in R}f_{\lambda}(x)=f_{\lambda}(x^{\star})$. Then, by following Remark~\ref{remark:comp:stablility}, we have
\begin{equation*}
\nabla G_{\nu}(\lambda) = \frac{1}{\bar{S}(\lambda)} \int_{R} 2^{\tfrac{1}{\nu}g_{\lambda}(x)} (\rho_x\transp-\rho_{x^{\star}}\transp) \drv x + \rho_{x^{\star}}\transp, 
\end{equation*}
 where
 \begin{equation*}
 \bar{S}(\lambda)=\int_{R} 2^{\tfrac{1}{\nu} g_{\lambda}(x)} \drv x
 \end{equation*}
and we have used $\tfrac{\partial \Tr{\rho \lambda }}{\partial \lambda_{k,\ell}}=\rho_{\ell,k}$ \cite[Prop.~10.7.2]{ref:Ber-09}.
Consider i.i.d. random variables $\{X_{i}\}_{i=1}^{n}$ taking values in $R$. Define the random variable $\bar{S}_{n}(\lambda):=\tfrac{1}{n}\sum_{i=1}^{n}2^{\tfrac{1}{\nu}g_{\lambda}(X_{i})}$. Then, invoking the non-positivity of $g_{\lambda}(\cdot)$, Mc Diarmid's inequality \cite[Thm.~2.2.2]{ref:Raginsky-13} leads to the following concentration bound
\begin{equation}\label{eq:denominator}
\Prob{\left| \bar{S}(\lambda) - \bar{S}_{n}(\lambda) \right|\geq t}\leq 2\exp\left( -2t^{2}n \right).
\end{equation}
Next, we approximate $T(\lambda):= \int_{R} 2^{\tfrac{1}{\nu}g_{\lambda}(x)} (\rho_x\transp-\rho_{x^{\star}}\transp) \drv x$. Consider i.i.d. random variables $\{X_{i}\}_{i=1}^{n}$ taking values in $R$ and define a function $R^{n}\ni x \mapsto f(x_{1}, \dots, x_{n}):=\tfrac{1}{n}\sum_{i=1}^{n} 2^{\tfrac{1}{\nu}g_{\lambda}(x_{i})} (\rho_{x_{i}}\transp-\rho_{x^{\star}}\transp)\in \Hop$.
\begin{myclaim}
The function $f$ satisfies the bounded difference assumption 
\begin{align*}
&\sup\limits_{x_{1},\dots,x_{n},x_{i}^{\prime}}  \left( f(x_{1},\hdots,x_{i},\hdots,x_{n}) - f(x_{1},\hdots,x_{i-1},x_{i}^{\prime},x_{i+1},\hdots,x_{n}) \right)^{2} \\
&\hspace{15mm} \preccurlyeq \mathsf{diag}(\tfrac{3}{n},\dots,\tfrac{3}{n})^{2} \text{ for all }i=1,\hdots, n.
\end{align*}
\end{myclaim}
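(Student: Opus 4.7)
The plan is a direct computation followed by a standard positive semidefinite comparison. Let $D$ denote the matrix
\begin{equation*}
D := f(x_{1},\ldots,x_{i},\ldots,x_{n}) - f(x_{1},\ldots,x_{i-1},x_{i}^{\prime},x_{i+1},\ldots,x_{n}) = \tfrac{1}{n}\left( 2^{g_{\lambda}(x_{i})/\nu}(\rho_{x_{i}}^{\top}-\rho_{x^{\star}}^{\top}) - 2^{g_{\lambda}(x_{i}^{\prime})/\nu}(\rho_{x_{i}^{\prime}}^{\top}-\rho_{x^{\star}}^{\top}) \right),
\end{equation*}
which depends only on the coordinate being swapped, since the other summands cancel. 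Because $\rho_{x}$ is Hermitian its transpose equals its complex conjugate and is again Hermitian, so $D$ is a real linear combination of Hermitian matrices and hence Hermitian.

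The key step is to bound $\norm{D}_{\mathrm{op}}$. I would reorganize $D$ as
\begin{equation*}
n\,D = 2^{g_{\lambda}(x_{i})/\nu}\rho_{x_{i}}^{\top} - 2^{g_{\lambda}(x_{i}^{\prime})/\nu}\rho_{x_{i}^{\prime}}^{\top} + \bigl(2^{g_{\lambda}(x_{i}^{\prime})/\nu} - 2^{g_{\lambda}(x_{i})/\nu}\bigr)\rho_{x^{\star}}^{\top},
\end{equation*}
and then apply the triangle inequality term by term. Since $g_{\lambda}(\cdot)\leq 0$ by construction, the scalars $2^{g_{\lambda}(\cdot)/\nu}$ lie in $(0,1]$; since $\rho_{y}$ is a density operator, $\norm{\rho_{y}^{\top}}_{\mathrm{op}}=\norm{\rho_{y}}_{\mathrm{op}}\leq 1$. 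Thus each of the three summands on the right-hand side has operator norm at most $1$, yielding $\norm{D}_{\mathrm{op}}\leq 3/n$.

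The final step translates this scalar bound into the required PSD inequality. Since $D$ is Hermitian, the spectral theorem gives $D^{2}\preccurlyeq \norm{D}_{\mathrm{op}}^{2}\,\mathbf{1}\preccurlyeq (3/n)^{2}\mathbf{1}=\mathsf{diag}(3/n,\ldots,3/n)^{2}$, uniformly in $x_{1},\ldots,x_{n},x_{i}^{\prime}$. This is exactly the bounded difference assumption in the form required by the matrix Mc\,Diarmid inequality invoked in Lemma~\ref{lem:McDiamand}. The only delicate point is the passage from the operator norm bound to the PSD comparison, which needs the Hermiticity observation above; the rest is a routine application of the triangle inequality once one recognizes how the $\rho_{x^{\star}}$ terms cancel to leave a three-term estimate.
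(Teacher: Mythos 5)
Your proof is correct and takes essentially the same route as the paper's: both compute the same single-coordinate difference $D$, split it into the same three terms, and apply the triangle inequality with the bounds $|2^{g_\lambda(\cdot)/\nu}|\le 1$ (non-positivity of $g_\lambda$) and $\norm{\rho_y^\top}_{\mathrm{op}}\le 1$ to obtain the constant $3/n$. The only cosmetic difference is that you bound $\norm{D}_{\mathrm{op}}\le 3/n$ first and then square via the spectral theorem for Hermitian $D$, whereas the paper bounds $\norm{D^2}_{\mathrm{op}}$ directly using a $(\norm{B}_{\mathrm{op}}+\norm{C}_{\mathrm{op}})^2$ estimate; these are equivalent, and your presentation is if anything a touch cleaner.
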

\begin{proof} 
\begin{align*}
&f(x_{1},\hdots,x_{i},\hdots,x_{n}) - f(x_{1},\hdots,x_{i-1},x_{i}^{\prime},x_{i+1},\hdots,x_{n})\\
& \quad =\frac{1}{n}\left( 2^{\tfrac{1}{\nu}g_{\lambda}(x_{i})}(\rho_{x_{i}}\transp-\rho_{x^{\star}}\transp)- 2^{\tfrac{1}{\nu}g_{\lambda}(x_{i}^{\prime})}(\rho_{x_{i}^{\prime}}\transp-\rho_{x^{\star}}\transp)\right) \\
& \quad = \frac{1}{n}\rho_{x^{\star}}\transp\left( 2^{\tfrac{1}{\nu}g_{\lambda}(x_{i}^{\prime})} - 2^{\tfrac{1}{\nu}g_{\lambda}(x_{i})}  \right) + \frac{1}{n}\left( 2^{\tfrac{1}{\nu}g_{\lambda}(x_{i})}\rho_{x_{i}}\transp - 2^{\tfrac{1}{\nu}g_{\lambda}(x_{i}^{\prime})}\rho_{x_{i}^{\prime}}\transp\right)\\
& \quad = \frac{1}{n}\left( \rho_{x^{\star}}\transp(b_{x_{i}^{\prime}}-b_{x_{i}})  + b_{x_{i}}\rho_{x_{i}}\transp - b_{x_{i}^{\prime}}\rho_{x_{i}^{\prime}}\transp \right) = : (\star),
\end{align*}
where $b_{y}:=2^{\tfrac{1}{\nu}g_{\lambda}(y)}$.
Now,
\begin{align}
\lambda_{\max}\left( (\star)^{2} \right) &= \norm{(\star)^{2}}_{\opnorm}\nonumber \\
		&\leq\left( \left| \frac{b_{x_{i}}}{n} - \frac{b_{x^{\prime}_{i}}}{n}\right| \norm{\rho_{x^{\star}}\transp}_{\opnorm} + \norm{  \frac{b_{x^{\prime}_{i}}}{n} \rho_{x^{\prime}_{i}}\transp - \frac{b_{x_{i}}}{n}\rho_{x_{i}}\transp }_{\opnorm} \right)^{2} \label{eq:step1:bla}\\
		&\leq \left( \frac{1}{n} \left| b_{x_{i}} - b_{x^{\prime}_{i}}\right| \norm{\rho_{x^{\star}}}_{\opnorm} + \frac{1}{n} \left|  b_{x^{\prime}_{i}}\right| \norm{\rho_{x^{\prime}_{i}}}_{\opnorm} + \frac{1}{n} \left|  b_{x_{i}}\right| \norm{\rho_{x_{i}}}_{\opnorm}   \right)^{2}\label{eq:step2:bla} \\
		&\leq \frac{9}{n^{2}}\label{eq:step3:bla}, 
\end{align}
where \eqref{eq:step1:bla} follows from $\norm{(B-C)^{2}}_{\opnorm}=\norm{B^{2}-BC-CB-C^{2}}_{\opnorm}\leq \norm{B^{2}}_{\opnorm}+\norm{BC}_{\opnorm}+\norm{CB}_{\opnorm}+\norm{C^{2}}_{\opnorm}\leq \norm{B}_{\opnorm}^{2}+2\norm{B}_{\opnorm}\norm{C}_{\opnorm}+\norm{C}_{\opnorm}^{2}=(\norm{B}_{\opnorm}+\norm{C}_{\opnorm})^{2}$ which uses the submultiplicativ property of the operator norm. Equation \eqref{eq:step2:bla} is due to the triangle inequality and \eqref{eq:step3:bla} uses the non-positivity of the function $g_{\lambda}$ and the property of density operators.
\end{proof}
Define the random variable $T_{n}(\lambda):=\tfrac{1}{n}\sum_{i=1}^{n} 2^{\tfrac{1}{\nu}g_{\lambda}(X_{i})} (\rho_{X_{i}}\transp-\rho_{x^{\star}}\transp)$
\begin{myclaim}\label{claim:nominator}
$\Prob{\norm{T_{n}(\lambda)-T(\lambda)}_{\opnorm}\geq t }\leq M \exp \left( \frac{-t^{2}n}{72} \right)$
\end{myclaim}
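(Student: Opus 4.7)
The plan is to invoke the matrix McDiarmid inequality \cite[Cor.~7.5]{ref:Tropp-12}, whose bounded-difference hypothesis was just verified in the claim immediately preceding \ref{claim:nominator}. Since $X_1,\ldots,X_n$ are i.i.d.\ and the summands defining $T_n(\lambda)$ are independent across $i$, linearity of expectation identifies $\E{T_n(\lambda)}$ with $T(\lambda)$ (with the same normalization convention used in the derivation of \eqref{eq:denominator}), so it suffices to control the concentration of $T_n(\lambda)$ around its mean.

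The preceding claim supplies the deterministic matrix inequality
\begin{equation*}
\bigl( f(x_{1},\ldots,x_{i},\ldots,x_{n}) - f(x_{1},\ldots,x_{i}^{\prime},\ldots,x_{n}) \bigr)^{2} \preccurlyeq A_{i}^{2} := \tfrac{9}{n^{2}}\, \mathbf{1},
\end{equation*}
uniformly over $i$ and the substituted coordinate. Consequently the matrix McDiarmid variance parameter is
\begin{equation*}
\sigma^{2} := \Bigl\| \sum_{i=1}^{n} A_{i}^{2} \Bigr\|_{\opnorm} \leq \tfrac{9}{n},
\end{equation*}
and substituting this into Tropp's one-sided tail bound yields
\begin{equation*}
\Prob{\lambda_{\max}\bigl(T_{n}(\lambda) - T(\lambda)\bigr) \geq t} \leq M \exp\!\Bigl( -\tfrac{t^{2}}{8\sigma^{2}} \Bigr) \leq M \exp\!\Bigl( -\tfrac{t^{2}n}{72} \Bigr).
\end{equation*}

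To promote this one-sided eigenvalue bound to the operator-norm statement in \ref{claim:nominator}, I use that $T_{n}(\lambda) - T(\lambda)$ is Hermitian (each $\rho_{x}^{\top}$ is self-adjoint, the density operators being Hermitian), so its operator norm equals $\max\{ \lambda_{\max}(T_{n} - T),\, \lambda_{\max}(T - T_{n}) \}$. Applying the same inequality to $T(\lambda) - T_{n}(\lambda)$, whose bounded-difference matrices are identical, and merging the two events by a union bound then gives the claimed tail estimate, up to the same tightening in the $M$-prefactor as in the analogous Lemma~\ref{lem:second:approach:num:int}.

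The only real work lies in checking that Tropp's variance parameter is indeed $9/n$, which is immediate once the preceding claim's $\preccurlyeq \tfrac{9}{n^{2}}\mathbf{1}$ bound is in hand; the rest is bookkeeping with constants. There is no additional analytic obstacle, and in particular no matrix Bernstein or martingale refinement is required since the difference-matrices are controlled by a scalar multiple of $\mathbf{1}$, which makes $\|\sum A_{i}^{2}\|_{\opnorm}$ trivial to evaluate exactly.
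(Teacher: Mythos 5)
Your proposal follows the paper's proof essentially verbatim: both invoke Tropp's matrix McDiarmid inequality (Cor.~7.5 in \cite{ref:Tropp-12}) with the $\preccurlyeq \tfrac{9}{n^2}\mathbf{1}$ bounded-difference estimate from the preceding claim to obtain the one-sided $\lambda_{\max}$ tail, and then both handle the other tail via the Hermitian identity $\lambda_{\min}(X)=-\lambda_{\max}(-X)$ (the paper cites \cite[Rmk.~3.10]{ref:Tropp-12} for exactly the symmetry you invoke). Your parenthetical caveat about the $M$ versus $2M$ prefactor after the union bound, and about the normalization convention in identifying $\mathds{E}[T_n]$ with $T$, flags precisely the same loose ends present in the paper's own treatment, so there is nothing substantively different.
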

\begin{proof}
By the matrix Mc~Diarmid inequality \cite[Cor.~7.5]{ref:Tropp-12}, we get the concentration bound
\begin{equation*}
\Prob{\lambda_{\max}(T_{n}(\lambda)-T(\lambda))\geq t }\leq M \exp \left( \frac{-t^{2}n}{72} \right).
\end{equation*}
Furthermore, as pointed out in \cite[Rmk.~3.10]{ref:Tropp-12}, $\lambda_{\min}(X)=-\lambda_{\max}(-X)$. As such following similar lines as above one can derive
\begin{equation*}
\Prob{\lambda_{\min}(T_{n}(\lambda)-T(\lambda))\leq -t }\leq M \exp \left( \frac{-t^{2}n}{72} \right).
\end{equation*}
\end{proof}
%%%%%%%%%%%%%%%%%%%

\begin{myclaim} \label{claim:bounding:fraction}
Let $A,B \in \R$, $\xi_1,\xi_2 \geq 0$, $B > \xi_2$, $\hat A \in [A-\xi_1,A+\xi_1]$ and $\hat B \in [B-\xi_2,B+\xi_2]$. Then for $Z:=\tfrac{A}{B}$ and $\hat Z:=\tfrac{\hat A}{\hat B}$ we have
\begin{equation*}
\left| Z - \hat Z \right| \leq \max \left \lbrace  \frac{A}{B}- \frac{A-\xi_1}{B+\xi_2}, \frac{A+\xi_1}{B-\xi_2}-\frac{A}{B} \right \rbrace.
\end{equation*}
\end{myclaim}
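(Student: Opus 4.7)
The plan is to parametrize $\hat A = A + \delta_1$ and $\hat B = B + \delta_2$ with $(\delta_1,\delta_2) \in R := [-\xi_1,\xi_1] \times [-\xi_2,\xi_2]$, and to study the signed error
$$\hat Z - Z = \frac{A+\delta_1}{B+\delta_2} - \frac{A}{B} = \frac{B\,\delta_1 - A\,\delta_2}{B(B+\delta_2)}$$
as a function on $R$. The assumption $B > \xi_2 \geq 0$ keeps $B + \delta_2 > 0$ throughout $R$, so the expression is smooth there. Its partial derivatives are $\partial_{\delta_1}(\hat Z - Z) = 1/(B+\delta_2) > 0$ and $\partial_{\delta_2}(\hat Z - Z) = -(A+\delta_1)/(B+\delta_2)^2$, so $\hat Z - Z$ is strictly monotonic in each variable separately, with the direction of the $\delta_2$-monotonicity governed by the sign of $\hat A = A+\delta_1$. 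Consequently every extremum of $\hat Z - Z$ on $R$ is attained at one of the four corners $(\pm \xi_1, \pm \xi_2)$.

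It then remains to identify the two extremal corners. In the regime relevant to the preceding uses of the claim (where $A$ and $B$ arise as integrals of non-negative integrands, so $A \geq 0$), the corner $(+\xi_1, -\xi_2)$ maximizes $\hat Z - Z$ and the corner $(-\xi_1, +\xi_2)$ minimizes it: incrementing $\hat A$ upward and $\hat B$ downward strictly increases the quotient, and vice versa. Evaluating $\hat Z - Z$ at these two corners yields exactly the two expressions appearing in the $\max$ on the right-hand side of the claim. Writing $|\hat Z - Z| \leq \max\{\hat Z - Z,\,Z - \hat Z\}$ and bounding each term by the corresponding corner value completes the argument.

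The only subtle point, and the main obstacle, is ruling out the two ``off-diagonal'' corners $(+\xi_1, +\xi_2)$ and $(-\xi_1, -\xi_2)$ as potential extrema. For $A \geq 0$ a short algebraic comparison that exploits $B - \xi_2 < B < B + \xi_2$ shows that the signed deviations at these corners lie strictly between those at the two diagonal corners, so they contribute nothing to the bound. The remainder is a routine rational-function computation that reproduces the two differences $(A+\xi_1)/(B-\xi_2) - A/B$ and $A/B - (A-\xi_1)/(B+\xi_2)$.
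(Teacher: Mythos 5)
Your approach tracks the paper's: both identify the candidate extremal ratios $\hat{Z}_{\min} = (A - \xi_1)/(B + \xi_2)$ and $\hat{Z}_{\max} = (A + \xi_1)/(B - \xi_2)$, assert that $\hat{Z}$ lies between them, and then bound $|Z - \hat{Z}|$ by $\max\{Z - \hat{Z}_{\min},\, \hat{Z}_{\max} - Z\}$. Where the paper simply writes down the bracketing, you justify it via the corner structure of the rectangle, and you correctly observe that some positivity assumption on $A$ is needed. Indeed the claim is false as stated for negative $A$: with $A = -1$, $B = 2$, $\xi_1 = 0$, $\xi_2 = 1$, both terms of the $\max$ are negative ($-1/6$ and $-1/2$), yet taking $\hat{B}=1$ gives $|Z - \hat{Z}| = 1/2$. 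Flagging this is the proposal's main contribution beyond the paper's one-liner.

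There is, however, a gap in the off-diagonal corner analysis. You assert that for $A \geq 0$ the errors at $(+\xi_1,+\xi_2)$ and $(-\xi_1,-\xi_2)$ lie strictly between those at the diagonal corners; that holds only when $A \geq \xi_1$. If $0 \leq A < \xi_1$ then $A - \xi_1 < 0$, so $(A-\xi_1)/(B-\xi_2) < (A-\xi_1)/(B+\xi_2) = \hat{Z}_{\min}$, i.e.\ the minimizer moves to the corner $(-\xi_1,-\xi_2)$ and the bracketing $\hat{Z}_{\min} \leq \hat{Z}$ (the paper's step as well) fails. The stated inequality nevertheless survives for all $A \geq 0$, but via a different dominating term: one checks that the downside deviation is absorbed by the second entry of the $\max$, since
\begin{equation*}
\left(\frac{A + \xi_1}{B - \xi_2} - \frac{A}{B}\right) - \left(\frac{A}{B} - \frac{A - \xi_1}{B - \xi_2}\right) = \frac{2A\xi_2}{B(B - \xi_2)} \geq 0.
\end{equation*}
You should either add this case to the argument, or strengthen the hypothesis to $A \geq \xi_1$, which is what actually holds at the point of use (there $A - \xi_1 \geq \bar{S}(\lambda)/2 > 0$).
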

\begin{proof}
Define 
\begin{equation*}
\hat Z_{\min}:=\frac{A-\xi_1}{B+\xi_2}  \quad \textnormal{and} \quad \hat Z_{\max}:=\frac{A+\xi_1}{B-\xi_2}
\end{equation*}
such that $\hat Z_{\min} \leq \hat Z \leq \hat Z_{\max}$. The inequality $|Z-\hat Z| \leq \max\{ Z-\hat Z_{\min},\hat Z_{\max}-Z \}$ finally proves the assertion.
\end{proof}

According to Claim~\ref{claim:bounding:fraction}, Equation~\eqref{eq:denominator} together with Claim~\ref{claim:nominator} give
\begin{align}\label{eq:concentration:bound:1}
\Prob{\norm{\nabla G(\lambda) - \nabla \tilde{G}(\lambda)}_{\opnorm}\geq \varphi(t)}\leq M \exp \left( \frac{-t^{2}n}{72} \right),
\end{align}
where $\varphi(t):=\max\left\{ \frac{(\norm{T(\lambda)}_{\opnorm}+\bar{S}(\lambda))t}{\bar{S}(\lambda)(\bar{S}(\lambda)-t)}, \frac{(\norm{T(\lambda)}_{\opnorm}+\bar{S}(\lambda))t}{\bar{S}(\lambda)(\bar{S}(\lambda)+t)} \right\}$. We next show that $\bar{S}(\lambda)$ is uniformly away from zero and restrict values of $t$ to an interval as such $\varphi$ well defined.
Recall that $x^{\star}\in R$ is such that $g_{\lambda}(x^{\star})=0$. Therefore
\begin{align} \label{eq:important bound on nominator}
\bar{S}(\lambda) 		&=		\int_{R}2^{\tfrac{1}{\nu}g_{\lambda}(x)}\drv x 
						\geq	\int_{\mathsf{B}_{\varepsilon}(x^{\star})\cap R}2^{\tfrac{1}{\nu}g_{\lambda}(x)}\drv x
						\geq \int_{\mathsf{B}_{\varepsilon}(x^{\star})\cap R}2^{\tfrac{-\sqrt{N}L_{N,M}\varepsilon}{\nu}} \drv x 
						 \geq 2^{\tfrac{-\sqrt{N}L_{N,M}\varepsilon}{\nu}} \varepsilon^{N},
\end{align}
where we have used the Lipschitz continuity of $g_{\lambda}$ given by Lemma~\ref{lem:assLip} with respect to the $\lp{\infty}$-norm  and considered the ball $\mathsf{B}_{\varepsilon}(x^{\star})$, centered at $x^{\star}$ with radius $\varepsilon$ with respect to the $\lp{\infty}$-norm. By choosing $\varepsilon=1$, one gets 
\begin{align*}
\bar{S}(\lambda) &\geq 2^{\frac{-\sqrt{N}L_{N,M}}{\nu}},
\end{align*}
which is strictly away from zero for any finite $N$. Moreover, the inequality \eqref{eq:concentration:bound:1} holds for all $t\in (0,2^{\frac{-\sqrt{N}L_{N,M}}{\nu}})$.
\begin{myclaim}
For $t\in[0,\tfrac{\bar{S}(\lambda)}{2}]$ and $\min\limits_{\lambda\in\Lambda}\tfrac{\bar{S}(\lambda)^{4}}{576}\geq \tfrac{1}{576} 2^{\frac{-4\sqrt{N}L_{N,M}}{\nu}}=:K_{N,M}$
\begin{equation*}
\Prob{\norm{\nabla G(\lambda) - \nabla \tilde{G}(\lambda)}_{\opnorm}\geq t}\leq M \exp \left( -K_{N,M} t^{2}n \right),
\end{equation*}
\end{myclaim}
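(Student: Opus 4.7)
The plan is to invert the bound \eqref{eq:concentration:bound:1}: given a target deviation $t$, I will choose the auxiliary parameter $t'$ so that $\varphi(t')\leq t$, and then substitute into the exponential bound. First I simplify $\varphi$. For $t'\in(0,\bar S(\lambda))$ the first term of the max defining $\varphi(t')$ dominates, giving
\begin{equation*}
\varphi(t') \;=\; \frac{(\norm{T(\lambda)}_{\opnorm}+\bar S(\lambda))\,t'}{\bar S(\lambda)(\bar S(\lambda)-t')}.
\end{equation*}
To control $\norm{T(\lambda)}_{\opnorm}$, observe that $\nabla G_{\nu}(\lambda)= T(\lambda)/\bar S(\lambda)+\rho_{x^{\star}}\transp$ coincides with the integral $\int_{R}\rho_{x}\transp\,q_{\lambda}(x)\drv x$ under the probability density $q_{\lambda}(x):=2^{g_{\lambda}(x)/\nu}/\bar S(\lambda)$, and therefore is a mixture of density operators (transposed), of operator norm at most $1$. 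Combined with $\norm{\rho_{x^{\star}}}_{\opnorm}\leq 1$ this yields $\norm{T(\lambda)}_{\opnorm}\leq 2\bar S(\lambda)$, and hence $\norm{T(\lambda)}_{\opnorm}+\bar S(\lambda)\leq 3\bar S(\lambda)$.

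The next step is to set $t'= c\,\bar S(\lambda)^{2}\,t$ for a constant $c>0$ to be fixed. Under the hypothesis $t\leq \bar S(\lambda)/2$ together with the mild assumption $\bar S(\lambda)\leq 1$ (natural here, since $\bar S(\lambda)$ is typically exponentially small in $N$), this choice gives $t'\leq \bar S(\lambda)/2$, so that $\bar S(\lambda)-t'\geq \bar S(\lambda)/2$. The preceding simplification of $\varphi$ then yields
\begin{equation*}
\varphi(t')\;\leq\; \frac{3\bar S(\lambda)\cdot c\,\bar S(\lambda)^{2}\, t}{\bar S(\lambda)\cdot \bar S(\lambda)/2}\;=\; 6c\,\bar S(\lambda)\, t\;\leq\; t,
\end{equation*}
where the last step uses $c\leq 1/(6\bar S(\lambda))$, a condition satisfied by the specific choice $c=1/(2\sqrt{2})$ whenever $\bar S(\lambda)$ is sufficiently small.

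Substituting $t'=c\,\bar S(\lambda)^{2}\,t$ into \eqref{eq:concentration:bound:1} then produces
\begin{equation*}
\Prob{\norm{\nabla G(\lambda)-\nabla \tilde G(\lambda)}_{\opnorm}\geq t}\;\leq\; M\exp\!\left(-\frac{c^{2}\,\bar S(\lambda)^{4}\, t^{2}\, n}{72}\right),
\end{equation*}
and the choice $c=1/(2\sqrt{2})$ produces the exponent $\bar S(\lambda)^{4}t^{2}n/576$. Using the already-established lower bound $\bar S(\lambda)\geq 2^{-\sqrt N L_{N,M}/\nu}$ we get $\bar S(\lambda)^{4}/576\geq K_{N,M}$ uniformly in $\lambda\in\Lambda$, and the claim follows. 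The main technical nuisance is choosing the constant $c$ so as to enforce $t'\leq \bar S(\lambda)/2$ and $\varphi(t')\leq t$ simultaneously while matching the target factor $1/576$; this requires careful bookkeeping of numerical constants, and the implicit requirement $\bar S(\lambda)\leq 1$ is essentially unavoidable if the $\bar S(\lambda)^{4}$ scaling in $K_{N,M}$ is to be nontrivial.
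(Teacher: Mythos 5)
Your proposal takes essentially the same route as the paper: bound $\norm{T(\lambda)}_{\opnorm}$, linearize $\varphi$, and invert the concentration inequality \eqref{eq:concentration:bound:1}. The paper does this by writing $\varphi(t')\leq \alpha t'$ with $\alpha=\tfrac{2(\norm{T(\lambda)}_{\opnorm}+\bar S(\lambda))}{\bar S(\lambda)^2}$ for $t'\leq \bar S(\lambda)/2$ and then substituting $t=\alpha t'$; you choose $t'=c\,\bar S(\lambda)^2 t$ up front — these are the same calculation. Two remarks. First, you can tighten the estimate on $T(\lambda)$ directly: since each $\rho_x\transp-\rho_{x^{\star}}\transp$ is a difference of density operators (so has operator norm at most one), $\norm{T(\lambda)}_{\opnorm}\leq\int_R 2^{g_{\lambda}(x)/\nu}\drv x=\bar S(\lambda)$, rather than the detour through $\nabla G_\nu$ that gives $\norm{T(\lambda)}_{\opnorm}\leq 2\bar S(\lambda)$; this relaxes the side condition you impose. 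Second, and more substantively, you correctly flag that the argument needs $\bar S(\lambda)$ to be small — in your parametrization, $6c\,\bar S(\lambda)\leq 1$, i.e.\ $\bar S(\lambda)\leq \sqrt{2}/3$ with $c=1/(2\sqrt{2})$ — yet this is not among the hypotheses of the Claim, and $\bar S(\lambda)=\int_R 2^{g_\lambda(x)/\nu}\drv x$ is a priori only bounded by $\Leb{R}=2\pi^{2N-2}$, which need not be below $1$. The paper's own proof has the identical unstated requirement (it invokes $\norm{T(\lambda)}_{\opnorm}\leq 1$ and $\bar S(\lambda)\leq 1$ without justification), so this is a shared gap rather than one specific to your write-up; it would need either an explicit smallness condition on $\nu$ (which forces the Gibbs-type mass $\bar S(\lambda)$ down) or a reformulation of the constant. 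Aside from this inherited issue, the computation is correct, and with $c=1/(2\sqrt{2})$ you do land exactly on the exponent $\bar S(\lambda)^4/576\geq K_{N,M}$ claimed, whereas the paper's final numerics as written give a factor $1152$ rather than $576$ — so your bookkeeping is in fact cleaner on this point.
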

\begin{proof} 
Define $\alpha_{\varepsilon}:=\tfrac{\norm{T(\lambda)}_{\opnorm}+\bar{S}(\lambda)}{\bar{S}(\lambda)(\bar{S}(\lambda)-\varepsilon)}$ and $\beta_{\varepsilon}:=\tfrac{\norm{T(\lambda)}_{\opnorm}+\bar{S}(\lambda)}{\bar{S}(\lambda)(\bar{S}(\lambda)+\varepsilon)}$. It can be seen that $\alpha_{\varepsilon}\geq\beta_{\varepsilon}$ for any $\varepsilon\in[0,\bar{S}(\lambda))$ and as such
\begin{align}
\varphi(t) \leq \alpha_{\varepsilon} t = \tfrac{2(\norm{T(\lambda)}_{\opnorm}+\bar{S}(\lambda))}{\bar{S}(\lambda)^{2}} t=:\alpha t \quad \text{for all } t\in[0,\varepsilon],
\end{align}
where we have chosen $\varepsilon=\tfrac{\bar{S}(\lambda)}{2}$. By \eqref{eq:concentration:bound:1} this gives
\begin{align*}
\Prob{\norm{\nabla G(\lambda) - \nabla \tilde{G}(\lambda)}_{\opnorm}\geq \alpha t}\leq M \exp \left( \frac{-t^{2}n}{72} \right),
\end{align*}
which shows that $K_{N,M}\geq \frac{\bar{S}(\lambda)^{4}}{288(\norm{T(\lambda)}_{\opnorm}+\bar{S}(\lambda))^{2}}$. Using $\norm{T(\lambda)}_{\opnorm}\leq 1$ and $\bar{S}(\lambda)\leq 1$
completes the proof.
\end{proof} 
\qed

%%%%%%%%%%%%%%%%%%%%%%%%%%%%%%%%%%%%%%%%%%%%%%%%%%%%%%%%%%%%%%%%%%%%%%%%%%%%%%%%%%%%%%%%%%%%%%%%%%%%%%%%
\section{Proof of Corollary~\ref{cor:complexity}} \label{app:cor:complexity}
This proof uses the same notation as the proof of Theorem~\ref{thm:FRRAS}.
For a fixed accuracy $\varepsilon >0$, Remark~\ref{rmk:avoidingAssFPRAS} implies that without loss of generality we can assume that $\log(\tfrac{1}{\gamma_M}) = \log( M \log M) =:\p(M)$.
Recall that as explained in the proof of Theorem~\ref{thm:FRRAS} the smoothing parameter $\nu$ is chosen as $\nu\leq \tfrac{\varepsilon}{3 \beta \left( \alpha + \log(3\beta \varepsilon^{-1})\right)}$ for $\beta:=1+\tfrac{\log \e}{\e}$ and $\alpha:=\log(L_{N,M}) + (2N-2)\log(2 \pi) +1$. It can be verified immediately that $\nu^{-1} = \Omega(N + \log(N^{3/2}M \p(M)))$.
Let $\delta = O(\tfrac{1}{M\p(M}2^{c \frac{\sqrt{N}}{\nu}L_{N,M}})$ for some constant $c>0$. According to Lemma~\ref{lem:McDiamand}, to ensure that $\eta^{-1}= \Omega(M^2\p(M)^2(N+\log(M \p(M))))$ we have to choose the number of samples as
\begin{equation}
n = O\left(M^2 \p(M)^2 2^{c' \frac{\sqrt{N}}{\nu}L_{N,M}} \right) = O\left(M^2 \p(M)^2 2^{c'(N^{3/2}+N^{1/2}\log(N^{3/2}M \p(M)))L_{N,M}}  \right), \label{eq:samplesNaive}
\end{equation}
for some constant $c'>0$. Note that the complexity to generate $n$ i.i.d. uniformly distributed samples $\{X_i\}_{i=1}^n$ is $O(n)$.  The total complexity to ensure an $\varepsilon$-close solution is then $k \, M^2 n$ with $k$ being the number of iterations that is given in \eqref{eq:iterationsFRPAS}. Recalling that $\p(M):= \log(M \log M)$ then proves the assertion. \qed
%%%%%%%%%%%%%%%%%%%%%%%%%%%%%%%%%%%%%%%%%%%%%%%%%%%%%%%%%%%%%%%%%%%%%%%%%%%%%%%%%%%%%%%%%%%%%%%%%%%%%%%%
\section*{Acknowledgments}
We would like to thank Omar Fawzi, John Lygeros and Stefan Richter for helpful discussions and pointers to references. We also thank Aram Harrow and Ashley Montanaro for sharing with us their vision and discernment on \cite{harrow13}.
DS and RR acknowledge support by the Swiss National Science Foundation (through the National Centre of Competence in Research `Quantum Science and Technology' and grant No.~200020-135048) and by the European Research Council (grant No.~258932).
TS and PME were supported by the ETH grant (ETH-15 12-2) and the HYCON2 Network of Excellence (FP7-ICT-2009-5).
%%%%%%%%%%%%%%%%%%%%%%%%%%%%%%%%%%%%%%%%%%%%%%%%%%%%
%%%%%%%%%%%%%%%%%%%%%%%%%%%%%%%%%%%%%%%%%%%%%%%%%%%%
%\bibliographystyle{alpha}
\bibliography{./bibtex/header,./bibtex/bibliofile}

%%%%%%%%%%%%%%%%%
%%%%%%%%%%%%%%%%%
%%%%%%%%%%%%%%%%%
%%%%%%%%%%%%%%%%%

\clearpage

%%%%%%%%%%%%%%%%%
%%%%%%%%%%%%%%%%%
%%%%%%%%%%%%%%%%%

%\widetext

\end{document}